\pdfoutput=1
\documentclass[amsmath,amssymb,aps]{revtex4-2}
 
\usepackage{amsmath,amssymb,amsthm}
\usepackage{graphicx}
\usepackage{hyperref}
\usepackage{braket}
\usepackage{accents}
\usepackage{thmtools}
\usepackage{thm-restate}
\usepackage{cleveref}
\usepackage{mathtools}
\usepackage{enumitem}

\DeclarePairedDelimiter{\biggceil}{\bigg\lceil}{\bigg\rceil}

\newcommand{\CSS}{\operatorname{CSS}}

\newcounter{Corollary}
\theoremstyle{plain}
\newtheorem{theorem}{Theorem}
\newtheorem{proposition}{Proposition}
\newtheorem{lemma}{Lemma}
\newtheorem{corollary}[Corollary]{Corollary}
\theoremstyle{definition}
\newtheorem{definition}{Definition}
\newtheorem{construction}{Construction}
\theoremstyle{remark}
\newtheorem*{proof-sketch}{Proof sketch}
\DeclareMathOperator{\Sym}{Sym}
\DeclareMathOperator{\Sp}{Sp}
\DeclareMathOperator{\GL}{GL}
\DeclareMathOperator{\Aut}{Aut}
\DeclareMathOperator{\rowspan}{rowspan}
\DeclareMathOperator{\rank}{rank}

\DeclareMathOperator{\diag}{diag}
\DeclareMathOperator{\oh}{\mathcal{O}}
\DeclareMathOperator{\ftwo}{\mathbb{F}_2}
\DeclareMathOperator{\bA}{\boldsymbol{\mathcal{A}}}
\DeclareMathOperator{\bB}{\boldsymbol{\mathcal{B}}}
\DeclareMathOperator{\bC}{\boldsymbol{\mathcal{C}}}
\DeclareMathOperator{\bD}{\boldsymbol{\mathcal{D}}}

\begin{document}

\title{Quantum Logic Codes: \\Complete Transversal Logical Clifford Instruction Sets \\for High-Rate Stabilizer Quantum Error Correcting Codes}
\author{Adam Holmes}
\email[Corresponding author: ]{adholmes@nvidia.com}
\affiliation{$\mathrm{NVIDIA}$}
\date{\today}

\begin{abstract}
We study the structure and transversal logical capabilities of stabilizer quantum error correcting codes. Among our results, we identify universal lower bounds on circuit depth to generate a full logical Clifford algebra,
and develop novel constructions of logical transversal gates including a new depth-one transversal phase $\mathrm{\overline{S}}$ gate in the rotated surface code and a depth-one intra-block $\mathrm{\overline{CZ}}$ gate in the 2D-toric code that generalizes to all odd distances and all lengths $L\ge3$, respectively. Finally, we construct a high-rate non-LDPC CSS code family with parameters $[[n,\sqrt{n},\Theta({n^{\beta}})]]$
where $\beta \approx 0.2823$ in one demonstrated case, that provably possesses a constant-depth complete 2-local transversal logical Clifford basis instruction set architecture (ISA) composed of all individually targeted $\mathrm{\overline{S}}, \mathrm{\overline{SHS}} = \sqrt{X},$ and $\mathrm{\overline{CZ}}$ gates. This ISA is depth-one for certain subfamilies that we design and generally constant-depth under certain conditions. The code family is built from a small code with parameters $[[n_0, 2, d_0]]$, and is tunable in the standard way: it tiles out to form utility-scale logical qubit counts, and it scales up through concatenation to achieve higher distances and error suppression. We show that this construction preserves the depth-one complete transversal logical Clifford basis ISA when composed with these commuting construction actions, inheriting structure from the core codes so that at scale the complete logical Clifford basis ISA remains depth-one up to depth-two addressable operations between tiled cores. We call these \emph{Quantum Logic Codes}.
\end{abstract}

\maketitle
\tableofcontents

\section{Introduction}
\label{sec:intro}

Building a utility-scale quantum computer requires fault-tolerance techniques of one form or another, and the predominant designs today encode physical qubits into blocks and perform error correction through the use of quantum error correcting codes. A very useful family of codes is the stabilizer code family \cite{gottesman1997stabilizer}, in particular the codespaces formed by the joint eigenspace of an abelian subgroup of the Pauli group. Within this group, the Calderbank-Shor-Steane (CSS) code family is also very useful: those with a stabilizer group decomposable as a direct product of all-$X$ type and all-$Z$ type stabilizer generators. These codes are useful for many reasons, including their natural structure that admits many interesting construction methods to build new codes, clear designs for efficient syndrome extraction circuits, and the ability to perform certain types of transversal logic. We develop a theory of transversal Clifford logic in CSS codes including characterizing the capabilities of physical gates inside CSS codes, and identifying multi-layered constructions of the full logical Clifford group wherein each individual physical gate layer preserves the codespace. The intent with these designs is to enable fault-tolerant logic inside CSS codes in general. A utility-scale system with multi-layered codespace-preserving transversal gate constructions to perform all Clifford logic can avoid the overheads associated with ancillary system gadgetry \cite{webster2025explicit}. This is intended to support full-scale fault tolerant system design along the lines originally posited by Gottesman \cite{gottesman2014constant}, refined and partially realized by more recent architectures including \cite{paetznick2013universal}, then modern \cite{bravyi2024high} and \cite{xu2024constant}.

To this end, we present a theory of transversal Clifford logic in stabilizer codes. Much prior work exists in this area that we build upon, most prominently the recent work establishing multi-layered transversal gate constructions for the entire logical Clifford group in quantum Reed-Muller (QRM) codes \cite{tansuwannont2026construction}, followed by the observation by Chakraborty and Gottesman that transversal gates acting on groups of at least $k$-qubits ($k$-local gates) are required to achieve the full logical Clifford group in codes that encode $k$ logical qubits \cite{chakraborty2026no}. These results are in fact compatible: the former uses strict 1-local transversal gate layers composed with a subset of 2-local transversal gate layers (the so-called \emph{fold-transversal} gates) and permutation automorphisms to build a generating set of the logical Clifford group for QRM codes; the latter can be interpreted as a statement about the power of \emph{single} layers, specifically that there exist logical operators inside the full logical Clifford group of an $[[n,k,d]]$ code that require at least $k$-local physical gates if they are synthesized in a single transversal layer. This is one way to view the result, and a useful one in that it motivates a unified view of transversality focusing on characterizing the logical reach of physical gate layers.

We first use lower bounds on $W$-local transversal layer counts to study the general transversal logical power of stabilizer codes. We show that there are two elementary drivers of depth that are each dominant in distinct code regimes, information transfer from low-weight to high-weight Pauli operators, and Clifford group entropy. We further develop a theory characterizing the logical reach of local transversal gate layers in CSS codes, developing the theory for $W$-local physical gates that preserve the codespace as a single circuit layer, and detail the set of logical operators generated by such layers. Codespace preservation layer-by-layer is not strictly necessary, but it allows for a logical gate construction protocol that is fault tolerant up to some fault density (albeit relaxed from the traditional 1-local transversal gate constructions). A $W$-local gate will propagate a single qubit error into at most a $W$-qubit error, and if the layer preserves the codespace it can be followed by a round of error correction immediately afterward, used to diagnose and analyze these weight-$\leq W$ error patterns which can produce good error suppression in large systems. 

Altogether, this work is a step toward understanding logic in high-rate stabilizer codes. This theory offers constructive routes to synthesis as well by greatly reducing the search space in general, and offering pathways to scalable code families by establishing logical circuit preservation properties under composition and tiling. We make use of all of this in this work, building out the theory of \cite{guyot2025_addressability} for two-locality enabling a fast SAT-solver search to find the logical $S$ gate in the rotated surface code closely inspired as well by \cite{moussa2016transversal}, as well as the pre-existing work \cite{jochymoconnor2014using} inspiring the code construction for \emph{Quantum Logic Codes} that concatenates with self-dual doubly-even codes.

\subsection{Background}
\label{sec:methods}

We approach the study of transversality in stabilizer codes from first principles by studying the space of circuits that can and cannot preserve the codespace, then studying the logical action of such circuits, and by developing a formalism allowing us to generally characterize the logical \emph{reach} of transversal gate layers in general stabilizer codes, defined as the set of logical operations that is realizable with codespace-preserving transversal gates. We further use this formalism to design novel transversal gate constructions for important code families, and to design concrete CSS code families with constant-depth 2-local transversal logical Clifford generator operations.

Our analysis focuses on stabilizer codes that are \emph{indecomposable}. This means that the code cannot be written as a tensor product of smaller codes, or that the code is ``non-splitting'' in the terminology of \cite{guyot2025_addressability}. Decomposable CSS codes are interesting objects themselves, and admit logical transversal blockwise $\mathrm{\overline{CNOT}}$ gates between decomposed blocks. This theory underpins nearly all of the proposed architectures that exist today: a set of physical qubits is partitioned into a set of logical qubits, each encoded in a (perhaps different family of) quantum error correcting code, and logical gadgets are designed to enable logical operations between code blocks in the architecture \cite{cain2026_quasicyclic_lp, webster2026_pinnacle, gidney2025_factor, tripier2026fault} to name a few. Any such system can be written as one large quantum error correction code, with block-diagonal parity check matrices corresponding to each code. \emph{Indecomposable} codes form the atoms of such a theory, and are the focus of our analysis in hopes that they can combine into useful molecules for quantum computing system architectures.

To perform our analysis we study the effect of \emph{transversal} gate layers on stabilizers and logical operators. We follow both \cite{guyot2025_addressability} and \cite{chakraborty2026no} in their definition of transversality. Specifically, we use the following:

\begin{definition}[single-layer transversal W-local gate]
\label{def:wlocal}
A physical qubit circuit $V$ is \emph{single-layer transversal W-local} if it can be written as:

$$ V = \bigotimes_{i} V_i $$ where each $V_i$ is an operator on at most $W$ physical qubits, the $V_i$ operating on disjoint sets of at most $W$ physical qubits that partition $[n]$. 
\end{definition}

We refer to this as a $W$-local transversal gate layer (Figure~\ref{fig:wlocal}). Further, we study the action of \emph{multiple layers} of codespace-preserving transversal gate layers. This allows us to synthesize logical operators that are not contained in the set of logical operations reachable in single layers. We characterize the logical capability that this creates. Throughout this work we refer to circuit \emph{depth}, which we use to mean the number of distinct $W$-local transversal gate layers. This embeds an assumption that a $W$-qubit unitary operator is treated as a depth-one operation. For many real systems this is not the case, however in this work especially for the 2-local cases, this is a good approximation.

\begin{figure*}[h!]
\centering
\includegraphics[width=0.95\linewidth,keepaspectratio]{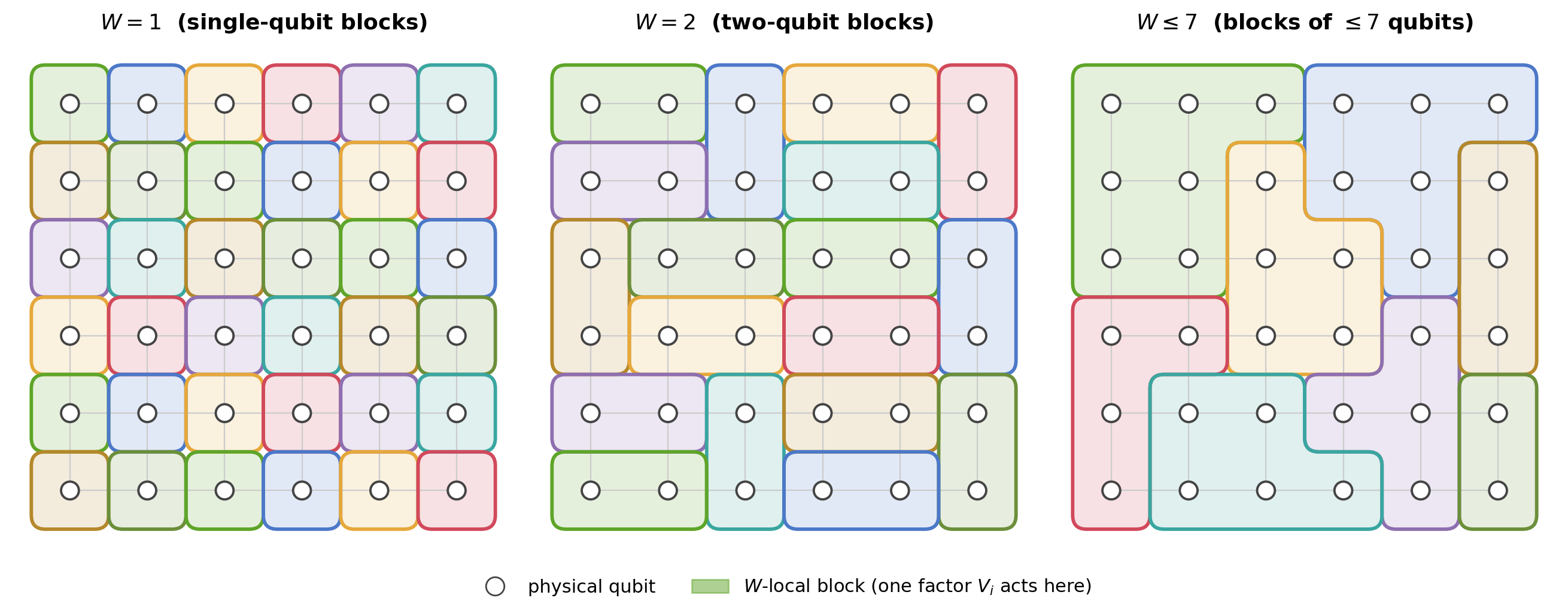}
\caption{Single-layer transversal $W$-local gate layers (Definition~\ref{def:wlocal}) on a shared grid of physical qubits. Such a layer $V=\bigotimes_i V_i$ partitions the qubits into disjoint blocks of at most $W$ qubits and applies one factor $V_i$ within each encircled block. \emph{Left} ($W=1$): every qubit is its own block, so each $V_i$ is a single-qubit gate. \emph{Middle} ($W=2$): blocks of two qubits. \emph{Right} ($W\le 7$): blocks of up to seven qubits. Enlarging $W$ grows the per-block gate space from $\mathrm{Sp}(2,\mathbb{F}_2)$ up to $\mathrm{Sp}(2W,\mathbb{F}_2)$ while preserving the disjoint-block structure that underlies the layer-counting bound.}
\label{fig:wlocal}
\end{figure*}

\subsection{Summary of Results}
\label{sec:results}

First, we show a set of results lower bounding the number of layers required to implement the full logical Clifford group in any CSS stabilizer code. In particular, we show a set of lower bounds on circuit depth $D$ of 2-local gate layers required to implement the full logical Clifford group for a code $Q$ with encoding density $k = f(n)$.

\begin{align}
k = \mathcal{O}(1) &\implies D = \Omega(1) \\
k = \mathcal{O}(\sqrt{n \log n}) &\implies D \geq \Omega\bigg(\log \biggceil{\frac{k}{d \log n}}\bigg)\\
k = n^{\alpha}, 1/2 < \alpha < 1 &\implies D \geq \Omega\bigg(\frac{n^{2\alpha - 1}}{\log n}\bigg)\\
k = \Theta(n) &\implies D \geq \Omega\bigg(\frac{n}{\log n}\bigg)
\end{align}

These all turn out to be special parameter regimes of deeper structure of $W$-local physical gate layers:

\begin{equation}
\label{eq:depth-bound}
D \geq \max \bigg\{\biggceil{\log_W \frac{w^{\star}}{d}}, \biggceil{\log_{N_{n,W}} \frac{|\Sp(2k,\ftwo)|}{|\rho(\Aut_{\mathrm{perm}}(Q))|}} \bigg\} 
\end{equation}

where each ceiling $\lceil\cdot\rceil$ is taken as its positive part $\lceil\cdot\rceil_{+}:=\max\{0,\lceil\cdot\rceil\}$ (Corollary~\ref{corollary:master-bound}), $w^{\star}$ is the \emph{Pauli radius}: the weight of the largest minimum-weight logical Pauli operator over all nontrivial logical classes of the code $Q$, $|\rho(\Aut_{\mathrm{perm}}(Q))|$ is the size of the \emph{logical image} of the permutation automorphism group (the number of distinct logical actions induced by physical permutation automorphisms), and $N_{n,W}$ is the number of codespace-preserving $W$-local gate layers on a code $Q$ over $n$ qubits. This relationship has two terms. The first of these expresses the relationship between the sizes of logical Pauli terms of the code, and captures the requirement that in general a logical Clifford needs to spread information from Paulis to Paulis. The second is an entropy argument calculating the number of bits required to implement all of the logical operators of the Clifford group reduced by the set of code automorphism freedom, and upper bounds the ability of $W$-local codespace preserving gates to supply these operations by counting the number of available layers. We note that an earlier observation was made by \cite{tansuwannont2026construction} where they calculated that $k = \omega(\sqrt{n \log n}) \implies D \geq \Omega\big(\frac{k^2}{n \log n}\big)$. This is the general form of the scaling of depth lower bounds once $k$ grows faster than $\sqrt{n \log n}$, and is a succinct coarse-grained summary of the relationship shown in Equation \ref{eq:depth-bound} in the appropriate parameter regime.

Next, we show a series of results classifying the capabilities of $W$-local transversal gate layers themselves in CSS codes. For an indecomposable quantum CSS code $Q$, and where \emph{addressable} logical gates are precisely those that leave at least one logical qubit invariant under the action:

\begin{itemize}[label=--]

\item To perform a logically addressable gate with a single 1-local physical gate restricts the physical per-qubit gate operators to the set $\{I, S, SHS\}$. This was discovered in \cite{guyot2025_addressability}.
\item The addressable 1-local single-layer logical reach already lies in the $2$-local set $\mathcal{G}_{2L}(K)$ below, with logical blocks $L_X\diag(\vec c)L_X^\top$ and $L_Z\diag(\vec b)L_Z^\top$. They are symmetric but not necessarily diagonal, so even 1-local layers can realize entangling couplings such as $\mathrm{\overline{CZ}}_{ij}$.
\item A logically addressable single 2-local layer cannot enact a global $X\!\leftrightarrow\!Z$ sector exchange, and, under the confinement hypothesis below, its remaining non-trivial-diagonal freedom is a logical basis change (Lemmas~\ref{lem:confine-singular}--\ref{lem:confine-shear}). The \emph{trivial-diagonal} blocks ($A_{V_p}=D_{V_p}=I$) form the phase/$CZ$ set $\mathcal{G}_{\text{diag}}$, built from the gate types $\big\{ I, S, SHS, CZ, CZ^{\sharp}, \mathcal{J} \big\}$ ($|\mathcal{G}_{\text{diag}}|=18$, Prop.~\ref{prop:7b}), where $\mathcal{J}$ is the all-ones entangler and $CZ^{\sharp} = (H \otimes H)CZ(H \otimes H)$. 

\item Up to code automorphisms, and under the confinement hypothesis of Proposition~\ref{prop:8}, the symplectic form of the logical addressable 2-local single-layer reach is contained in the $2$-local set
$$
\mathcal{G}_{2L}(K) := \bigg\{\begin{pmatrix} \mathbf{I} & \mathbf{\mathcal{B}} \\ \mathbf{\mathcal{C}} & \mathbf{I} \end{pmatrix} \quad | \quad \mathbf{\mathcal{B}}, \mathbf{\mathcal{C}} \in \mathrm{Sym}_K, \mathbf{\mathcal{B}} \mathbf{\mathcal{C}} = 0\bigg\}.
$$
\item The logical addressable 2-local single-layer reach is contained in $\GL(K,\mathbb{F}_2)\cdot\mathcal{G}_{2L}(K)$ under the same confinement hypothesis; when the basis changes are realized by permutation automorphisms it is contained in the set composed with the logical automorphism group $\mathrm{Aut}_{\text{perm}}^{\text{log}}(Q)$. The basis changes are contributed by the invertible (automorphism-type) blocks above, and the matching-realizable elements form the subset $\mathcal{S}^{\mathrm{gen}}_{2L+\pi}(Q)$ (Definition~\ref{def:S2Lpi}) of the displayed set:
\begin{widetext}
$$ 
\mathcal{G}_{2L + \pi}(Q) = \bigg \{ \begin{pmatrix} \mathcal{A} & \mathcal{B}\mathcal{A}^{-T} \\ \mathcal{C}\mathcal{A} & \mathcal{A}^{-T} \end{pmatrix} \quad | \quad \mathcal{A} \in \mathrm{Aut}_{\text{perm}}^{\text{log}}(Q), \mathcal{B},\mathcal{C} \in \mathrm{Sym}_K(\mathbb{F}_2), \mathcal{B}\mathcal{C} = 0  \bigg \}.
$$
\end{widetext}

\item For any locality $2 \leq W < n$, the addressable single-layer $W$-local Clifford reach is, under the same hypothesis: $
\mathcal{T}^{WL}_{\mathrm{addr}}(Q) \subseteq \GL(K,\mathbb{F}_2)\cdot\mathcal{G}_{2L}(K),$ with $\mathcal{G}_{2L+\pi}(Q)$ in place of the right-hand side when the basis changes are permutation-automorphism actions.
Enlarging the physical locality to $W$ increases the per-block parameter space, yields more physical circuits that can construct particular addressable logical operations, but does not enable physical circuits to reach new addressable logical operations outside of this set.

\item Algebraically, the multi-layer closure of the $2$-local set is the full logical Clifford group: $\langle \mathcal{G}_{2L}(K) \rangle = \mathrm{Sp}(2K, \mathbb{F}_2)$. Whether a given code physically attains this depends on which set generators it realizes as codespace-preserving $W\le 2$ layers; the Quantum Logic Codes of Section~\ref{sec:qlc} realize a complete such library.

\end{itemize}

The final section of results focuses on novel constructions. In spite of the hardness of the problem in general, interesting code families and gate targets can be found with SAT-solvers. In particular, we present a novel single-layer 2-local transversal $\mathrm{\overline{S}}$ gate in the rotated surface code and a single-layer 2-local transversal intracode $\mathrm{\overline{CZ}}$ in the 2D-toric code. The surface code construction provably preserves code distance for every odd $d$ and is single-fault-tolerant for $d\ge5$, and it generalizes to all odd distances. We also show CSS code instances that each have a concrete \emph{complete transversal logical Clifford Instruction Set Architecture (ISA)}, defined as a set of transversal 2-local circuits implementing a full logical basis of the Clifford group. Here we focus on the basis $\{\mathrm{\overline{S}}_i, \mathrm{\overline{SHS}}_i, \mathrm{\overline{CZ}}_{i,j}\}$ for all $i \neq j \in [k]$. We design a family construction that produces codes of asymptotic parameters $[[n, \sqrt{n}, \Theta(n^{0.2823})]]$
built from self-dual cores $(n_0,d_0)\in\{(4,2),(18,5),(20,6)\}$ by $r$-fold tiling and $\ell$ levels of concatenation with $[[7,1,3]]$ inner codes. The displayed asymptotic form takes the $(4,2)$ core with $r=7^{\ell}$, and the other cores give $k=\Theta(\sqrt{n})$ with the same exponent. This family provably possesses a complete transversal logical Clifford basis ISA at circuit depth independent of $\ell$ and bounded in $r$ (Proposition~\ref{prop:qlc-depth}), and so we call them \emph{Quantum Logic Codes}.

\subsection{Related Work}
\label{sec:related}

Much work has already been done in this area, and the closest related work is listed here first. To start, \cite{chakraborty2026no} proved a limitation on single-layer locality for transversal gates to be capable of producing the full logical Clifford group. This result combines with the earlier result of \cite{tansuwannont2026construction}, wherein they prove that the Quantum Reed Muller code family has a logical Clifford transversal gate instruction set, with layer counts for addressable $\mathrm{\overline{S}}$ gates scaling as  $\oh(\sqrt{n})$. Another earlier work \cite{guyot2025_addressability} studied the requirements of physical transversal gates within CSS codes. This work showed significant restrictions on the gate set of 1-local gates, and showed insightful bounds on encoding rate stemming from indecomposability. We directly build out these results to 2-local and $W$-local transversal layers, for codespace-preserving addressable logical operations. For hypergraph product (HGP) codes, earlier work \cite{patra2025targeted} derived explicit symplectic generators of the logical Clifford group and synthesized these logical operators physically, though they are not fault tolerant. Prior work in this area is \cite{rengaswamy2018_logical_clifford} which established the groundwork for symplectic synthesis as a whole, and greatly inspired this continuation work. Other work \cite{liang2025selfdual} pursued designing self-dual group-algebra two-block code families, specifically bivariate bicycle codes, and wrote out global, fault tolerant logical transversal operations for $\mathrm{\overline{CNOT}}$ between code blocks, global $\mathrm{\overline{H}}$ and $\mathrm{\overline{S}}$, but they do not provide addressability, so these gates cannot generate the full $\Sp(2k,\ftwo)$ group. Code automorphisms are studied in \cite{sayginel2025faulttolerant}, which produces an algorithmic and computational tool that computes and emits the physical circuits and logical action of a code's automorphism group. This result is constructive and quite useful; our sections on automorphisms are informed by their work, and on \cite{grassl2013leveraging}, who performed foundational work in studying code automorphism groups.

There is also much prior work establishing no-go theorems, connecting to locality, and establishing bounds related to synthesis of operations in the Clifford hierarchy. First, the seminal result \cite{eastin2009restrictions} establishes the non-existence of universal transversal gate sets for $W=1$ local transversal gate sets, and \cite{bravyi2013classification} established the connection between circuit locality and the ability of circuits to reach logical gates at high levels of the Clifford hierarchy, which was later strengthened by \cite{pastawski2015faulttolerant} and extended to subsystem codes. Most directly relevant to our depth bounds, \cite{jochymoconnor2018disjointness} introduced the \emph{disjointness} $\Delta(Q)$ of a stabilizer code and used it to bound the level of the Clifford hierarchy reachable by transversal and constant-depth circuits. Our information-transfer bound (Theorem~\ref{theorem:pauli-spread}) is a Clifford-level specialization connected to their support-spreading argument. Other works relevant to the study of logic in quantum error correcting codes include \cite{zeng2011transversality} wherein the connection between doubly-even codes supporting transversal phase gates, and self-dual codes supporting transversal Hadamard gates is established; \cite{anderson2016classification} classifying the diagonal physical gates that yield transversal logical gates within which our bounds occupy a special case; \cite{fu2025nogo} recently extend Bravyi-König to product codes.

On the code construction and logical operation design side, there has been a large body of work designing codes for logic. One of the most notable is the Subsystem Hypergraph Product Simplex codes \cite{malcolm2025_shyps} construction using gauge-fixed subsystem codes. A seminal result \cite{yoder2016universal} reasoned about \emph{pieceable fault-tolerance}, which is a concept that deeply inspires our approach to synthesize logical circuitry out of codespace-preserving logical layers. Recent work \cite{liu2026self} provides a way to explicitly construct self-dual LDPC quantum codes by stacking non-self-dual codes, which adds a greatly enabling symmetry that often enhances logically accessible transversal gates. A sequence of papers starting with \cite{breuckmann2026cups} studies the structure of cochain complex CSS code representations and derives logical operations within that structure. This is a way of revealing structure that our present work deals with both linear algebraically and through formulations amenable to numerical solvers. Further work \cite{breuckmann2024fold} studies the general structure of fold-transversal gates and generalizes them to arbitrary CSS codes with ZX-dualities. A similar construction to our novel transversal phase gate in rotated surface codes is given in \cite{moussa2016transversal}, who derives theirs via a folding of the surface code, while ours operates solely on data qubits of the code block. Concatenation for universality is studied in \cite{jochymoconnor2014using}, and we design a Quantum Logic Code family in a similar fashion. 

\subsection{Framework}

We will work with the set of $n$-qubit Pauli operators, denoted $\mathcal{P}_n$. The set [$n$] denotes the set $\{1, 2, \cdots, n\}$. We follow \cite{chakraborty2026no} and work with the projective Clifford group on $n$ qubits, defined as the Clifford group modulo phases and Paulis: $\mathrm{Proj}\ \mathrm{Cl}_n = \mathrm{Cl}_n / (U(1), \mathcal{P}(n)) $. This group is isomorphic to the classical symplectic group of degree $2n$, denoted $\mathrm{Sp}(2n,\mathbb{F}_2)$.

We will also work simultaneously in the group-theoretic and linear-algebraic representations for our analysis. A stabilizer code on $n$ qubits is in this representation a joint $+1$ eigenspace of an abelian subgroup $\mathcal{S}$ of the $n$-qubit Pauli group $\mathcal{P}_n$, and we refer to $\mathcal{S}$ as the \emph{stabilizer group} of the code. In binary symplectic coordinates $\mathcal{S}$ is represented as an \emph{isotropic} subspace $S\subseteq\mathbb{F}_2^{2n}$. It is isotropic in the symplectic form, which reflects the fact that all stabilizers commute. If $S$ has rank $m$, the code encodes $k = n-m$ logical qubits, and its \emph{codespace} $\mathcal{C}$, the simultaneous $+1$ eigenspace inside $(\mathbb{C}^2)^{\otimes n}$, is a Hilbert subspace of complex dimension $\dim_{\mathbb{C}}\mathcal{C} = 2^{k}$. The code's \emph{distance} $d$ is the minimum Pauli weight of a nontrivial logical operator, or the minimum weight of an element of $N(\mathcal{S})\setminus\mathcal{S}$. Equivalently this is expressed in symplectic coordinates via $S^{\perp}\setminus S$, with $S^{\perp}$ the symplectic dual of $S$.

A CSS code $Q = \mathrm{CSS}(H_X, H_Z)$ on $n$ qubits with parity check matrices $H_X \in \mathbb{F}_2^{m_X \times n}, H_Z \in \mathbb{F}_2^{m_Z \times n}$ satisfying $H_X H_Z^{\top} = 0$ encodes $k = n - \mathrm{rank}(H_X) - \mathrm{rank}(H_Z)$ logical qubits. We refer to the rowspans of $H_X$ and $H_Z$ as $S_X$ and $S_Z$, respectively, so that the stabilizer subspace splits in symplectic coordinates as $S = S_X \oplus S_Z$ (an $X$-type part and a $Z$-type part). The quantum \emph{codespace} $\mathcal{C}\subseteq(\mathbb{C}^2)^{\otimes n}$ is the simultaneous $+1$ eigenspace of $\mathcal{S}$, of complex dimension $2^k$. The \emph{logical representatives} $L_X \in \mathbb{F}_2^{k \times n}$ and $L_Z \in \mathbb{F}_2^{k \times n}$ are chosen as bases of $\mathrm{ker}\,H_Z / \mathrm{rowspan}(H_X)$ and $\mathrm{ker}\,H_X / \mathrm{rowspan}(H_Z)$, respectively, with $L_X L_Z^{\top} = \mathbf{I}_k$, and they label the nontrivial logical $X$- and $Z$-classes. 

A $W$-local single-layer Clifford gate can be written as $V = \bigotimes_{m} V_m$, where $V_m \in \mathrm{Sp}(2W, \mathbb{F}_2)$ act on disjoint physical qubit sets $\{q_{i_1}, q_{i_2}, \cdots, q_{i_W}\}$, partitioning [$n$]. The \emph{partition} $P = \{ (q_{i_1}, q_{i_2}, \cdots, q_{i_W})_i \}$ is a grouping of $n$ qubits into sets of size at-most $W$. We will be focused on $W=2$ primarily throughout our work, where partitions are matchings on data qubits of the code. A useful piece of framework we borrow from \cite{guyot2025_addressability} is that we will assume that physical single-layer 1-local Clifford gates have been compiled down to only contain elements \{$I$, $H$, $S$, $SH$, $HS$, $SHS$\}, followed by Pauli operators (which we mostly omit). 

We will be focused on physical Clifford layers $V$ that are \emph{addressable}.
\begin{definition}[Addressable layer]
\label{def:addressable}
A physical Clifford operator layer $V$ is \emph{addressable} on a code $Q$ if $V \mathcal{S} V^{\dagger} = \mathcal{S}$ (codespace preservation) and the induced logical action $\Phi(V) \in \mathrm{Sp}(2k,\mathbb{F}_2)$ leaves at least one logical qubit fully invariant: there is an index $i$ with $\Phi(V)\,\overline{X}_i = \overline{X}_i$ \emph{and} $\Phi(V)\,\overline{Z}_i = \overline{Z}_i$.
\end{definition}

\begin{definition}[Targeted layer]
\label{def:targeted}
A physical Clifford operator layer is \emph{targeted} if it is a single codespace-preserving physical Clifford layer $V$ realizing a prescribed logical Clifford target (such as a single-qubit $\mathrm{\overline{S}}_i$ or a two-qubit $\mathrm{\overline{CZ}}_{ij}$), with \emph{no} requirement that any logical qubit be left fully invariant. \end{definition}

Throughout our analysis we will study the symplectic form of physical and logical gates. In particular, $W$-local physical gates $V_p$ on partitioned blocks $p$ of qubits with index sets $p \subseteq [n]$ with $|p| = W$ will have the general form:

$$
V_p = \begin{pmatrix} {\mathcal{A}} & {\mathcal{B}} \\ {\mathcal{C}} & {\mathcal{D}} \end{pmatrix},
$$

where each $\mathcal{A}, \mathcal{B}, \mathcal{C}, \mathcal{D}$ are block-matrices of dimension $W \times W$. These combine in direct products over each partition block $p$ of a partition $P$ into full $W$-local gate layers as:

$$
 \mathcal{V}\;=\;\bigoplus_{p=1}^{\ell} g_p
\;=\; \begin{pmatrix} \boldsymbol{\mathcal{A}} = \oplus_p\ { \mathcal{A}_p}, & \boldsymbol{\mathcal{B}} = \oplus_p\ { \mathcal{B}_p} \\ \boldsymbol{\mathcal{C}} = \oplus_p\ { \mathcal{C}_p}, & \boldsymbol{\mathcal{D}} = \oplus_p\ { \mathcal{D}_p} \end{pmatrix}.
$$

The physical blocks $V_p$ and joint layer $\mathcal{V}$ also satisfy the symplectic matrix conditions. Defining the following symplectic forms $\omega$ in the physical block dimension $W$ and $\Omega$ in the full layer dimension $n$ as:

\[
\omega \;=\; \begin{pmatrix} 0 & I_W \\ I_W & 0 \end{pmatrix}\in\mathbb{F}_2^{\,2W\times 2W},
\qquad
\Omega \;=\; \bigoplus_{p=1}^{\ell}\omega \;=\; I_\ell\otimes\omega \;\in\;\mathbb{F}_2^{\,2n\times 2n}.
\]

Here $\omega$ and $\Omega$ are written in per-block (interleaved $X_pZ_p$) coordinates; the global $X\,|\,Z$ ordering used for logical images from Section~\ref{sec:wlocal} onward is obtained by a fixed sorting permutation $P_{\mathrm{sort}}$ that conjugates $\Omega$ to $\left(\begin{smallmatrix}0&I_n\\I_n&0\end{smallmatrix}\right)$ while preserving block-diagonality. We require then that the physical blocks satisfy:
\[
V_p\in\mathbb{F}_2^{\,2W\times 2W},\qquad
V_p^{\!\top}\,\omega\,V_p \;=\; \omega
\quad\Longleftrightarrow\quad
V_p\in\mathrm{Sp}(2W,\mathbb{F}_2),
\qquad p=1,\dots,\ell
\]

and further that the full layer of physical blocks satisfies: 

\[
\mathcal{V}^{\!\top}\,\Omega\,\mathcal{V}
\;=\;\bigoplus_{p=1}^{\ell} V_p^{\!\top}\,\omega\,V_p
\;=\;\bigoplus_{p=1}^{\ell}\omega
\;=\;\Omega
\qquad\Longrightarrow\qquad
\mathcal{V}\in\mathrm{Sp}(2n,\mathbb{F}_2).
\]

We will require that individual $W$-local layers are \emph{codespace-preserving}. To satisfy this property, the physical operators must preserve the stabilizer subspace, which in this section is assumed to split into $X$-type and $Z$-type subspaces for CSS codes. Concretely then, we require that:

$$ \text{for all } a\in S_X,\ b\in S_Z:\qquad \mathcal{V}X^aZ^b\mathcal{V}^{\dagger} = X^{a'}Z^{b'} \implies a' \in S_X,\ b' \in S_Z.$$

\section{Codespace-Preserving Transversal Circuit Depth Bounds}

We start by studying transversal logic information-theoretically by considering the requirements needed to perform the full logical Clifford operation set, and how the set of $W$-local physical gate layers can achieve these requirements. We provide bounds on what can be done logically when any $W$-local physical Clifford operators are allowed. Throughout this section, $\log$ denotes the base-$2$ logarithm, so that $\log|\Sp(2k,\ftwo)|$ counts the bits required to specify a logical Clifford. The following result is a Corollary formed by taking the maximum over two distinct pieces shown in Theorems \ref{theorem:pauli-spread} and \ref{theorem:clifford-entropy}:

\begin{corollary}[$W$-Local Circuit Depth Bound for Stabilizer Codes]
\label{corollary:master-bound}
Let $Q$ denote an $[[n,k,d]]$ stabilizer code with maximum logical Pauli operator weight $w^{\star}$ and permutation automorphism group $\Aut_{\mathrm{perm}}(Q)$, and let $W\ge2$. If the code $Q$ is capable of synthesizing the full logical Clifford group using codespace-preserving $W$-local transversal gate layers, then the circuit depth $D$ required is bounded from below by:

$$ D \geq \max \bigg \{ \biggl\lceil\log_W \frac{w^{\star}}{d}\biggr\rceil_{+}, \biggl\lceil\log_{N_{n,W}} \frac{|\Sp(2k,\ftwo)|}{|\rho(\Aut_{\mathrm{perm}}(Q))|}\biggr\rceil_{+} \bigg \}. $$
where each term is taken as its positive part $\lceil\cdot\rceil_{+}:=\max\{0,\lceil\cdot\rceil\}$.
\end{corollary}

Both of these terms show up for stabilizer codes in different regimes. The latter is a more well-known bound that counts the required specification bits needed to build the full logical Clifford group on $k$ qubits out of $W$-local layers which increases dramatically as a function of $k$, while the former is a count of the size of a light-cone emanating from the $W$-local layers and relates this to the spread of weights of logical operators. For high-rate codes with $k = \omega(\sqrt{n \log n})$, the latter bound dominates as $D \geq \Omega\big(\frac{k^2}{n \log n}\big)$. For low rate codes though, for instance when $k = \oh(1)$, this bound is vacuous, but the depth as a whole may still be bounded by the information transfer bound: $D \geq \log_W \frac{w^{\star}}{d}$. 

We can illustrate these situations with specific examples. Consider first a $k=1$ highly asymmetric code such that $d_x \gg d_z$ or $d_x = \alpha d_z$ for some $\alpha > 1$. Concretely we can consider a rectangular patch of a rotated surface code, as would exist inside the bulk of a lattice surgery operation in a full quantum system. For these codes, the Clifford entropy bound is rendered trivial: $D \geq 1$ taking the appropriate ceiling. The Pauli reach bound is active: with $d_{\mathrm{min}}=\min(d_x,d_z)$ the code distance and $w^{\star}$ the maximum over \emph{all} nontrivial logical classes (including the mixed $\overline Y=\overline X\overline Z$ class) of the class's minimum-weight representative, $w^{\star}\ge\max(d_x,d_z)$, so $D \geq \log_W \frac{w^{\star}}{d_{\mathrm{min}}} \ge \log_W \alpha$ with $\alpha=\max(d_x,d_z)/\min(d_x,d_z)$. For a rectangular rotated surface code of $\alpha = 2^m$ aspect ratio, using $2$-local layers requires a circuit depth then of at least $m$ layers (and more if the $\overline Y$ class is heavier). Conversely, consider a constant rate code with $k = \Theta(n)$ with linear distance $d = \Theta(n)$. Here, since $w^{\star}\ge d$ always while $w^{\star}\le n$ and $d=\Theta(n)$, the ratio is bounded, $w^{\star}/d=\Theta(1)$, so the spread term $\log_W(w^{\star}/d)=\oh(1)$ contributes only a constant; the Clifford entropy bound far exceeds it and dominates as a depth bound at $D \geq \Omega\big(\frac{n}{\log n}\big)$. We can see this as a spectrum, and organize a parameter regime scaling table:

\begin{corollary}[$W$-Local Depth Bound Parameter-Regime Scaling Regimes]
\label{corollary:scaling-table}
\[
D \;\ge\; \max\!\Bigl\{
\underbrace{\bigl\lceil \log_W(w^\star/d)\bigr\rceil_+}_{\text{spread}},\;\;
\underbrace{\Bigl\lceil \log_{N_{n,W}}\!\tfrac{|\mathrm{Sp}(2k,2)|}{|\rho(\mathrm{Aut}_{\mathrm{perm}}(Q))|}\Bigr\rceil_+}_{\text{counting}}
\Bigr\}
\qquad (w^\star=\text{Pauli radius})
\]

\begin{center}
\renewcommand{\arraystretch}{1.8}
\setlength{\tabcolsep}{10pt}
\begin{tabular}{l c c l}
\hline\hline
$k$ &
$\log_W\!\dfrac{w^\star}{d}$ &
$\log_{N_{n,W}}\!\dfrac{|\mathrm{Sp}(2k,2)|}{|\rho(\mathrm{Aut}_{\mathrm{perm}}(Q))|}$ &
$\mathrm{Dominant}\ \mathrm{scaling}$ \\[4pt]
\hline
$\oh(1)$ &
$\oh(\log_W n)$ &
$\Theta\!\bigl(\tfrac{1}{n\log n}\bigr)\!\to\!0$ &
$\oh(\log_W n)$ \quad (spread; counting vacuous) \\

$\Theta\!\bigl(\sqrt{n\log n}\bigr)$ &
$\oh(\log_W n)$ &
$\Theta(1)$ &
$\oh(\log_W n)$ \quad (spread; counting turns on) \\

$n^{\alpha},\;\tfrac12<\alpha<1$ &
$\oh(\log_W n)$ &
$\Theta\!\bigl(\tfrac{n^{2\alpha-1}}{\log n}\bigr)$ &
$\Theta\!\bigl(\tfrac{n^{2\alpha-1}}{\log n}\bigr)$ \quad (counting) \\

$\Theta(n)$ &
$\oh(\log_W n)$ &
$\Theta\!\bigl(\tfrac{n}{\log n}\bigr)$ &
$\Theta\!\bigl(\tfrac{n}{\log n}\bigr)$ \quad (counting) \\
\hline\hline
\end{tabular}
\end{center}

The counting term becomes non-vacuous once $k = \omega(\sqrt{n \log n})$, and overtakes the worst-case spread ceiling $\oh(\log_W n)$ only once $k = \omega(\sqrt{n}\,\log n)$; between these thresholds the dominant term depends on the code's $w^\star/d$ and $|\rho(\Aut_{\mathrm{perm}}(Q))|$ at fixed $W$. Below $\sqrt{n\log n}$, the spread of information into high-weight logical operators is the only potentially non-vacuous term in this regime.
\end{corollary}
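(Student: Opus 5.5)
The plan is to evaluate the two terms of Corollary~\ref{corollary:master-bound} separately in each row of the table, and then take their maximum. For the spread term I will use only the trivial bounds $d\ge 1$ and $w^\star\le n$, so $\log_W(w^\star/d)\le\log_W n$ in every regime. This gives the uniform $\oh(\log_W n)$ column. It is only an upper envelope on how large the spread term can be, not a lower bound, and so it is stated as a ceiling.

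The counting term carries the real work. I need three estimates. First, $\log|\Sp(2k,\ftwo)| = 2k^2+k$, which is $\Theta(k^2)$, from the order formula $2^{k^2}\prod_{i=1}^k(4^i-1)$. Second, $\log N_{n,W}=\Theta(n\log n)$ at fixed $W$. For the upper bound, the number of partitions of $[n]$ into blocks of size at most $W$ is at most $n^{n}$, and each block carries one of at most $|\Sp(2W,\ftwo)|\le 2^{2W^2+W}$ gates. Together these give $\log N_{n,W}\le n\log n + \oh(n)$. The matching lower bound matters only for identifying the $\Theta$ and not for the direction of the inequality; here I would simply treat $N_{n,W}$ as bounded by the count of all $W$-local layers. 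Third, I will take $|\rho(\Aut_{\mathrm{perm}}(Q))|\le n!$, so $\log|\rho|\le n\log n$.

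Dividing, the counting term equals
\[
\frac{2k^2+k-\log|\rho(\Aut_{\mathrm{perm}}(Q))|}{\log N_{n,W}} .
\]
Substituting the regimes gives the table entries. For $k=\oh(1)$ the term is $\Theta(1/(n\log n))$ before subtracting $\log|\rho|$, so the positive-part ceiling makes it vacuous. For $k=\Theta(\sqrt{n\log n})$ it is $\Theta(1)$. For $k=n^\alpha$ it is $\Theta(n^{2\alpha-1}/\log n)$, and for $k=\Theta(n)$ it is $\Theta(n/\log n)$. In the last two rows the subtracted $\log|\rho|\le n\log n$ is $o(k^2)$, because $k^2=n^{2\alpha}\gg n\log n$ when $\alpha>1/2$.

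For the threshold remarks, the counting term exceeds a constant exactly when $k^2=\omega(n\log n)$, that is, when $k=\omega(\sqrt{n\log n})$. It exceeds $\log_W n$ exactly when $k^2/(n\log n)=\omega(\log n)$, that is, when $k=\omega(\sqrt n\log n)$. Between these two thresholds, which term dominates depends on the code-specific quantities $w^\star/d$ and $|\rho|$.

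The main obstacle is the automorphism subtraction near the threshold $k\approx\sqrt{n\log n}$. There, $\log|\rho|$ can be as large as $n\log n$, which is comparable to $k^2$ and could make the counting term vacuous. This is why the table claims only $\Theta(1)$ at that threshold, and presents it as "counting turns on" rather than as a guaranteed constant. I would state the row as holding when $|\rho|$ is polynomially smaller than $|\Sp(2k,\ftwo)|$, and otherwise present it as a scaling heuristic. The other point needing care is that the "dominant scaling" column compares a lower bound against an upper envelope. I will phrase the domination claims accordingly: in rows 3 and 4 the counting lower bound asymptotically exceeds $\log_W n$ and so dominates whatever value the spread term takes, while in rows 1 and 2 only the spread term can exceed a constant.
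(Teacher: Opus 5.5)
Your proposal is correct and follows essentially the route the paper uses. The paper gives no standalone proof of this corollary: the spread column is the ceiling $\log_W(w^\star/d)\le\log_W n$ from $w^\star\le n$, and the counting column is Theorem~\ref{theorem:clifford-entropy} evaluated with $\log|\Sp(2k,\ftwo)|=\Theta(k^2)$ and $\log N_{n,W}=\oh(n\log n)$ from Lemma~\ref{lemma:csp-bound}. Exactly, $\log|\Sp(2k,\ftwo)|=2k^2+k+c_k$ with a bounded $c_k<0$ rather than $2k^2+k$, which changes nothing.

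Your bound $|\rho(\Aut_{\mathrm{perm}}(Q))|\le n!$ discharges, in rows 3 and 4, the hypothesis $\log|\rho(\Aut_{\mathrm{perm}}(Q))|=o(k^2)$ that the paper simply imposes, which is a small improvement. For CSS codes your row-2 caveat also disappears: $\rho$ lands in $\GL(k,\ftwo)$, so $\log\bigl(|\Sp(2k,\ftwo)|/|\rho(\Aut_{\mathrm{perm}}(Q))|\bigr)\ge k^2+k-1$.
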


\section{CSS Transversality Theory}
In this section we study the power of $W$-local transversal gates on CSS codes. Throughout the structural results we write $K := k$ for the number of logical qubits. We remain focused on \emph{indecomposable} CSS codes, and now restrict our focus further to \emph{addressable} gates (Def.~\ref{def:addressable}), which requires that at least one logical \emph{qubit} remain fully invariant under the logical action of the gate.
Full proofs of propositions are deferred to the Supplementary Material section \ref{sec:supp}.

\subsection{1-Local Transversality Theory}
A $1$-local single-layer Clifford is a tensor product $V = \bigotimes_{i \in [n]} V_i$ of per-qubit single-qubit Cliffords $V_i \in \Sp(2,\mathbb{F}_2)$, drawn from the six-element set $\{I, H, S, SH, HS, SHS\}$ (the compiled single-qubit Clifford set introduced in Sec.~\ref{sec:methods}). The Guyot--Jaques addressability criterion restricts which $V$ can be addressable on an indecomposable CSS code.

\begin{proposition}[Admissible 1-local layers, Guyot--Jaques~\cite{guyot2025_addressability}]
\label{prop:1local-admissible}
Let $Q = \mathrm{CSS}(H_X, H_Z)$ be an indecomposable CSS code and let $V = \bigotimes_i V_i$ be a $1$-local single-layer Clifford. Then $V$ is addressable on $Q$ only if every per-qubit factor lies in the set
$$
\mathcal{G}^{(1)}_{\mathrm{diag}} \;:=\; \{I,\, S,\, SHS\} \;\subset\; \Sp(2,\mathbb{F}_2).
$$
\end{proposition}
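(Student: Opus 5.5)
The plan is to work entirely in binary symplectic coordinates and turn codespace preservation into \emph{restriction-closure} properties of the classical codes $S_X,S_Z$ and of the normalizer spaces $\ker H_Z,\ker H_X$. Indecomposability and addressability then rule out every factor outside $\{I,S,SHS\}$. Write each factor as $V_i\in\Sp(2,\mathbb{F}_2)$ acting on the pair $(x_i,z_i)$. Up to Paulis, the six compiled factors act as follows:
\begin{itemize}[label=--]
\item $I$: $X\mapsto X$, $Z\mapsto Z$;
\item $S$: $X\mapsto Y$, $Z\mapsto Z$;
\item $SHS$: $X\mapsto X$, $Z\mapsto Y$;
\item $H$: $X\leftrightarrow Z$;
\item $SH$ and $HS$: the two order-three cycles of $\{X,Y,Z\}$.
\end{itemize}
From this table we read off four qubit sets. $A$ is the set of qubits where the image of $X$ has an $X$-component. $D$ is the set where the image of $X$ has a $Z$-component. $B$ and $C$ are the analogous sets for the image of $Z$: $B$ where it has a $Z$-component, $C$ where it has an $X$-component.

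First I would translate codespace preservation into closure statements. For $a\in S_X$, the image $\mathcal{V}X^a\mathcal{V}^\dagger$ has $X$-part $a\circ 1_A$ and $Z$-part $a\circ 1_D$, where $\circ$ is the componentwise product. Since the image must lie in $S_X\oplus S_Z$, we get $a\circ 1_A\in S_X$ and $a\circ 1_D\in S_Z$. Likewise, $b\in S_Z$ gives $b\circ 1_B\in S_Z$ and $b\circ 1_C\in S_X$. Because $\mathcal{V}$ preserves $S$, it also preserves $S^\perp$, so the same four closure statements hold with $S_X,S_Z$ replaced by $\ker H_Z,\ker H_X$. Closure of a space under $v\mapsto v\circ 1_A$ means the space splits as (its restriction to $A$) $\oplus$ (its restriction to $A^c$). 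I would then show that a simultaneous splitting of $S_X$, $S_Z$, $\ker H_Z$ and $\ker H_X$ along one bipartition $A\sqcup A^c$ writes $Q$ as a tensor product of two CSS codes on $A$ and $A^c$. Indecomposability then forces each of $A,B,C,D$ to be either empty or to contain every qubit in the support of the code.

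Next I would use addressability to pin down these sets. Let $i$ be the logical qubit fixed by $\Phi(V)$. If $A=\emptyset$, the image of $\overline X_i$ has zero $X$-part. It cannot then equal $\overline X_i$ modulo $S$, since $\overline X_i\notin\ker H_Z\cap(\text{span of }Z\text{-type})$ forces a nonzero $X$-class. The same argument applied to $\overline Z_i$ shows $B\neq\emptyset$, so $A=B=[n]$. This already excludes $H$. If $D=[n]$, then $\overline X_i$ maps to an operator whose $Z$-part is the full vector $L_X^{(i)}$. That $Z$-part must be a $Z$-stabilizer for $\Phi(V)\overline X_i=\overline X_i$ to hold, but $L_X^{(i)}\in\ker H_Z$ and we would need it in $S_Z$. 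I would derive a contradiction from this, using $L_X^{(i)}\cdot L_Z^{(i)}=1$ together with the fact that the $X$-part must reproduce $L_Z^{(i)}$'s symplectic pairing. Hence $D=\emptyset$, and symmetrically $C=\emptyset$.

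The main obstacle is the bookkeeping of the preceding two steps for the order-three factors $SH$ and $HS$. Each of these lies in $A\cap B$ and in exactly one of $C,D$. So excluding them requires the ``all-or-nothing'' dichotomy for $C$ and $D$ combined with the fixed-logical-qubit argument, not the $A,B$ dichotomy alone. I would first try to run the splitting argument on the \emph{mixed} set $C\cap D$ and on $C\triangle D$ separately, verifying that the splitting of $S_X$ and $S_Z$ is along the same bipartition; this compatibility is the delicate point. If that fails I would fall back on the Guyot--Jaques argument~\cite{guyot2025_addressability}, which handles exactly this case. Once $C=D=\emptyset$ and $A=B=[n]$, the only factors compatible with all four conditions are $I$, $S$ and $SHS$, which is the claim.
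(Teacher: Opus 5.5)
Your proposal contains a correct proof, but it is wrapped in a miscomputed table and a step that is false. The paper gives no proof of its own here; it cites Guyot--Jaques. Your $A$/$B$ dichotomy matches the support conditions the paper quotes in the proof of Proposition~\ref{prop:7b}, where $A-I$ and $D-I$ carry the Hadamard-type part and the all-exchange case is excluded by addressability.

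\textbf{What works, and what must be added.} The $A$/$B$ dichotomy is the working part, but one step needs to be made explicit. Your list gives $S_X$ and $\ker H_Z$ closed under $v\mapsto v\circ 1_A$, while $S_Z$ is only shown closed under $1_B$. So you do not yet have a common bipartition of $S_X$ and $S_Z$. It comes from duality: $S_Z=(\ker H_Z)^{\perp}$ and coordinate projection is self-adjoint, so $S_Z$ is also closed under $1_A$. Equivalently, apply the preservation conditions to $V^{-1}$, which sends $Z_q\mapsto X_q^{b_q}Z_q^{a_q}$ when $V$ sends $X_q\mapsto X_q^{a_q}Z_q^{c_q}$ and $Z_q\mapsto X_q^{b_q}Z_q^{d_q}$. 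The same holds for $B$.

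Indecomposability and your addressability argument then give $A=B=[n]$, i.e.\ $a_q=d_q=1$ on every qubit. Since $a_qd_q+b_qc_q=1$, this forces $b_qc_q=0$, which leaves exactly $I$, $S$, $SHS$. The proof is finished at this point.

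Your table is wrong for the order-three factors. $SH$ ($X\mapsto Z$, $Z\mapsto Y$) is not in $A$, and $HS$ ($X\mapsto Y$, $Z\mapsto X$) is not in $B$; both lie in $C\cap D$. So $A=B=[n]$ already excludes $H$, $SH$ and $HS$, and the ``main obstacle'' you describe does not exist.

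\textbf{What fails.} Everything about $C$ and $D$ is wrong. The conditions $a\circ 1_D\in S_Z$ for $a\in S_X$ and $b\circ 1_C\in S_X$ for $b\in S_Z$, and their normalizer analogues, map one space into a different one. They are not closure properties of a single space, so they give no splitting, and indecomposability says nothing about $C$ or $D$.

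The dichotomy is in fact false. Take the $[[8,3,2]]$ cube code with $S_X=\langle 1^8\rangle$, $S_Z=\mathrm{RM}(1,3)$, $\overline X_j$ on the face $\{x_j=1\}$, and $\overline Z_j$ on an edge in direction $j$.
\begin{itemize}
\item Apply $S$ on the face $T=\{x_1=0\}$. It sends $X^{\otimes 8}\mapsto X^{\otimes 8}Z_T$ with phase $i^4=1$, where $Z_T:=\prod_{q\in T}Z_q$ is a face stabilizer, so it preserves the stabilizer group exactly.
\item Its reach is $L_X\diag(1_T)L_X^{\top}=E_{23}+E_{32}$, i.e.\ $\overline{\mathrm{CZ}}_{23}$, which leaves logical qubit $1$ invariant.
\item So this is an addressable layer on an indecomposable code with your $D=T$ proper and nonempty.
\item Transversal $S$ on the same code acts as the identity symplectically ($|F_i|=4$ and $|F_i\cap F_j|=2$), so your proposed contradiction from $D=[n]$ also fails.
\end{itemize}

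Your stated endpoint is also self-inconsistent. $C=D=\emptyset$ together with $A=B=[n]$ leaves only the identity factor, so this step, if it worked, would prove a false strengthening that excludes $S$ and $SHS$. Delete the $C$/$D$ part and stop once $A=B=[n]$.
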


\begin{proposition}[1-local reach]
\label{prop:1local-reach}
For an indecomposable CSS code $Q$ encoding $K$ logical qubits, every addressable $1$-local single-layer Clifford is encoded by flag vectors $\vec b,\vec c\in\mathbb{F}_2^n$ ($b_q = 1$ for a physical $SHS_q$, $c_q = 1$ for a physical $S_q$, $\vec b\wedge\vec c = 0$ by Proposition~\ref{prop:1local-admissible}), and its symplectic image takes the form
$$
\Phi(V) \;=\; \begin{pmatrix} \mathbf{I}_K & L_Z\diag(\vec b)L_Z^\top \\ L_X\diag(\vec c)L_X^\top & \mathbf{I}_K \end{pmatrix},
$$
with both off-diagonal blocks symmetric and of product zero, i.e.\ an element of the $2$-local set $\mathcal{G}_{2L}(K)$ of Proposition~\ref{prop:8}. The blocks need not be diagonal: a $1$-local layer can realize entangling couplings such as $\mathrm{\overline{CZ}}_{ij}$. This is the $1$-local specialization of the proof of Proposition~\ref{prop:8} with $B = \diag(\vec b)$, $C = \diag(\vec c)$.
\end{proposition}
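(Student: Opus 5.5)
By Proposition~\ref{prop:1local-admissible}, every factor $V_q$ lies in $\{I,S,SHS\}$, so I will first write the physical layer in symplectic form and then compute its induced action on the logical basis $(L_X,L_Z)$. I use the global $X\,|\,Z$ ordering, with row vectors $(x\,|\,z)$ mapped to $(x\,|\,z)\mathcal{V}$. A physical $S_q$ sends $X_q\mapsto Y_q$ and fixes $Z_q$. A physical $SHS_q$ (which equals $\sqrt{X}$ up to Pauli) fixes $X_q$ and sends $Z_q\mapsto Y_q$. The layer therefore has the block form
\[
\mathcal{V}=\begin{pmatrix} I_n & \diag(\vec c)\\ \diag(\vec b) & I_n\end{pmatrix},
\]
where $\vec b\wedge\vec c=0$ because each qubit carries at most one of the two gates. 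A quick check gives $\diag(\vec b)\diag(\vec c)=0$, so this matrix is symplectic.

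Next I take a logical $X$ representative $x=(L_X)_i$, so the row is $(x\,|\,0)$. Its image is $(x\,|\,x\diag(\vec c))$. The $X$-part is unchanged, and the $Z$-part $z'=x\circ\vec c$ must be decomposed in the logical/stabilizer frame. Because the layer preserves the codespace, $z'$ lies in $\ker H_X$. It therefore equals $\sum_j \lambda_j (L_Z)_j$ modulo $S_Z$, and the coefficients are read off by pairing with $L_X$: $\lambda_j = z'\cdot (L_X)_j = \sum_q (L_X)_{iq}c_q(L_X)_{jq}$. This is the $(i,j)$ entry of $L_X\diag(\vec c)L_X^\top$, using $L_XL_Z^\top=I_K$. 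The symmetric computation for $(0\,|\,(L_Z)_i)$ gives the block $L_Z\diag(\vec b)L_Z^\top$, and the diagonal blocks are $I_K$ because the untouched halves are unchanged. I would state carefully which block sits in which corner under the paper's convention: rows indexed $X$ then $Z$, so the $X\to Z$ coupling $L_X\diag(\vec c)L_X^\top$ appears in the lower-left position of $\Phi(V)$ as displayed, with the paper's column-action convention. The content of the statement does not depend on this choice.

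Symmetry of both blocks is immediate from the form $MDM^\top$. For the product, I use the fact that $\Phi(V)$ is symplectic, being induced by a codespace-preserving Clifford. Writing $\Phi=\left(\begin{smallmatrix}I&B\\C&I\end{smallmatrix}\right)$ with $B,C$ symmetric, the condition $\Phi^\top\Omega\Phi=\Omega$ gives $BC+CB=0$ off the diagonal and $CB=0$ (resp.\ $BC=0$) on the diagonal blocks. This yields $\mathcal{B}\mathcal{C}=0$, so $\Phi(V)\in\mathcal{G}_{2L}(K)$. As an alternative, one can show directly that $L_Z\diag(\vec b)L_Z^\top L_X\diag(\vec c)L_X^\top=0$ using codespace preservation, but the symplectic route is cleaner. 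For the remark that the blocks need not be diagonal, I will point to the off-diagonal entries $\sum_q (L_X)_{iq}(L_X)_{jq}c_q$, which are nonzero whenever the supports of two logical representatives overlap oddly on $\vec c$; this entry gives a $\overline{CZ}_{ij}$ term.

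The main obstacle is the decomposition step. I need $x\circ\vec c$ (and $z\circ\vec b$) to reduce to pure logical $Z$ (resp.\ $X$) plus stabilizers, so that the stabilizer contribution does not alter the logical action. This requires codespace preservation, namely $H_X(x\circ\vec c)^\top=0$ for all logical and stabilizer $x$. I will invoke this as part of the hypothesis that $V$ is addressable, hence codespace-preserving. I will also note that Pauli corrections and phases are discarded in the projective Clifford group, so the sign bookkeeping for $Y$ does not matter.
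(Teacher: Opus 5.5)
Your proposal is correct and is essentially the paper's argument: the paper obtains this as the $1$-local specialization of the proof of Proposition~\ref{prop:8}, computing $\Phi(\mathcal{V})=\diag(L_Z,L_X)\,\mathcal{V}\,\diag(L_X^\top,L_Z^\top)$ with $B=\diag(\vec b)$ and $C=\diag(\vec c)$ (which is your row-by-row readout by pairing with $L_X$ and $L_Z$), and then reading $\boldsymbol{\mathcal{B}}\boldsymbol{\mathcal{C}}=0$ off the logical symplectic identity $A^\top D+C^\top B=I_K$ with $A=D=I_K$. One small slip: in $\Phi^\top\Omega\Phi=\Omega$ the diagonal blocks give only the symmetry of $\boldsymbol{\mathcal{B}}$ and $\boldsymbol{\mathcal{C}}$, while the off-diagonal blocks give $\boldsymbol{\mathcal{C}}^\top\boldsymbol{\mathcal{B}}=0$ (there is no separate condition $BC+CB=0$), so your conclusion $\boldsymbol{\mathcal{B}}\boldsymbol{\mathcal{C}}=0$ still stands.
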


\subsection{2-Local Transversality Theory}
A $2$-local single-layer Clifford $V = \bigotimes_p V_p$ has per-block factors $V_p \in \Sp(4,\mathbb{F}_2)$ acting on disjoint pairs $(q_i, q_j)$ that partition $[n]$. Writing $V \in \Sp(4,\mathbb{F}_2)$ in $2\times 2$ blocks,
$$
V \;=\; \begin{pmatrix} A_V & B_V \\ C_V & D_V \end{pmatrix},
$$
the symplectic condition is equivalent to the three identities
\begin{align}
&A_V^\top C_V \;=\; C_V^\top A_V \nonumber \\ &B_V^\top D_V \;=\; D_V^\top B_V, \nonumber \\
&A_V^\top D_V + C_V^\top B_V \;=\; I_2. \label{eq:sp4-identities}
\end{align}

\begin{proposition}[Admissible per-block set $\mathcal{G}_{\mathrm{diag}}$]
\label{prop:7b}
Let $Q$ be an indecomposable CSS code and let $\mathcal{V} = \bigotimes_p V_p$ be a codespace-preserving, addressable (Def.~\ref{def:addressable}) $2$-local single-layer Clifford. Then the layer is not a global sector exchange (not every $A_{V_p}$ is zero), and a block with \emph{trivial diagonal} $A_{V_p}=D_{V_p}=I_2$ lies in the phase/CZ set
\begin{widetext}
$$
\mathcal{G}_{\mathrm{diag}} \;:=\; \Bigl\{\, V \in \Sp(4,\mathbb{F}_2) \;:\; A_V = D_V = I_2,\; B_V,C_V \in \Sym_2(\mathbb{F}_2),\; B_V C_V = 0 \,\Bigr\},\qquad |\mathcal{G}_{\mathrm{diag}}| = 18.
$$
\end{widetext}
\end{proposition}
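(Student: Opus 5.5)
The statement has two parts: the claim that an addressable layer is not a global sector exchange, and the characterization and count of trivial-diagonal blocks. I would handle the second part first, because it is purely algebraic.

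\emph{Trivial-diagonal blocks.} Substitute $A_V = D_V = I_2$ into the three identities \eqref{eq:sp4-identities}. The first identity becomes $C_V = C_V^\top$ and the second becomes $B_V = B_V^\top$, so both blocks lie in $\Sym_2(\mathbb{F}_2)$. The third identity becomes $I_2 + C_V^\top B_V = I_2$, so $C_V B_V = 0$. Transposing and using symmetry gives $B_V C_V = 0$. Conversely, any such pair $(B_V,C_V)$ satisfies all three identities. Hence the trivial-diagonal blocks are exactly the set $\mathcal{G}_{\mathrm{diag}}$.

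\emph{Counting $\mathcal{G}_{\mathrm{diag}}$.} The space $\Sym_2(\mathbb{F}_2)$ has $2^3 = 8$ elements. Pairs with $B_V = 0$ or $C_V = 0$ contribute $8 + 8 - 1 = 15$. If both are nonzero and $B_V C_V = 0$, neither can be invertible, so both have rank one. A symmetric rank-one matrix over $\mathbb{F}_2$ has the form $vv^\top$ with $v \in \{(1,0),(0,1),(1,1)\}$. Then $vv^\top ww^\top = (v\cdot w)\, v w^\top$, which vanishes iff $v\cdot w = 0$. This happens for exactly three ordered pairs: $((1,0),(0,1))$, $((0,1),(1,0))$ and $((1,1),(1,1))$. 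The total is $15 + 3 = 18$.

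\emph{No global sector exchange.} Suppose, for contradiction, that every $A_{V_p} = 0$, so in global $X|Z$ coordinates $\bA = 0$. Then every pure $X$-type Pauli $X^a$ is mapped to a pure $Z$-type Pauli, namely $Z^{\bC a}$ up to phase. The fourth block $\bD$ is irrelevant here; the third identity only forces $\bC$ and $\bB$ to be invertible.

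Now take the fixed logical qubit $i$ provided by addressability (Def.~\ref{def:addressable}). The representative $\overline{X}_i = X^{\ell}$, with $\ell$ a row of $L_X$, is sent to a $Z$-type operator. By codespace preservation, this image is a logical operator. The condition $\Phi(V)\overline{X}_i = \overline{X}_i$ requires $Z^{\bC\ell} \in \overline{X}_i\,\mathcal{S}$. Because the stabilizer group of a CSS code splits as $S = S_X \oplus S_Z$, comparing $X$-parts would force $\ell \in S_X$. This contradicts $\overline{X}_i$ being a nontrivial logical class.

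The only care needed is this last step: the argument must use the CSS splitting of $S$ and not merely the symplectic image. I expect no real obstacle; indecomposability is not even needed for this part.
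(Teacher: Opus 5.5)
Your proposal is correct and follows essentially the same route as the paper. You substitute $A_V=D_V=I_2$ into the $\Sp(4,\mathbb{F}_2)$ identities, and you rule out $\bA=0$ because the image of a fixed $\overline{X}_i$ representative would have zero $X$-part, which forces $\ell\in S_X$ by the CSS splitting. The only cosmetic difference is the count: you organize it as $8+8-1+3$ via rank-one matrices $vv^\top$ with $v\cdot w=0$, matching the paper's main-text description, whereas the supplementary proof splits by $\rank B_V$ as $8+4+6$.
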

The eighteen \emph{algebraically allowed} trivial-diagonal phase/CZ-type classes form $\mathcal{G}_{\mathrm{diag}}$: eight $C$-only with $B_g=0$, eight $B$-only with $C_g=0$, three mixed rank-one with both $B_g,C_g$ nonzero, the identity counted in each subset. Codespace-preservation selects which of them is realizable as an actual layer on $Q$, and depends on the stabilizer structure of $Q$ itself. A block with invertible $A_{V_p}\ne I_2$ or $D_{V_p}\ne I_2$ is a sector-preserving \emph{basis change} (e.g.\ a $\mathrm{SWAP}$): it contributes only to the logical $\GL(K,\mathbb{F}_2)$ basis-change factor of Proposition~\ref{prop:8}. In particular, if $\mathcal{V}$ induces no logical basis change then every $V_p\in\mathcal{G}_{\mathrm{diag}}$. Singular diagonal blocks are controlled at the logical level by Lemma~\ref{lem:confine-singular}.

\begin{proposition}[2-local Logical Reach]
\label{prop:8}
For any codespace-preserving, addressable (Def.~\ref{def:addressable}) $2$-local single-layer Clifford $V$ on an indecomposable CSS code $Q$ encoding $K$ logical qubits, suppose that no nonzero element of $\ker H_Z$ or of $\ker H_X$ is supported entirely within the union of the supports of the blocks with singular $A_p$ or with $A_p^{\top}D_p\ne I_2$ (the \emph{confinement hypothesis}; Lemmas~\ref{lem:confine-singular} and~\ref{lem:confine-shear}). Then the symplectic image lies in $\mathcal{A}\cdot\mathcal{G}_{2L}(K)$ for a logical basis change $\mathcal{A}\in\GL(K,\mathbb{F}_2)$; when the basis change is realized by a permutation automorphism, $\mathcal{A}\in\rho(\Aut_{\mathrm{perm}}(Q))$ and the image lies in the automorphism-augmented set $\mathcal{G}_{2L+\pi}(Q) = \rho(\Aut_{\mathrm{perm}}(Q))\cdot\mathcal{G}_{2L}(K)$. Here
$$
\mathcal{G}_{2L}(K) \;:=\; \biggl\{ \begin{pmatrix} \mathbf{I}_K & \boldsymbol{\mathcal{B}} \\ \boldsymbol{\mathcal{C}} & \mathbf{I}_K \end{pmatrix} \biggm|\; \boldsymbol{\mathcal{B}}, \boldsymbol{\mathcal{C}} \in \Sym_K,\; \boldsymbol{\mathcal{B}}\boldsymbol{\mathcal{C}} = 0 \biggr\}
$$
is the $2$-local set and $\rho(\Aut_{\mathrm{perm}}(Q))\le\GL(K,\mathbb{F}_2)$ is the logical automorphism action (Proposition~\ref{prop:pi-logical}). On a canonical basis $L_X, L_Z$, the set $\mathcal{G}_{2L}(K)$ is composed of $\boldsymbol{\mathcal{C}}_{ii} = 1 \Leftrightarrow \mathrm{\overline{S}}_i$, $\boldsymbol{\mathcal{C}}_{ij} = 1 \Leftrightarrow \mathrm{\overline{CZ}}_{ij}$, and dually $\boldsymbol{\mathcal{B}}_{ii} \Leftrightarrow \mathrm{\overline{SHS}}_i$, $\boldsymbol{\mathcal{B}}_{ij} \Leftrightarrow \mathrm{\overline{CZ}}^{\sharp}_{ij}$.
\end{proposition}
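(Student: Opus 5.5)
The plan is to compute the logical image $\Phi(V)$ directly in symplectic coordinates and read off its block structure. Fix canonical representatives $L_X, L_Z$ with $L_X L_Z^\top = \mathbf{I}_K$. After the sorting permutation $P_{\mathrm{sort}}$, write the layer as $\mathcal{V}=\left(\begin{smallmatrix}\bA & \bB\\ \bC & \bD\end{smallmatrix}\right)$ with each block a direct sum of the per-block $2\times 2$ pieces. Codespace preservation gives $\mathcal{V}S=S$ and hence $\mathcal{V}S^\perp=S^\perp$. The logical image is then obtained by pushing each $\overline{X}_i=(L_X)_i$ and $\overline{Z}_i=(L_Z)_i$ through $\mathcal{V}$ and pairing the result with the dual basis. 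The $X\to X$ block of $\Phi(V)$ is $L_Z\bA^\top L_X^\top$, the $X\to Z$ block is $L_X\bC^\top L_X^\top$, and dually for $Z$. This is the same bookkeeping as Proposition~\ref{prop:1local-reach}, with $\diag(\vec b),\diag(\vec c)$ replaced by the block-diagonal $\bB,\bC$.

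Next I split the blocks into three kinds. The first is trivial-diagonal blocks, which lie in $\mathcal{G}_{\mathrm{diag}}$ by Proposition~\ref{prop:7b}. The second is invertible-diagonal blocks with $A_p^\top D_p=I_2$. The third is the rest: singular $A_p$ or shear $A_p^\top D_p\neq I_2$. Invoking Lemmas~\ref{lem:confine-singular} and~\ref{lem:confine-shear} under the confinement hypothesis, I argue that the third kind cannot change any logical class. If it did, the difference between a logical operator and its image would be a nonzero element of $\ker H_Z$ or $\ker H_X$ supported inside those blocks. On the remaining support, $\bA$ is invertible with $\bD=\bA^{-\top}$ blockwise. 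Since $\bA$ preserves $\ker H_Z$ and $\mathrm{rowspan}(H_X)$, it induces $\mathcal{A}\in\GL(K,\mathbb{F}_2)$ on $\ker H_Z/\mathrm{rowspan}(H_X)$, and by duality $\mathcal{A}^{-\top}$ on the $Z$ sector. When $\bA$ is a permutation, which is the SWAP case, $\mathcal{A}\in\rho(\Aut_{\mathrm{perm}}(Q))$ by Proposition~\ref{prop:pi-logical}.

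Then I factor $\Phi(V)=\left(\begin{smallmatrix}\mathcal{A}&0\\0&\mathcal{A}^{-\top}\end{smallmatrix}\right)\cdot\left(\begin{smallmatrix}\mathbf{I}&\mathcal{B}'\\ \mathcal{C}'&\mathbf{I}\end{smallmatrix}\right)$ and compute $\mathcal{B}'=L_Z\bB'L_Z^\top$ and $\mathcal{C}'=L_X\bC'L_X^\top$. Here $\bB'$ and $\bC'$ are obtained by stripping the basis-change factor, that is, by multiplying by $\bA^{-1}$ blockwise. Symmetry of $\mathcal{B}',\mathcal{C}'$ follows from $\bB'$ and $\bC'$ being block-symmetric, which is the $Sp(4)$ identities~\eqref{eq:sp4-identities} applied after stripping. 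The whole product is symplectic, and the diagonal blocks of the second factor are $\mathbf{I}$. The symplectic condition for $\left(\begin{smallmatrix}\mathbf{I}&\mathcal{B}'\\ \mathcal{C}'&\mathbf{I}\end{smallmatrix}\right)$ then reads $\mathbf{I}+\mathcal{C}'^\top\mathcal{B}'=\mathbf{I}$, which forces $\mathcal{B}'\mathcal{C}'=0$ (up to transpose, by symmetry). This lands $\Phi(V)$ in $\mathcal{A}\cdot\mathcal{G}_{2L}(K)$. The identification of entries with $\mathrm{\overline{S}}_i$, $\mathrm{\overline{CZ}}_{ij}$, $\mathrm{\overline{SHS}}_i$ and $\mathrm{\overline{CZ}}^\sharp_{ij}$ is then read off on the canonical basis.

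The main obstacle is justifying that the diagonal blocks of $\Phi(V)$ are exactly $\mathcal{A},\mathcal{A}^{-\top}$, with no extra contribution. Contributions from $\bB$ or $\bC$ feed back into the same sector only through products like $\bB\bC$, which could perturb the diagonal. Separately, the singular and shear blocks could act nontrivially. I would handle the first issue by noting that the symplectic identity $A^\top D+C^\top B=I$, combined with the block structure, lets the $C^\top B$ term be absorbed into the definition of $\bA$'s logical action. I would handle the second by appealing to the confinement lemmas and the addressability assumption. Addressability also rules out $\bA=0$ globally, which is the sector-exchange case of Proposition~\ref{prop:7b}.
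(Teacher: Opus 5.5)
Your outline (compute $\Phi(V)$, invoke the two confinement lemmas, split off a basis change) has the same shape as the paper's proof, but the middle step does not hold. You read Lemmas~\ref{lem:confine-singular} and~\ref{lem:confine-shear} as saying that the singular and shear blocks ``cannot change any logical class,'' so that you can work only on blocks with invertible $A_p$ and $A_p^\top D_p=I_2$. The lemmas do not say this, and your justification fails. The difference between a logical representative and its image is supported on every block that acts nontrivially, not only on the bad ones. If you instead compare with the layer whose bad blocks are replaced by the identity, that truncated layer need not be codespace-preserving, so the difference need not lie in $\ker H_Z$ or $\ker H_X$. Bad blocks can genuinely contribute to $A^{\mathrm{log}}$, $B^{\mathrm{log}}$ and $C^{\mathrm{log}}$. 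What the hypothesis actually gives is two logical facts about the \emph{full} layer: $A^{\mathrm{log}}=L_Z\bA L_X^\top$ is invertible, and $(C^{\mathrm{log}})^\top B^{\mathrm{log}}=0$. Likewise, ``stripping by $\bA^{-1}$ blockwise'' is undefined as soon as one singular block is present.

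The step you flag as the main obstacle is where the proof actually lives, and absorbing $C^\top B$ into $\mathcal{A}$ cannot close it. Every element of $\GL(K,\mathbb{F}_2)\cdot\mathcal{G}_{2L}(K)$ has lower-right block equal to the inverse transpose of its upper-left block. Since $A^{\mathrm{log}}$ is fixed by $\Phi(V)$, you must \emph{prove} $D^{\mathrm{log}}=(A^{\mathrm{log}})^{-\top}$, i.e.\ $(C^{\mathrm{log}})^\top B^{\mathrm{log}}=0$. (There is also no $\bB\bC$ feedback within a single layer: the diagonal logical blocks are simply $L_Z\bA L_X^\top$ and $L_X\bD L_Z^\top$.)

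The missing ingredient is the identity $(C^{\mathrm{log}})^\top B^{\mathrm{log}}=L_X(\bA^\top\bD+I_n)L_Z^\top$. It follows from $(A^{\mathrm{log}})^\top D^{\mathrm{log}}=L_X\bA^\top\bD L_Z^\top$ (using $\bA^\top S_Z\subseteq S_Z$ from the inverse layer and $\bD\ker H_X\subseteq\ker H_X$) together with the logical symplectic identity. If this matrix were nonzero, some row $z$ of $L_Z$ would give $y=(\bA^\top\bD+I_n)z\in\ker H_X\setminus S_Z$. Since $\bA^\top\bD+I_n$ vanishes on every block with $A_p^\top D_p=I_2$, this $y$ is confined to the shear support, contradicting the hypothesis.

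Once $A^{\mathrm{log}}$ is invertible and $D^{\mathrm{log}}=(A^{\mathrm{log}})^{-\top}$, the factor $M=\diag(A^{\mathrm{log}},(A^{\mathrm{log}})^{-\top})^{-1}\Phi(V)$ is symplectic with identity diagonal blocks. The three symplectic identities alone then make its off-diagonal blocks symmetric with $\mathcal{C}'^\top\mathcal{B}'=0$, so no physical-level stripping is needed. If you want a physical formula anyway, it is $\mathcal{C}'=L_X\bA^\top\bC L_X^\top$ with each $A_p^\top C_p$ symmetric. It is not a left multiplication by $A_p^{-1}$, since $A_p^{-1}C_p$ need not be symmetric.
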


\subsection{W-Local and Permutation Automorphism Transversality Theory}
\label{sec:wlocal}

A $W$-local single-layer Clifford partitions qubits into subsets of size at most $W$, with per-block factors in $\Sp(2W, \mathbb{F}_2)$. Raising the locality $W$ enlarges the per-block parameter space, but it does not enlarge the addressable logical reach until $W$ reaches $n$.

\begin{proposition}[$W$-invariance of logical reach, $W \ge 2$]
\label{prop:Winvariance}
Let $Q$ be an indecomposable CSS code on $n$ qubits encoding $K$ logical qubits, and let $2 \le W < n$. Every codespace-preserving, addressable (Def.~\ref{def:addressable}) $W$-local single-layer Clifford $V$ has logical image $\Phi(V)$ given by
$$
\Phi(V)\;\in\;\mathcal{A}\cdot\mathcal{G}_{2L}(K),\qquad \mathcal{A}\in\GL(K,\ftwo),
$$
under the confinement hypothesis of Proposition~\ref{prop:8}, applied to the layer's blocks at width $W$; when the basis change $\mathcal{A}$ is realized by a permutation automorphism, $\mathcal{A}\in\rho(\Aut_{\mathrm{perm}}(Q))$ and $\Phi(V)\in\mathcal{G}_{2L+\pi}(Q)=\rho(\Aut_{\mathrm{perm}}(Q))\cdot\mathcal{G}_{2L}(K)$.
\end{proposition}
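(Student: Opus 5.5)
The plan is to reduce the $W$-local case to the same linear-algebraic argument that underlies Proposition~\ref{prop:8}, observing that nothing in that argument used $W=2$ except in the bookkeeping of per-block identities. I would write $\mathcal{V}=\bigoplus_p V_p$ with $V_p=\left(\begin{smallmatrix}A_p&B_p\\C_p&D_p\end{smallmatrix}\right)\in\Sp(2W,\ftwo)$, and assemble the global blocks $\bA,\bB,\bC,\bD$ after the sorting permutation $P_{\mathrm{sort}}$. The symplectic identities of \eqref{eq:sp4-identities} hold verbatim at size $W\times W$: $A_p^\top C_p$ and $B_p^\top D_p$ are symmetric, and $A_p^\top D_p+C_p^\top B_p=I_W$. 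Hence they also hold for the block-diagonal globals.

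The first step is to use codespace preservation to read off the logical image. A logical $X$-representative $x\in\ker H_Z$ maps to $(x\bA^\top, x\bC^\top)$, and the requirement $\mathcal{V}S\mathcal{V}^\top=S$ forces this image to lie in $S^\perp$. Expanding it in the basis $L_X,L_Z$ modulo $S$ gives logical blocks $\mathcal{A}_{\log}=L_X\bA^\top L_Z^\top$, $\mathcal{C}_{\log}=L_X\bC^\top L_X^\top$, and dually $\mathcal{B}_{\log}=L_Z\bB^\top L_Z^\top$, $\mathcal{D}_{\log}=L_Z\bD^\top L_X^\top$, all modulo stabilizer contributions.

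The second step is to show $\mathcal{A}_{\log}$ is invertible and factor it out. Addressability rules out a global sector exchange, exactly as in Proposition~\ref{prop:7b}. Under the confinement hypothesis at width $W$, no logical class is supported inside the union of blocks with singular $A_p$ or with $A_p^\top D_p\neq I_W$. So each logical representative meets only blocks where $A_p$ is invertible and $D_p=A_p^{-\top}$. Invoking Lemmas~\ref{lem:confine-singular}--\ref{lem:confine-shear}, whose arguments are per-block and dimension-agnostic, I would conclude that $\mathcal{A}_{\log}\in\GL(K,\ftwo)$ and $\mathcal{D}_{\log}=\mathcal{A}_{\log}^{-\top}$. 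Left-multiplying by $\mathcal{A}^{-1}:=\mathrm{diag}(\mathcal{A}_{\log}^{-1},\mathcal{A}_{\log}^{\top})$ then gives a logical symplectic matrix with identity diagonal blocks.

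The third step is to show the remaining matrix lies in $\mathcal{G}_{2L}(K)$. Symplecticity of a matrix $\left(\begin{smallmatrix}I&\mathcal{B}'\\\mathcal{C}'&I\end{smallmatrix}\right)$ in $\Sp(2K,\ftwo)$ forces $\mathcal{B}',\mathcal{C}'$ symmetric and $\mathcal{B}'\mathcal{C}'=0$ (equivalently $\mathcal{C}'\mathcal{B}'=0$ from the third identity). This is exactly the defining condition of $\mathcal{G}_{2L}(K)$, independent of $W$. If the basis change is induced by a permutation automorphism, $\mathcal{A}_{\log}\in\rho(\Aut_{\mathrm{perm}}(Q))$ by Proposition~\ref{prop:pi-logical}, giving $\mathcal{G}_{2L+\pi}(Q)$. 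The hypothesis $W<n$ enters only in ensuring the layer is genuinely partitioned, so that indecomposability and the confinement lemmas apply. At $W=n$ the block is arbitrary and no restriction survives.

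The main obstacle is the second step: verifying that the confinement lemmas, stated for $2\times2$ blocks, extend to $W\times W$ blocks where $A_p$ may be singular of intermediate rank or invertible yet with $A_p^\top D_p\neq I_W$. My first attempt would be to decompose each $V_p$ via a symplectic Bruhat-type factorization into an invertible-$A$ part times a transvection part. I would then argue that the non-invertible/shear contributions act on logical classes only through vectors supported in the excluded blocks, which the hypothesis kills.
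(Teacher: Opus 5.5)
Your plan is correct and is essentially the paper's proof: the dimension-$W$ block identities, then Lemmas~\ref{lem:confine-singular} and~\ref{lem:confine-shear} to get $A^{\mathrm{log}}\in\GL(K,\ftwo)$ and $D^{\mathrm{log}}=(A^{\mathrm{log}})^{-\top}$, then factoring off $\diag(A^{\mathrm{log}},(A^{\mathrm{log}})^{-\top})$ and reading membership in $\mathcal{G}_{2L}(K)$ from the symplectic identities. The ``main obstacle'' you flag does not arise, because both lemmas are already stated for arbitrary $W$-local layers and use only $\ker\bA=\bigoplus_p\ker A_p$ and $\bA^{\top}\bD+I_n=\bigoplus_p(A_p^{\top}D_p+I_W)$, so no Bruhat factorization is needed; however, you should drop the sentence claiming each logical representative meets only good blocks, since the hypothesis only forbids nonzero kernel vectors lying \emph{entirely} inside the bad blocks.
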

This is the $2$-local set of Proposition~\ref{prop:8} composed with the logical automorphism group $\rho(\Aut_{\mathrm{perm}}(Q))\le\GL(K,\ftwo)$ of Proposition~\ref{prop:pi-logical}. The addressable $W$-local single-layer reach lies in $\GL(K,\ftwo)\cdot\mathcal{G}_{2L}(K)$ for every $W\ge 2$ (in $\mathcal{G}_{2L+\pi}(Q)$ when the basis changes are permutation-automorphism actions): raising the locality enlarges the per-block parameter space and supplies more physical circuits for a given logical target, but reaches no logical operation outside this set. In case $A^{\mathrm{log}}=D^{\mathrm{log}}=I_K$ one has $\Phi(V)\in\mathcal{G}_{2L}(K)$ exactly. So we see that a wider block can realize more set elements as physical matchings, enlarging the code-specific reach, but never an element outside the set already specified by $2$-local gates. Strict enlargement beyond the set occurs only at $W=n$.

Permutation automorphisms of $Q$ provide another source of logical actions that is disjoint in character from the 2-Local logical reach of Proposition~\ref{prop:8}: a permutation cannot apply $\mathrm{\overline{S}}$, $\mathrm{\overline{CZ}}$, or any entangling action in a fixed basis, but can realize basis-changing $\GL(K,\mathbb{F}_2)$ elements on the logical block that jointly can realize entangling action.

\begin{definition}[Permutation automorphism]
For a CSS code $Q = \mathrm{CSS}(H_X, H_Z)$, a permutation $\pi \in S_n$ is a \emph{permutation automorphism} of $Q$ if
\begin{align*}
\rowspan(H_X)\, P_\pi \;&=\; \rowspan(H_X),\\
\rowspan(H_Z)\, P_\pi \;&=\; \rowspan(H_Z),
\end{align*}
where $P_\pi$ is the $n\times n$ qubit-permutation matrix with entries $(P_\pi)_{q,\pi(q)}=1$, acting on the right by permuting physical-coordinate columns, consistent with the logical action $L_X P_\pi$ of Proposition~\ref{prop:pi-logical}. The group of such permutations is denoted $\Aut^{\mathrm{phys}}_{\mathrm{perm}}(Q)$; we abbreviate $\Aut_{\mathrm{perm}}(Q):=\Aut^{\mathrm{phys}}_{\mathrm{perm}}(Q)$ and write $\rho(\Aut_{\mathrm{perm}}(Q))$ for its logical image (Proposition~\ref{prop:pi-logical}).
\end{definition}

\begin{proposition}[Logical action of $\pi$,~\cite{guyot2025_addressability}]
\label{prop:pi-logical}
For every $\pi \in \Aut^{\mathrm{phys}}_{\mathrm{perm}}(Q)$ there is a unique $V_\pi \in \GL(K, \mathbb{F}_2)$ such that
$$
L_X P_\pi \;\equiv\; V_\pi L_X \pmod{\rowspan(H_X)}.
$$
The map $\pi \mapsto V_\pi$ is a group homomorphism, and its image
$$
\Aut^{\mathrm{log}}_{\mathrm{perm}}(Q) \;:=\; \rho\big(\Aut^{\mathrm{phys}}_{\mathrm{perm}}(Q)\big) \;=\; \{V_\pi : \pi \in \Aut^{\mathrm{phys}}_{\mathrm{perm}}(Q)\} \;\subseteq\; \GL(K, \mathbb{F}_2)
$$
embeds into the symplectic group via
$$
\Phi(\pi) \;=\; \begin{pmatrix} V_\pi & 0 \\ 0 & V_\pi^{-\top} \end{pmatrix} \;\in\; \Sp(2K, \mathbb{F}_2).
$$
\end{proposition}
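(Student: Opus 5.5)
The plan has three parts: show that $\pi$ maps logical $X$ classes to logical $X$ classes, read off $V_\pi$ in the basis $L_X$, and then handle the $Z$ sector through the symplectic pairing.

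\textbf{Step 1: $\pi$ preserves the relevant spaces.} Right multiplication by $P_\pi$ preserves $\rowspan(H_X)$ and $\rowspan(H_Z)$ by hypothesis. Since $P_\pi$ is orthogonal ($P_\pi P_\pi^\top=I$), it also preserves the duals. Concretely, if $vH_Z^\top=0$, then $vP_\pi$ is orthogonal to every $hP_\pi$ with $h\in\rowspan(H_Z)$, and that set is all of $\rowspan(H_Z)$. Hence $\ker H_Z\,P_\pi=\ker H_Z$, and likewise $\ker H_X\,P_\pi=\ker H_X$.

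\textbf{Step 2: existence, uniqueness and invertibility of $V_\pi$.} By Step 1, $P_\pi$ induces a linear automorphism of the quotient $\ker H_Z/\rowspan(H_X)\cong\mathbb{F}_2^K$. The rows of $L_X$ form a basis of this quotient. Expressing each row of $L_XP_\pi$ in that basis gives a unique matrix $V_\pi$ with $L_XP_\pi\equiv V_\pi L_X \pmod{\rowspan(H_X)}$. Uniqueness follows because the rows of $L_X$ are linearly independent modulo $\rowspan(H_X)$. The matrix $V_\pi$ is invertible because $P_{\pi^{-1}}$ induces the inverse map on the quotient.

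\textbf{Step 3: homomorphism property.} We have $L_XP_\sigma P_\pi\equiv V_\sigma L_XP_\pi\equiv V_\sigma V_\pi L_X$. This uses that $\rowspan(H_X)P_\pi=\rowspan(H_X)$, so congruences survive right multiplication. Under the stated convention $(P_\pi)_{q,\pi(q)}=1$, one checks whether $P_\sigma P_\pi=P_{\pi\circ\sigma}$ or $P_{\sigma\circ\pi}$. The map $\pi\mapsto V_\pi$ is then a homomorphism or an anti-homomorphism accordingly. In the anti case I would compose with inversion, or fix the composition convention of $S_n$ to match; this is a bookkeeping point only. The image is then a subgroup of $\GL(K,\mathbb{F}_2)$.

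\textbf{Step 4: the $Z$ sector and the symplectic embedding.} I expect this to be the main step, since it is where the statement asserts the specific form $V_\pi^{-\top}$. The same argument applied to $\ker H_X/\rowspan(H_Z)$ gives a unique $U_\pi$ with $L_ZP_\pi\equiv U_\pi L_Z$.

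To identify $U_\pi$, I use the pairing. Inner products between $\ker H_Z$ and $\ker H_X$ are well defined modulo $\rowspan(H_X)$ on the first factor and modulo $\rowspan(H_Z)$ on the second, because $\ker H_Z\perp\rowspan(H_Z)$ and $\ker H_X\perp\rowspan(H_X)$. Therefore
\[
I_K=L_XL_Z^\top=L_XP_\pi P_\pi^\top L_Z^\top=(L_XP_\pi)(L_ZP_\pi)^\top=V_\pi L_XL_Z^\top U_\pi^\top=V_\pi U_\pi^\top ,
\]
which gives $U_\pi=V_\pi^{-\top}$.

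The physical permutation acts on symplectic coordinates as $\mathrm{diag}(P_\pi,P_\pi)$, which is symplectic. Its logical image is therefore $\Phi(\pi)=\begin{pmatrix}V_\pi&0\\0&V_\pi^{-\top}\end{pmatrix}$, up to the same transpose convention used in Step 3.

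Finally, $\Phi(\pi)$ lies in $\Sp(2K,\mathbb{F}_2)$: with $\Omega_K=\begin{pmatrix}0&I\\I&0\end{pmatrix}$, a direct computation gives $\Phi^\top\Omega_K\Phi=\Omega_K$ because $V_\pi^\top V_\pi^{-\top}=I$. The map $V\mapsto\mathrm{diag}(V,V^{-\top})$ is injective, so this is an embedding.
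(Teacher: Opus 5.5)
The paper gives no proof of this proposition. It is imported from Guyot--Jaques and used as a black box, so there is no in-paper route to compare with. Your argument is a correct, self-contained proof.

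Steps 1--2 are exactly what is needed. Orthogonality of $P_\pi$ transfers invariance from the row spans to $\ker H_Z$ and $\ker H_X$, so $P_\pi$ induces an automorphism of $\ker H_Z/\rowspan(H_X)$, and $V_\pi$ is its matrix in the basis $L_X$. Your Step 4 pairing computation $I_K=(L_XP_\pi)(L_ZP_\pi)^\top=V_\pi U_\pi^\top$ is the permutation case of the duality identity $(A^{\mathrm{log}})^\top D^{\mathrm{log}}+(C^{\mathrm{log}})^\top B^{\mathrm{log}}=I_K$ that drives the proof of Lemma~\ref{lem:confine-shear}. For a permutation $\bB=\bC=0$ and $\bA^\top\bD=I_n$, so the $V_\pi^{-\top}$ block comes out unconditionally, with no confinement hypothesis.

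The bookkeeping issue you flagged in Step 3 is genuine, and you can settle it rather than leave it open. With the paper's convention $(P_\pi)_{q,\pi(q)}=1$, row $q$ of $P_\sigma P_\pi$ has its single $1$ in column $\pi(\sigma(q))$. Hence $P_\sigma P_\pi=P_{\pi\circ\sigma}$ and $V_{\pi\circ\sigma}=V_\sigma V_\pi$. Under ordinary composition, $\pi\mapsto V_\pi$ is therefore an anti-homomorphism. The statement's ``group homomorphism'' holds only with left-to-right composition in $S_n$, or after passing to $\pi\mapsto V_\pi^{\top}$ or $\pi\mapsto V_\pi^{-1}$. The image is a subgroup of $\GL(K,\mathbb{F}_2)$ in every case, which is what the rest of the paper relies on.

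Likewise, in the column-vector readout of the proof of Proposition~\ref{prop:8}, the permutation acts as $\diag(P_\pi^\top,P_\pi^\top)$. Its logical image is then $\diag(V_\pi^{\top},V_\pi^{-1})$, i.e.\ the stated form with $V_\pi$ replaced by $V_\pi^{\top}$; your ``up to the same transpose convention'' covers this.

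One small addition: to call $V\mapsto\diag(V,V^{-\top})$ an embedding, note that it is multiplicative, since $(V_1V_2)^{-\top}=V_1^{-\top}V_2^{-\top}$, and not merely injective.
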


\begin{lemma}[Generation Lemma]
\label{lem:generation}
For every $K \ge 1$, the multi-layer closure of the $2$-local set is the full logical Clifford group:
$$
\bigl\langle \mathcal{G}_{2L}(K) \bigr\rangle \;=\; \Sp(2K, \mathbb{F}_2).
$$
\end{lemma}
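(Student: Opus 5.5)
The inclusion $\langle \mathcal{G}_{2L}(K)\rangle \subseteq \Sp(2K,\ftwo)$ is routine, so the work is in showing that the generated subgroup is everything. First, each element $\left(\begin{smallmatrix} I & \mathcal{B}\\ \mathcal{C} & I\end{smallmatrix}\right)$ with $\mathcal{B},\mathcal{C}$ symmetric and $\mathcal{B}\mathcal{C}=0$ is symplectic. Using the identities \eqref{eq:sp4-identities} at size $K$: $A^\top C = \mathcal{C}$ is symmetric, $B^\top D=\mathcal{B}$ is symmetric, and $A^\top D + C^\top B = I + \mathcal{C}\mathcal{B}$. The last equals $I$ because $\mathcal{C}\mathcal{B} = (\mathcal{B}\mathcal{C})^\top = 0$. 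This gives the inclusion.

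For the reverse inclusion, I will exhibit a known generating set of $\Sp(2K,\ftwo)$ inside the generated subgroup. The subset with $\mathcal{B}=0$ consists of the lower unitriangular matrices $L(\mathcal{C}) = \left(\begin{smallmatrix} I & 0\\ \mathcal{C} & I\end{smallmatrix}\right)$ for arbitrary $\mathcal{C}\in\Sym_K$. These realize all $\overline{S}_i$ and $\overline{CZ}_{ij}$. Symmetrically, the subset with $\mathcal{C}=0$ gives all $U(\mathcal{B}) = \left(\begin{smallmatrix} I & \mathcal{B}\\ 0 & I\end{smallmatrix}\right)$, which realize $\overline{SHS}_i$ and $\overline{CZ}^\sharp_{ij}$. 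A standard fact (e.g.\ via the Bruhat/LU-type decomposition of the symplectic group, or the fact that $\Sp(2K,\ftwo)$ is generated by symplectic transvections) is that the two abelian unipotent subgroups $\{L(\mathcal{C})\}$ and $\{U(\mathcal{B})\}$ together generate $\Sp(2K,\ftwo)$. To keep the argument self-contained, I would instead show directly that the generated group contains $H$-type and $\GL$-type elements.

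First, single-qubit Hadamard on qubit $i$ is obtained as $U(E_{ii})L(E_{ii})U(E_{ii})$; on the $i$-th $2\times 2$ sector this is $\left(\begin{smallmatrix}1&1\\0&1\end{smallmatrix}\right)\left(\begin{smallmatrix}1&0\\1&1\end{smallmatrix}\right)\left(\begin{smallmatrix}1&1\\0&1\end{smallmatrix}\right)=\left(\begin{smallmatrix}0&1\\1&0\end{smallmatrix}\right)$, i.e.\ $\overline{H}_i = \overline{SHS}\,\overline{S}\,\overline{SHS}$ up to Paulis. Second, $\overline{\mathrm{CNOT}}_{ij}$ is $(I\otimes H)\,CZ\,(I\otimes H)$, so $\overline{H}_j\,L(E_{ij}+E_{ji})\,\overline{H}_j$ lies in the group. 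The group therefore contains $\{\overline{H}_i, \overline{S}_i, \overline{\mathrm{CNOT}}_{ij}\}$. Its image in $\Sp(2K,\ftwo)$ is the whole symplectic group, since this is the standard Clifford generating set and $\mathrm{Proj\,Cl}_K\cong \Sp(2K,\ftwo)$ as recalled in the Framework. For $K=1$ only the $H$ and $S$ steps are needed, and $\Sp(2,\ftwo)\cong S_3$ is generated by the two transvections.

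The main obstacle is making the final appeal rigorous without circularity. Rather than cite Clifford generation, I would give a short induction on $K$. Given $g\in\Sp(2K,\ftwo)$, use $\overline{H}$, $\overline{S}$ and $\overline{\mathrm{CNOT}}$ (equivalently the transvections we have produced) to map $g e_1$ back to $e_1$, using transitivity on nonzero vectors. Then fix $f_1$, reducing to a stabilizer of the hyperbolic pair, which is $\Sp(2K-2,\ftwo)$. The transitivity step is the one requiring care: a column-reduction argument that clears the $X$ and $Z$ parts of a vector using CNOTs and local $H$/$S$ moves.
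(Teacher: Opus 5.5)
Your proposal is correct and follows essentially the paper's route: you obtain $\overline{H}_i$ as a threefold product of the rank-one elements $U(E_{ii})$ and $L(E_{ii})$, build $\overline{\mathrm{CNOT}}_{ij}$ by conjugating $\overline{CZ}_{ij}$ with $\overline{H}_j$, and then invoke the standard fact that $\{H,S,\mathrm{CNOT}\}$ generates $\Sp(2K,\mathbb{F}_2)$, which the paper cites as the Aaronson--Gottesman normal form. The remaining differences are minor: you write $\overline{H}$ as $\overline{SHS}\,\overline{S}\,\overline{SHS}$ while the paper uses $\overline{S}\,\overline{SHS}\,\overline{S}$, and both equal the swap in $\Sp(2,\mathbb{F}_2)$; your explicit check of the easy inclusion and your optional transvection/induction sketch make the argument more self-contained than the paper's.
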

This is an \emph{algebraic} identity in $\Sp(2K,\ftwo)$, independent of any particular code; physical attainability on a given $Q$ is governed by Corollary~\ref{cor:phys-generation}. Constructive generators (the $\mathrm{\overline{H}}_i$ gadget has depth $3$): $\mathrm{\overline{H}}_i = \mathrm{\overline{S}}_i \cdot \mathrm{\overline{SHS}}_i \cdot \mathrm{\overline{S}}_i$ and $\mathrm{\overline{CNOT}}_{ij} = (\mathbf{I}_i \otimes \mathrm{\overline{H}}_j)\, \mathrm{\overline{CZ}}_{ij}\, (\mathbf{I}_i \otimes \mathrm{\overline{H}}_j)$.

\begin{proof}
The identity $\mathrm{\overline{H}} = \mathrm{\overline{S}}\cdot\mathrm{\overline{SHS}}\cdot\mathrm{\overline{S}}$ on a single logical qubit is verified directly in $\Sp(2, \mathbb{F}_2) \cong S_3$ (it expresses the Bruhat decomposition of $\mathrm{\overline{H}}$ across the cell outside the set). Each factor $\mathrm{\overline{S}}_i, \mathrm{\overline{SHS}}_i, \mathrm{\overline{CZ}}_{ij}$ is a set element by Proposition~\ref{prop:8}. The Aaronson--Gottesman canonical form for stabilizer/symplectic operations~\cite{aaronson2004improved} expresses any $M \in \Sp(2K, \mathbb{F}_2)$ as a bounded sequence of $\mathrm{\overline{H}}$, phase ($\mathrm{\overline{S}}$), and $\mathrm{\overline{CNOT}}$ layers, so $\{\mathrm{\overline{S}}_i, \mathrm{\overline{SHS}}_i, \mathrm{\overline{CZ}}_{ij}\}_{i\ne j}$ generates $\Sp(2K, \mathbb{F}_2)$; depth-$3$ access to $\mathrm{\overline{H}}$ then upgrades any set-generated word to a $\Sp(2K, \mathbb{F}_2)$ word at cost $O(K^2)$ depth.
\end{proof}

\begin{corollary}[Physical generation from a realized library]
\label{cor:phys-generation}
Lemma~\ref{lem:generation} is algebraic. Its physical counterpart is conditional: if an indecomposable CSS code $Q$ realizes every set generator $\mathrm{\overline{S}}_i$, $\mathrm{\overline{SHS}}_i$, and $\mathrm{\overline{CZ}}_{ij}$ ($i\ne j\in[K]$) by at most $D_0$ codespace-preserving $W\le 2$ layers so that $\mathrm{\overline{H}}_i=\mathrm{\overline{S}}_i\,\mathrm{\overline{SHS}}_i\,\mathrm{\overline{S}}_i$ is available at depth $\le 3D_0$, then these layers generate the full projective logical Clifford group $\Sp(2K,\ftwo)$. If only a subset of the set generators is realized as CSP layers, only the subgroup they generate is physically attained. The Quantum Logic Codes of Section~\ref{sec:qlc} are codes that realize a complete such library.
\end{corollary}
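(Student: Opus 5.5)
The proof reduces Corollary~\ref{cor:phys-generation} to Lemma~\ref{lem:generation} through the logical-action homomorphism. Fix the realized library and write $\Phi$ for the map sending a codespace-preserving physical Clifford to its induced element of $\Sp(2K,\ftwo)$. Codespace-preserving layers compose to codespace-preserving circuits. Since $\Phi$ is a group homomorphism on the stabilizer-normalizing Cliffords, modulo stabilizers and Paulis, the logical action of any word in the realized layers is the corresponding product of their images. So the set of logical operations physically attained by finite sequences of library layers is exactly the subgroup $\langle \Phi(\text{library})\rangle\le \Sp(2K,\ftwo)$.

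The images are fixed by hypothesis: $\overline{S}_i$, $\overline{SHS}_i$ and $\overline{CZ}_{ij}$ for all $i\ne j$. Their products give $\overline{H}_i=\overline{S}_i\,\overline{SHS}_i\,\overline{S}_i$ at depth at most $3D_0$, and then $\overline{\mathrm{CNOT}}_{ij}=(I\otimes\overline{H}_j)\overline{CZ}_{ij}(I\otimes\overline{H}_j)$, using the identities verified in the proof of Lemma~\ref{lem:generation}. The Aaronson--Gottesman decomposition then writes every element of $\Sp(2K,\ftwo)$ as a word in $\overline{H}$, $\overline{S}$ and $\overline{\mathrm{CNOT}}$. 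Hence the subgroup contains all of $\Sp(2K,\ftwo)$, which is the projective logical Clifford group. Inverses need no separate construction, because the group is finite: every element has finite order, so its inverse is a positive power of itself.

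For the converse sentence, the attained set is by definition closed under the products available from the realized layers. By the homomorphism argument above it equals the subgroup generated by their images, so nothing outside that subgroup is reached.

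The only real obstacle is justifying that $\Phi$ is a well-defined homomorphism when it is composed across layers that each preserve $\mathcal{S}$. I would argue this directly. Each layer maps $S^\perp$ to $S^\perp$ and $S$ to $S$, and therefore acts on $S^\perp/S\cong\ftwo^{2K}$ symplectically. The composite action on this quotient is the composite of the individual actions, so $\Phi$ respects composition. Pauli and phase corrections are discarded by the projective quotient.
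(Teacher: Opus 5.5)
Your proposal is correct and follows essentially the same route as the paper. The paper gives no separate proof: it obtains the corollary from the ingredients of the Generation Lemma ($\overline{H}_i=\overline{S}_i\,\overline{SHS}_i\,\overline{S}_i$, $\overline{\mathrm{CNOT}}_{ij}$ from $\overline{H}_j$ and $\overline{CZ}_{ij}$, and the Aaronson--Gottesman normal form), together with the tacit fact that logical actions of codespace-preserving layers compose. Your version simply makes explicit the two points the paper leaves implicit: that the induced action on $S^{\perp}/S$ respects composition, and that finiteness of $\mathrm{Sp}(2K,\mathbb{F}_2)$ supplies inverses as positive powers.
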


\section{Novel Constructions}
\label{sec:constructions}

In this section we detail several new results coming from this theory. First, we show a novel depth-1 2-local transversal $\mathrm{\overline{S}}$ gate in the rotated surface code, only operating on data qubits, that is distance-preserving for every odd distance (a code automorphism) and is fault-tolerant for $d\ge5$, generalizing to all odd distances. The single-layer construction is a \emph{diagonal} circuit composed of products of $S$ and $CZ$ gates that fixes every $Z$ operator while performing the phase gate logically. Next, we construct a depth-1 2-local intra-block $\mathrm{\overline{CZ}}$ gate in the 2D-toric code which generalizes to all code block sizes $L > 2$. 
Then, we show a code family that possesses a full logical Clifford instruction set using only codespace-preserving 2-local transversal gates and permutation automorphisms, with parameters scaling as $[[\,r n_0 7^\ell,\,2r,\,d_0 3^\ell\,]]$ for self-dual cores $(n_0,d_0)\in\{(4,2),(18,5),(20,6)\}$. We call these \emph{Quantum Logic Codes}.

\subsection{Rotated Surface Code Transversal Phase Gate}
\label{sec:rotated-S}

The first construction is a single-layer $2$-local transversal $\mathrm{\overline{S}}$ gate on the rotated surface code, defined for every odd distance $d \ge 3$ (Fig.~\ref{fig:surface_d5_d7}). This can be viewed in contrast to the construction in \cite{chen2026transversal}, wherein the mid-cycle state \cite{mcewen2023relaxing} of the surface code is used to land the logical $\mathrm{\overline{S}}$ gate, leveraging the joint state formed with ancillary qubits during syndrome extraction. This construction only uses data qubits. Qubits of the $d\times d$ rotated surface code are indexed by $q = rd + c$ for $(r, c) \in \{0,1,\dots,d-1\}^2$ with row $r$ and column $c$ (rows and columns $0$-indexed).
\begin{figure*}[h!]
\centering
\includegraphics[width=0.98\linewidth,keepaspectratio]{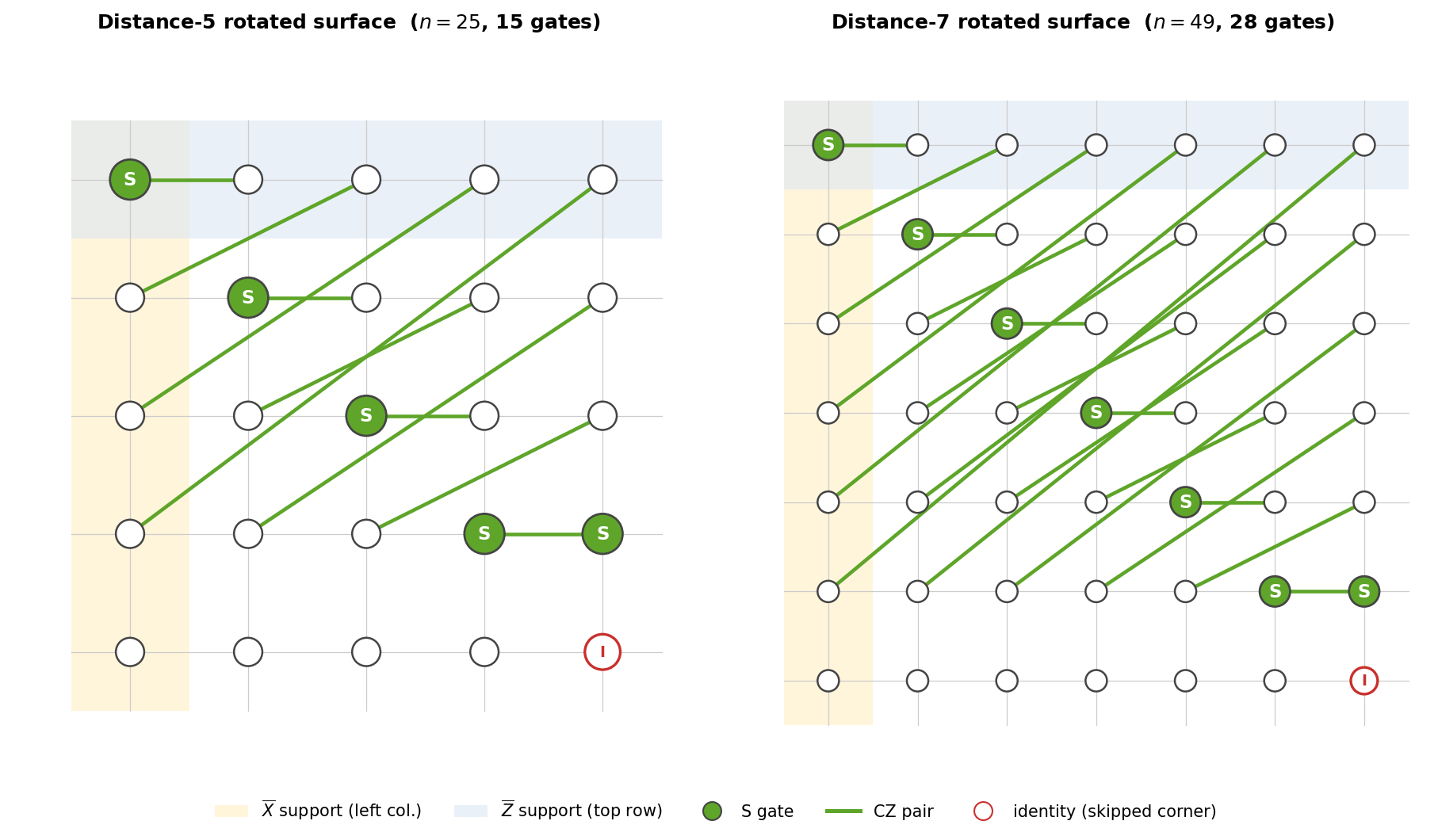}
\caption{Construction~\ref{con:rotated-S} at $d=5$ (left) and $d=7$ (right): the same closed-form $d$ diagonal phase gates and $\binom{d}{2}$ rotated-diagonal $CZ$ pairs scale to all odd distances.}
\label{fig:surface_d5_d7}
\end{figure*}

\begin{construction}[Rotated-pair $\mathrm{\overline{S}}$ on the $d\times d$ rotated surface code]
\label{con:rotated-S}
Take the standard logical representatives $\overline X = \{(r,0) : 0 \le r < d\}$ (the left column) and $\overline Z = \{(0,c) : 0 \le c < d\}$ (the top row), which meet only at the corner $(0,0)$. For every odd $d \ge 3$, apply
\begin{enumerate}[label=(\roman*)]
\item $d$ phase gates $S_q$ on the diagonal qubits $(r,r)$ for $0 \le r \le d-2$ together with $(d-2,\,d-1)$;
\item $\binom{d}{2}$ controlled-$Z$ gates on the rotated pairs $\{(c,r),\,(r,c+1)\}$ for $0 \le r \le c \le d-2$, i.e.\ each on/below-diagonal qubit $(c,r)$ is paired with its reflection across the main diagonal shifted one column right.
\end{enumerate}
Total gate count $d(d+1)/2$ ($d$ phase gates plus $\binom{d}{2}$ $CZ$ pairs). The $\binom{d}{2}$ rotated pairs are pairwise disjoint, so the off-diagonal support is a matching and the layer is $2$-local.
\end{construction}

\begin{theorem}[Closed-form rotated-surface $\mathrm{\overline{S}}$]
\label{thm:rotated-S}
For every odd $d \ge 3$, Construction~\ref{con:rotated-S} realizes the logical operator $\mathrm{\overline{S}}$ exactly, as a single targeted $2$-local Clifford layer on the $d \times d$ rotated surface code. The layer is CSS-preserving for every odd $d$ and it preserves the code distance. As a depth-1 gadget its circuit-fault distance is at most $d-1$ and at least $\lceil d/2\rceil$ (exactly $d-1$ for $d=3,5$), making it single-fault-tolerant for $d\ge5$.
\end{theorem}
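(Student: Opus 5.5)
The plan is to work entirely in the binary symplectic picture, where the layer of Construction~\ref{con:rotated-S} is diagonal. Let $\sigma\subseteq[n]$ be the set of $d$ phase-gate qubits and $M$ the matching of $\binom{d}{2}$ rotated pairs. Define the symmetric matrix $\Gamma\in\Sym_n(\ftwo)$ by $\Gamma_{qq}=1$ iff $q\in\sigma$ and $\Gamma_{qq'}=1$ iff $\{q,q'\}\in M$. The layer then has symplectic form $\left(\begin{smallmatrix} I & 0\\ \Gamma & I\end{smallmatrix}\right)$, and its action on Paulis is
\[
Z^b\mapsto Z^b, \qquad X^a\mapsto i^{\,|a\cap\sigma|}(-1)^{\#\{e\in M:\,e\subseteq a\}}\,X^aZ^{\Gamma a}.
\]
So $Z$-stabilizers and $\overline Z$ (the top row) are fixed automatically. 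Because the pairs of $M$ are disjoint, the layer is $2$-local; this is just bookkeeping that the map $(c,r)\mapsto(r,c+1)$ is injective with image disjoint from its domain.

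The substantive content reduces to three checks, which I would carry out in this order. First, CSS-preservation requires two things for every $X$-plaquette $a$ (weight-$4$ bulk or weight-$2$ boundary). The vector $\Gamma a$ must lie in $\rowspan(H_Z)$, and the phase must be $+1$, i.e.\ $|a\cap\sigma|+2\,\#\{e\in M: e\subseteq a\}\equiv 0 \pmod 4$. Second, for the logical action, take $x$ the left column. We need $\Gamma x\equiv \overline Z$ modulo $\rowspan(H_Z)$, and the resulting phase must make $\overline X\mapsto \overline Y=i\overline X\,\overline Z$ rather than $-\overline Y$. Together these give $\Phi=\left(\begin{smallmatrix}1&0\\1&1\end{smallmatrix}\right)$ with the correct sign, which is $\overline{S}$. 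For the first check I would compute $\Gamma a$ plaquette by plaquette. Split plaquettes into three classes: those away from the diagonal, those straddling the main diagonal or the shifted diagonal $c=r+1$, and boundary plaquettes. For the first class, the reflection-plus-shift maps a plaquette's support onto (part of) another face. I would then show that the images assemble into a union of $Z$-plaquettes, expecting a telescoping cancellation along each anti-diagonal. For $\Gamma x$, the column $\{(r,0)\}$ pairs with $\{(0,1),\dots\}$ together with the corner phase at $(0,0)$, which should directly produce the top row up to $Z$-stabilizers.

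Since the layer maps $\mathcal S$ to itself and $\overline Z\mapsto\overline Z$, it is an automorphism of the code. Distance preservation then follows immediately: the codespace and its logical classes are unchanged, so the minimum logical weight is unchanged.

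For the circuit-fault distance I would argue as follows. A single fault in a depth-one $2$-local layer produces a data error supported on at most $2$ qubits, namely a gate's two qubits, with an $X$ spread to $Z$ on the partner. Hence $t$ faults produce an error of weight at most $2t$. Such an error is a nontrivial undetectable logical only if $2t\ge d$, which gives the lower bound $\lceil d/2\rceil$. For the upper bound $d-1$, I would exhibit an explicit fault set: $d-1$ faults on $CZ$ gates along the diagonal, each contributing a weight-$2$ piece, so that the pieces combine with one absent qubit into a minimum-weight logical representative. Exactness at $d=3,5$ I would settle by exhaustive enumeration of fault sets. Single-fault tolerance for $d\ge5$ then follows since $\lceil d/2\rceil\ge 3$.

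I expect the main obstacle to be the uniform-in-$d$ verification of the first check. The bulk case should be a clean reflection identity, but the plaquettes touching the diagonal, the shifted diagonal, the extra phase qubit $(d-2,d-1)$, and the four boundaries each need separate handling, and the mod-$4$ phase condition is where odd $d$ must enter. My first attempt there would be induction on odd $d$, passing from $d$ to $d+2$ by adding two rows and columns and checking only the new plaquettes. I would use the explicit $d=3,5,7$ cases as base cases and sanity checks.
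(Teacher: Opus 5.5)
Your route is the paper's. You treat the layer as a diagonal symplectic $\left(\begin{smallmatrix} I&0\\ \Gamma&I\end{smallmatrix}\right)$, check CSS-preservation and exact phases stabilizer by stabilizer, and read the logical action off the left column. Distance preservation comes from the automorphism property, and the circuit-fault distance is bounded below by disjointness of the pairs ($f\ge\mathrm{wt}(E)/2$) and above by an explicit fault set, with $d=3,5$ settled by enumeration.

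Your logical-action step is correct and a bit sharper than the paper's reach-map count. $\Gamma x$ is \emph{exactly} the top row: each $(r,0)$ with $r\le d-2$ is paired with $(0,r+1)$, $(d-1,0)$ is unpaired, and the corner is the only phase qubit on $x$. The phase $i^{1}(-1)^{0}=i$ then gives $\overline X\mapsto i\overline X\,\overline Z=\overline Y$, which fixes $\overline S$ rather than $\overline S^{\dagger}$.

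The upper-bound witness fails as written. The $\binom{d}{2}$ pairs are disjoint, so $d-1$ faults each leaving a weight-$2$ piece give an error of weight exactly $2d-2\neq d$, not a minimum-weight representative. What you need is one fault covering a full pair inside a weight-$d$ logical, plus $d-2$ single-qubit faults. The paper uses the top-row $\overline Z$, which contains exactly one pair, $\{(0,0),(0,1)\}$; every other $(0,c)$ is paired into the left column. Your diagonal idea can also be repaired. $Z$ on $(0,0),(1,1),\dots,(d-2,d-2),(d-2,d-1)$ (exactly the phase set $\sigma$) is a weight-$d$ logical $Z$ covered by the $d-1$ diagonal $CZ$ gates. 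However, only the last of those gates contributes a weight-$2$ piece.

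The CSS-preservation step, which you rightly call the substance, is left as a plan, and the induction $d\to d+2$ by ``checking only the new plaquettes'' would not go through. The gadget at $d$ is not the restriction of the gadget at $d+2$:
\begin{itemize}
\item the extra phase qubit moves from $(d-2,d-1)$ to $(d,d+1)$;
\item $(d-1,d-1)$ acquires a phase gate;
\item the formerly unmatched bottom row becomes matched.
\end{itemize}
So checks on the old boundary change as well, not only new ones.

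No induction or telescoping is needed. Take $X$-checks at dual vertices $(i,j)$ with $i+j$ even. The partner map is $(r,c)\mapsto(c-1,r)$ above the diagonal and $(r,c)\mapsto(c,r+1)$ on or below it (rows $\le d-2$; the bottom row is unmatched). Each check then goes to a single $Z$-plaquette or to zero:
\begin{itemize}
\item $\Gamma a_{(i,j)}=Z_{(j-1,i)}$ for $i<j$;
\item $\Gamma a_{(i,j)}=Z_{(j,i+1)}$ for $j\le i\le d-1$ other than the corner;
\item $\Gamma a_{(i,j)}=0$ for the corner $(d-1,d-1)$ and the bottom row $i=d$.
\end{itemize}
On a diagonal check $(t,t)$, the two phase qubits together with the internal pair $\{(t-1,t-1),(t-1,t)\}$ complete the face. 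This also settles the mod-$4$ phase condition check by check: diagonal checks carry two phase qubits and one internal pair, and all others carry neither. So that condition is not where the parity of $d$ enters.
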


\subsection{2D-Toric Code Transversal Intra-block Controlled-Z Gate}
\label{sec:toric-CZ}

The second construction is a single-layer $2$-local transversal intra-block $\mathrm{\overline{CZ}}_{0,1}$ on the $L\times L$ 2D toric code, defined for every $L \ge 3$ (Fig.~\ref{fig:toric_L3_L5}). Qubits sit on edges of the torus, indexed by $(r, c, t) \in [L]\times [L]\times \{0,1\}$ with $t = 0$ horizontal and $t = 1$ vertical; total $n = 2L^2$. Standard logical operators are $\overline{X}_0 = \prod_r h(r, 0)$, $\overline{X}_1 = \prod_c v(0, c)$, and their $Z$-duals.
\begin{figure*}[h!]
\centering
\includegraphics[width=0.98\linewidth,keepaspectratio]{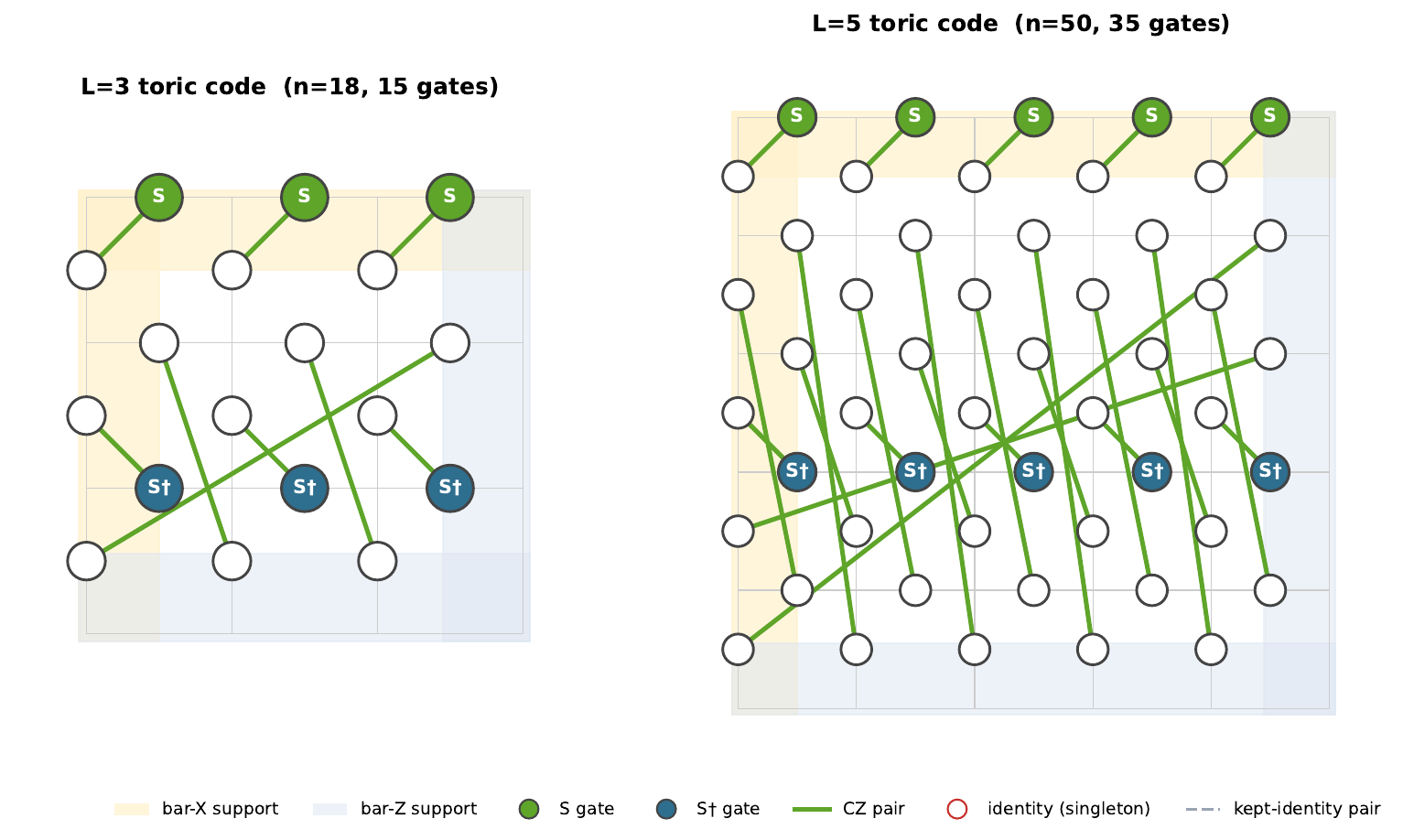}
\caption{Construction~\ref{con:toric-CZ} at $L = 3$ (left) and $L = 5$ (right). Both layers have row-$0$ $CZ$ pairs $h(0,c)\leftrightarrow v(0,c)$ combined with row-$0$ $S$ gates on the $h(0,c)$ qubits, two cross-pair families per tier $r = 1, \ldots, \lfloor(L-1)/2\rfloor$ joining rows $a = r$ and $b = L-r$, and inner-row $S^{\dagger}$ gates on row $s = \lceil L/2\rceil$. Total weights $w(3) = 15$ and $w(5) = 35$; the long diagonals visible in the $L = 5$ panel correspond to cross pairs wrapping across the torus boundary.}
\label{fig:toric_L3_L5}
\end{figure*}

\begin{construction}[Closed-form intra-block $\mathrm{\overline{CZ}}_{0,1}$ on the $L \times L$ toric code, $L\ge3$]
\label{con:toric-CZ}
\begin{enumerate}[label=(\roman*)]
\item Row $0$: for every $c$, pair $h(0, c) \leftrightarrow v(0, c)$ with a $CZ$, and apply $S_q$ on every $h(0, c)$.
\item For each $r = 1, \ldots, \lfloor (L-1)/2 \rfloor$, pair rows $a = r$ and $b = L - r$ by two cross-pair families: a skewed family $h(a, c) \leftrightarrow v(b, (c+1) \bmod L)$ and an aligned family $v(a, c) \leftrightarrow h(b, c)$, for $c \in [L]$.
\item For even $L$, on the middle row apply a self-pair $h(L/2, c) \leftrightarrow v(L/2, (c+1) \bmod L)$.
\item On the inner row $s = \lceil L/2 \rceil$, apply $S^{\dagger}_q$ on every $h(s, c)$ ($S^{\dagger} = ZS$, the same symplectic action as $S$).
\end{enumerate}
Total gate count $w(L) = L^2 + 2L$.
\end{construction}

\begin{theorem}[Closed-form toric $\mathrm{\overline{CZ}}_{0,1}$]
\label{thm:toric-CZ}
For every $L \ge 3$, Construction~\ref{con:toric-CZ} realizes the logical operator $\mathrm{\overline{CZ}}_{0,1}$ exactly, as a single targeted $2$-local Clifford layer on the $L \times L$ 2D toric code.
\end{theorem}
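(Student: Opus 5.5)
The layer is diagonal: it consists only of $S$, $S^\dagger$ and $CZ$ gates. So I would work entirely in the symplectic picture. Every $Z$-type Pauli is fixed, and an $X$-type Pauli $X^{a}$ maps to $X^{a}Z^{Ma}$. Here $M\in\Sym_{2L^2}(\ftwo)$ is the symmetric matrix with $M_{qq}=1$ on the $S/S^\dagger$ sites and $M_{qq'}=1$ on each $CZ$ pair. Codespace preservation reduces to a single condition: $M$ maps every $X$-stabilizer (vertex star) into $\rowspan(H_Z)$, the span of the plaquettes. Since $M$ is symmetric, the logical action is $\begin{pmatrix}\mathbf I&0\\ \mathcal C&\mathbf I\end{pmatrix}$ with $\mathcal C_{ij}=\overline X_i^{\top}M\,\overline X_j \pmod 2$, evaluated on any representatives. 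The target $\overline{CZ}_{0,1}$ is $\mathcal C=\left(\begin{smallmatrix}0&1\\1&0\end{smallmatrix}\right)$. The proof therefore has two parts: the stabilizer condition and the three logical entries.

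\textbf{Logical entries.} Each entry is a direct count. I would use $\overline X_0=\prod_r h(r,0)$ (column $0$) and $\overline X_1=\prod_c v(0,c)$ (row $0$).
\begin{itemize}[label=--]
\item $\mathcal C_{00}$ counts the $S$ sites on column $0$ plus twice the $CZ$ pairs internal to the support. Only $h(0,0)$ and $h(s,0)$ carry phases, so $\mathcal C_{00}=0$. I would also check that no pair joins two $h(\cdot,0)$ qubits, though such pairs contribute $2\equiv0$ anyway.
\item $\mathcal C_{11}$: no phase lies on any $v(0,c)$, so $\mathcal C_{11}=0$.
\item $\mathcal C_{01}$ counts the $CZ$ pairs with one end in each support. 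The row-$0$ pair $h(0,0)\leftrightarrow v(0,0)$ contributes $1$. Cross pairs join rows $r$ and $L-r$ with $r\ge1$, so they never touch $v(0,\cdot)$. This gives $\mathcal C_{01}=1$.
\end{itemize}

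\textbf{Stabilizer preservation.} This is the main work. For a vertex star $A_{(r,c)}=\{h(r,c),h(r,c-1),v(r,c),v(r-1,c)\}$ (with the convention fixed to match the paper's indexing), compute $MA$ explicitly. $MA$ is the set of diagonal-phase sites inside the star together with the $CZ$ partners of its four qubits, taken mod 2. One must then show this $Z$-vector is a product of plaquettes. Equivalently, it must commute with all $X$-stabilizers and with both $\overline X_i$: $A'^{\top}MA=0$ for all stars $A,A'$, and $\overline X_i^\top M A=0$. Symmetry of $M$ makes this criterion both necessary and sufficient, since the $Z$-vectors that are boundaries are exactly those orthogonal to every cycle.

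So I would verify:
\begin{enumerate}[label=(\alph*)]
\item $A^{\top}MA=0$ for each star.
\item $A'^{\top}MA=0$ for every pair of stars. Only stars joined by a pair matter, since $M$ is supported on phases plus pairs.
\item $\overline X_i^{\top}MA=0$.
\end{enumerate}

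I would organise the check by tier. In the bulk rows $r\notin\{0,1,L-1,s,s\pm1\}$, the skewed and aligned families map star $(a,c)$ onto a pair of stars in row $b=L-a$ displaced by one column. The shift $c\mapsto c+1$ in the skewed family is exactly what makes each star overlap its partner star in an even number of pairs. The boundary tiers need separate attention:
\begin{itemize}[label=--]
\item row $0$, where the $h$–$v$ pairs and the $S$ gates must cancel;
\item the middle row, with the self-pair for even $L$ and the $S^\dagger$ row at $s=\lceil L/2\rceil$ for odd $L$.
\end{itemize}
In each of these the diagonal phases are what fix the parity defects left by the pairs. Condition (c) is then immediate from the entry computations above, because star-shifting a representative must not change $\mathcal C$.

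The expected obstacle is the middle and row-$0$ bookkeeping, split by the parity of $L$, with wraparound indices mod $L$. I would first verify $L=3,4,5,6$ by hand or machine and read off the cancellation pattern, then write the general argument as a local check on a constant number of star types.
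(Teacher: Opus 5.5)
\textbf{Route.} Your count of the logical entries ($\mathcal C_{00}=\mathcal C_{11}=0$, $\mathcal C_{01}=1$) is the same reach-map computation the paper uses. Your criterion for the stabilizers, $MA\perp\ker H_Z$ (orthogonal to every star and to $\overline X_0,\overline X_1$), is a valid alternative to the paper's route. The paper instead shows that $MA_{r,c}$ is a \emph{single} plaquette in the reflected row $(-r)\bmod L$, with column shift $0$ or $-1$. That closed form gives $MS_X=S_Z$ as a bijection, and it replaces both your pairwise star bookkeeping and your plan to extrapolate from $L=3,\dots,6$ with one computation per star type. For example, for $1\le a\le\lfloor (L-1)/2\rfloor$ the star $A_{a,c}$ carries no phase site. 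Its four $CZ$ partners $v(L-a,c{+}1)$, $v(L-a,c)$, $h(L-a,c)$, $h(L-a{+}1,c)$ already form one plaquette.

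\textbf{Main gap: signs.} Your opening claim that codespace preservation reduces to $MS_X\subseteq S_Z$ is false. That condition gives preservation only up to Pauli signs, because a diagonal layer acts as $X^a\mapsto i^{\#S}(-i)^{\#S^\dagger}(-1)^{\#\mathrm{internal}\ CZ}\,X^aZ^{Ma}$. The paper's lemma on diagonal layers gives the counterexample $S\otimes S$ on $\langle XX,ZZ\rangle$. Since you work purely symplectically and set $M_{qq}=1$ for both $S$ and $S^\dagger$, your argument can establish only the projective statement. Two checks are missing.
\begin{enumerate}
\item The sign on every star must be $+1$. Each star in rows $0$ and $s$ contains two phase sites of one kind and exactly one internal $CZ$ pair: the row-$0$ pair $h(0,c)\leftrightarrow v(0,c)$; for odd $L$ the aligned pair $v(s{-}1,c)\leftrightarrow h(s,c)$; for even $L$ the self-pair $h(L/2,c{-}1)\leftrightarrow v(L/2,c)$. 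This gives $(\pm i)^2(-1)=+1$, and every other star contains neither phase sites nor an internal pair.
\item The logical sign must be checked. $\overline X_0$ meets $S$ at $h(0,0)$, $S^\dagger$ at $h(s,0)$, and no internal pair, giving $i\cdot(-i)=+1$; $\overline X_1$ meets none of these.
\end{enumerate}
The second check is the whole reason the construction puts $S^\dagger$ rather than $S$ on row $s$. With $S$ there, the layer realizes $\mathrm{\overline{CZ}}_{0,1}\overline Z_0$, which is the same symplectic element. Identifying $S$ with $S^\dagger$ therefore erases exactly the distinction that ``exactly'' in the theorem asserts.

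\textbf{Smaller point: (c) is circular.} You justify condition (c) by saying that star-shifting a representative cannot change $\mathcal C$. But that invariance is a consequence of $MA\in S_Z$, so it cannot be used to prove it. Without (c), conditions (a) and (b) only place $MA$ in $\ker H_X$, that is, in $S_Z$ up to a logical $\overline Z$. You need to check $\overline X_i^{\top}MA=0$ directly. It is a short count, and it is automatic once $MA$ is identified as a plaquette.
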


\subsection{Quantum Logic Codes}
\label{sec:qlc}

The third construction is a constructive family of CSS codes, the \emph{Quantum Logic Codes}, each of which carries a \emph{complete} transversal logical-Clifford basis instruction set: a fixed library of codespace-preserving (CSP) $2$-local circuits together with permutation automorphisms whose logical symplectic images generate the entire logical Clifford group $\Sp(2k, \mathbb{F}_2)$. The family is generated from three small self-dual \emph{core} codes by two composition operations (Fig.~\ref{fig:qlc-construction}): block tiling, which grows $k$, and concatenation with inner self-dual doubly-even codes which we choose to be the $[[7,1,3]]$ code following \cite{jochymoconnor2014using}, which grows $d$. Each of these composition operations provably preserves the complete ISA, and in fact preserves the depth of the transversal gates that form the logical basis set. For two of the three core codes, the complete transversal logical Clifford basis ISA is depth-one, which is carried through the construction composition operations and remains depth-one up to addressable operations between tiled code cores, which become depth-two. Combining the two yields the code family of parameters
\begin{equation}
\label{eq:qlc-family}
[[n_0r7^{\ell}, \;\; 2r, \;\; d_0\,3^{\ell} \,]],
\qquad (n_0, d_0) \in \{(4,2),\,(18,5),\,(20,6)\},
\quad r \ge 1,\ \ell \ge 0,
\end{equation}
with transversally achievable logical Clifford group $\Sp(4r, \mathbb{F}_2)$ and rate $2/(n_0\,7^{\ell})$. Here $r$ counts the tiled core copies and $\ell$ the concatenation levels; the base data $(n_0,d_0)$ selects the core. We can fix parameters to fix subfamilies of this code construction. For example, set $r = 7^{\ell}$, connecting the number of tiled code cores with the concatenation level, and focus on the $[[4,2,2]]$ core code. This then produces the following asymptotic parameters:
\begin{equation}
\label{eq:qlc-family-fixed-tiling}
[[4 \cdot 49^{\ell}, \;\; 2 \cdot 7^{\ell}, \;\; 2\cdot3^{\ell} \,]] = [[n, \sqrt{n}, \Theta(n^{0.2823})]].
\end{equation}

We first define the instruction set, then prove the composition theorem that generates the family, tabulate its members, and finally exhibit the complete logical Clifford basis ISA of the three cores and of their first concatenation level with explicit circuits.

\begin{figure*}[t]
\centering
\includegraphics[width=0.96\linewidth,keepaspectratio]{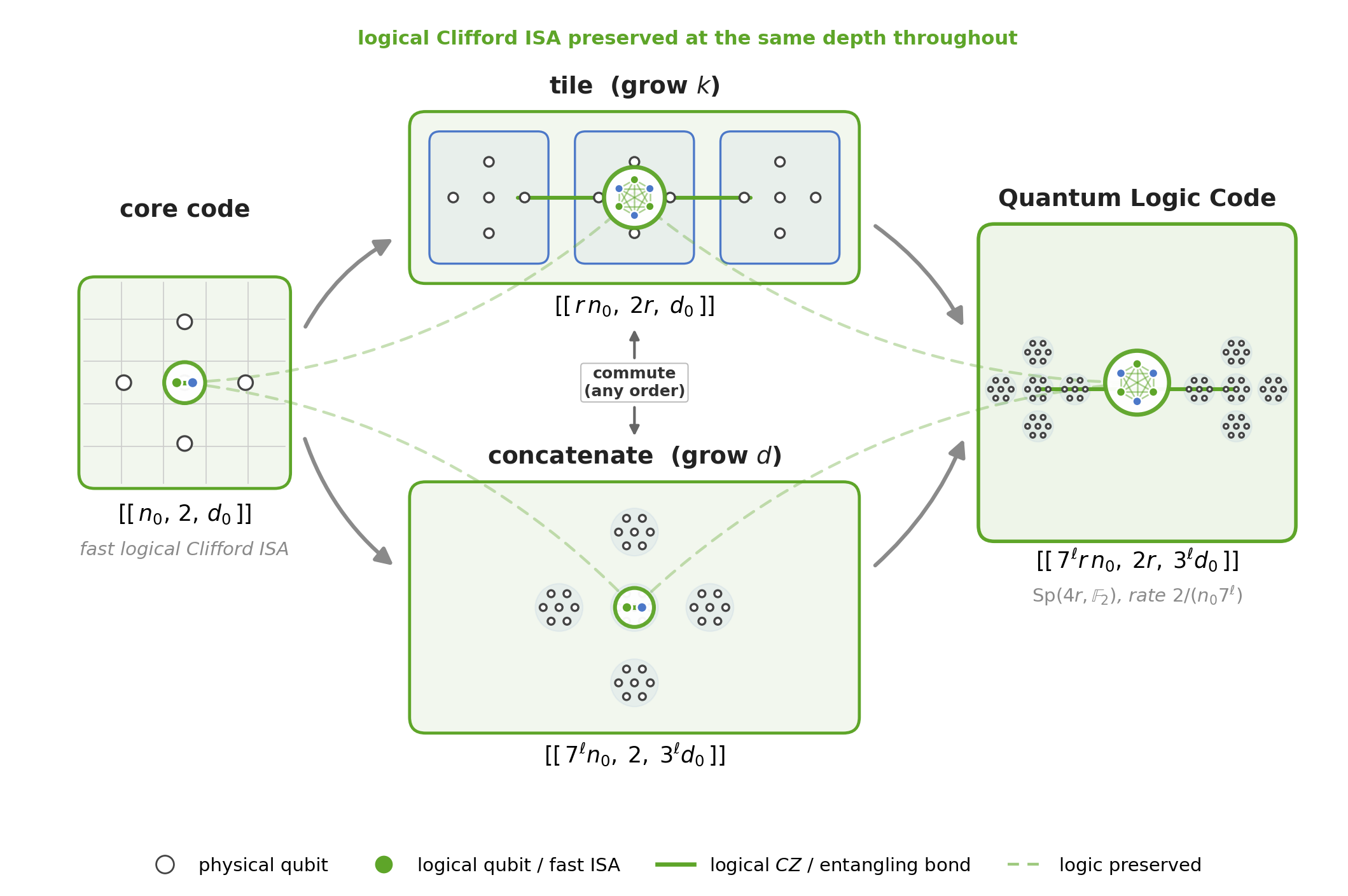}
\caption{Construction intuition for the Quantum Logic Codes (schematic; the literal transversal gate layers are not shown). A small \emph{core} code (left) carries a complete logical-Clifford basis ISA at low circuit depth. Two composition operations grow the code parameters while preserving that ISA: \emph{tiling} ($\times r$, top) places $r$ copies side by side, raising the logical-qubit count to $2r$ at fixed distance and supplying inter-copy addressable $\mathrm{\overline{CZ}}$; \emph{concatenation} with the Steane $[[7,1,3]]$ code ($\times\ell$, bottom) replaces each physical qubit by an inner block, tripling the distance per level at fixed $k$. Combining the two yields the family $[[\,7^{\ell}r\,n_0,\,2r,\,3^{\ell}d_0\,]]$ with logical Clifford group $\Sp(4r,\mathbb{F}_2)$ and rate $2/(n_0 7^{\ell})$ (right). The fixed emblem (a bonded logical pair) marks the logical ISA: it is invariant under both operations, and so is the circuit depth that generates it (Theorem~\ref{thm:qlc-compose}).}
\label{fig:qlc-construction}
\end{figure*}

Each core is a single-block self-dual group-algebra code $\CSS(C,C)$ whose check $H_X=H_Z=H$ is a full-rank row basis of the rowspan of the $|G|\times|G|$ regular-representation matrix $M(c)$ (entry conventions given in Constructions~\ref{con:qlc-2026} and~\ref{con:qlc-1825}) of an element $c \in \mathbb{F}_2[G]$ that is self-orthogonal ($cc^\ast = 0$, so $H_X H_Z^\top = 0$) and doubly-even; self-duality ($S_X = S_Z$) is automatic from $H_X=H_Z$. The smallest, $[[4,2,2]]$, has $G = \mathbb{Z}_4$ and $c = 1+x+x^2+x^3$; its complete depth-1 ISA is given explicitly in Construction~\ref{con:qlc-422} and Fig.~\ref{fig:qlc-c422}.

\subsubsection{The complete logical Clifford basis ISA}
\label{sec:qlc-isa-def}

\begin{definition}[Complete transversal logical Clifford basis ISA]
\label{def:qlc-isa}
Let $Q = \CSS(H_X, H_Z)$ encode $k$ logical qubits with canonical representatives $(L_X, L_Z)$, $L_X L_Z^\top = \mathbf{I}_k$, and stabilizer rowspans $S_X, S_Z$. A \emph{complete transversal logical Clifford basis ISA} for $Q$ is a finite library of circuits whose logical symplectic images generate $\Sp(2k, \mathbb{F}_2)$, in which every layer is one of:
\begin{enumerate}[label=(\roman*)]
\item a \emph{$2$-local codespace-preserving layer}: a tensor product of admissible per-block factors $g_p \in \mathcal{G}_{\mathrm{diag}}$ (Prop.~\ref{prop:7b}) on disjoint qubit pairs;
\item a \emph{permutation automorphism} $\pi \in S_n$ with $\pi(S_X) = S_X$ and $\pi(S_Z) = S_Z$, contributing its logical action $\mathcal{A} \in \Aut_{\mathrm{perm}}^{\mathrm{log}}(Q)$;
\item the global transversal Hadamard $\mathrm{\overline{H}} = H^{\otimes n}$.
\end{enumerate}
We fix the generating set $\{\mathrm{\overline{S}}_i,\, \mathrm{\overline{SHS}}_i,\, \mathrm{\overline{CZ}}_{i,j}\}_{i\ne j\in[k]}$ together with the logical $\mathrm{\overline{H}}$ and the logical-qubit permutations. The ISA is depth-1 if every $\mathrm{\overline{S}}_i$, $\mathrm{\overline{SHS}}_i$, and $\mathrm{\overline{CZ}}_{i,j}$ is realized by a \emph{single} layer of type (i), and more generally bounded-depth $D_0$ if every diagonal/$CZ$ generator $\mathrm{\overline{S}}_i,\mathrm{\overline{SHS}}_i,\mathrm{\overline{CZ}}_{i,j}$ is realized by at most $D_0$ layers of type (i); the code is \emph{self-dual} when $S_X = S_Z$.
\end{definition}

\subsubsection{Bounded-depth self-dual composition}
\label{sec:qlc-compose}

The family of Eq.~\eqref{eq:qlc-family} is generated by the following theorem, which gives sufficient conditions under which the two composition operations preserve a complete ISA.

\begin{theorem}[Bounded-depth self-dual composition]
\label{thm:qlc-compose}
Let $Q=[[n,k,d]]$ carry a complete transversal logical Clifford basis ISA (Def.~\ref{def:qlc-isa}) that is \emph{bounded-depth ($D_0$)} and \emph{self-dual} ($S_X=S_Z$). The operations
\begin{enumerate}[label=(\roman*)]
\item \textbf{tiling}: $r$ disjoint copies $Q^{\oplus r}$ equipped with the permutation automorphisms and the inter-block transversal $CZ$, and
\item \textbf{concatenation}: $Q$ as outer code with the inner Steane code $[[7,1,3]]$ or any self-dual doubly-even $[[n_i,1,d_i]]$ admitting transversal $S$, $SHS$, $CZ$, $H$, in which case the parameter map below becomes $[[n,k,d]]\mapsto[[n_i n,k,d_i d]]$,
\end{enumerate}
each preserve the completeness of the ISA. Concatenation preserves every per-generator layer count exactly; tiling preserves the intra-copy layer counts and supplies the inter-copy generators at constant depth (Proposition~\ref{prop:crosscopy}), so the per-generator depth of the complete ISA is bounded by a constant independent of $r$ and $\ell$, equal to $D_0$ for the intra-copy generators. Tiling gives $[[n,k,d]]\mapsto[[rn, rk, d]]$ with logical group $\Sp(2rk, \mathbb{F}_2)$, and concatenation gives $[[n,k,d]]\mapsto[[7n, k, 3d]]$. The two operations commute. 
\end{theorem}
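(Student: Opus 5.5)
The plan is to treat the two operations separately and, for each, check three things: that the code parameters come out as claimed; that every library layer of Def.~\ref{def:qlc-isa} is still a layer of an allowed type (2-local $\mathcal{G}_{\mathrm{diag}}$ layers, permutation automorphisms, or global $H^{\otimes n}$) on the new code; and that the images of the library still generate the full logical symplectic group. Throughout, generation is reduced to Lemma~\ref{lem:generation} and Corollary~\ref{cor:phys-generation}. It therefore suffices to exhibit $\mathrm{\overline{S}}_i$, $\mathrm{\overline{SHS}}_i$ and $\mathrm{\overline{CZ}}_{ij}$ for all logical indices of the composed code, with bounded layer counts.

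\textbf{Tiling.} The parameters of $Q^{\oplus r}$ are immediate: the checks are block diagonal, so $k$ adds and $d$ is a minimum over the blocks. Self-duality $S_X=S_Z$ is inherited block by block. Any intra-copy generator is realized by the copy's own $D_0$ layers tensored with the identity on the other copies, so intra-copy depth is exactly $D_0$. For inter-copy $\mathrm{\overline{CZ}}_{(a,i),(b,j)}$ I will use the transversal pairing $CZ^{\otimes n}$ between copies $a$ and $b$. Because $S_X=S_Z$, this maps $X$-stabilizers of one copy to themselves times $Z$-stabilizers of the other, so it preserves the codespace. Its logical image is $\mathcal{C}=\begin{pmatrix}0&L_XL_X^\top\\ L_XL_X^\top&0\end{pmatrix}$, which lies in $\mathcal{G}_{2L}(rk)$ by Proposition~\ref{prop:8}. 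To isolate a single pair $(i,j)$, I will first choose a basis in which $L_XL_X^\top$ has a useful form. The intra-copy $\mathrm{\overline{CZ}}$ and $\mathrm{\overline{S}}$ generators (depth $D_0$) can then strip off unwanted couplings, and conjugating by intra-copy $\mathrm{\overline{CNOT}}$s (built from $\mathrm{\overline{H}}$ and $\mathrm{\overline{CZ}}$) moves the target pair into place. This is what Proposition~\ref{prop:crosscopy} should package. The resulting depth is a constant depending on $D_0$ and $k$ but not on $r$, which gives the claimed bounded, $r$-independent inter-copy depth. Logical-qubit permutations across copies come from physical block swaps, which are permutation automorphisms.

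\textbf{Concatenation.} Replace each physical qubit of $Q$ by a Steane block. The distance bound $d_id$ and $k$ unchanged are standard. Self-duality holds because both the outer checks lifted through the inner logical operators and the inner checks are symmetric under $X\leftrightarrow Z$ when the inner code is self-dual. The key observation is that every outer physical gate in $\{S,SHS,CZ,H\}$ is realized on the inner blocks transversally. It becomes a single layer of per-qubit $S$, $SHS$, $H$ (doubly-even self-dual inner code; $S^\dagger$ versus $S$ is irrelevant in $\Sp$) or of pairwise $CZ$ between two inner blocks. A $2$-local outer layer therefore becomes a $2$-local inner layer with factors still in $\mathcal{G}_{\mathrm{diag}}$, and the layer count is preserved exactly. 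Codespace preservation follows in two parts: inner stabilizers are mapped to inner stabilizers by transversality, and outer stabilizers are mapped correctly because the induced action on inner logicals is the outer layer, which preserves $S$. Outer permutation automorphisms lift to block permutations, and $H^{\otimes 7n}$ realizes the outer $H^{\otimes n}$.

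\textbf{Commutation and the main obstacle.} That the two operations commute is an index bookkeeping identity: $(Q^{\oplus r})\circ\mathrm{Steane}=(Q\circ\mathrm{Steane})^{\oplus r}$ as stabilizer groups, and the lifted libraries agree. I expect the inter-copy addressing step to be the main obstacle. $CZ^{\otimes n}$ couples all logical pairs according to $L_XL_X^\top$, so showing that a single $\mathrm{\overline{CZ}}_{(a,i),(b,j)}$ is obtainable at depth independent of $r$ needs a careful choice of canonical basis, together with possibly using partial transversal $CZ$ on subsets of qubits. The first thing to try is supports of logical representatives that make a partial pairing codespace-preserving modulo intra-copy corrections.
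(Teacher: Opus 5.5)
Your concatenation lift matches the paper's argument. Outer $S$, $SHS$, $CZ$, $CZ^{\sharp}$ are replaced by inner transversal gates, the matching structure and layer count are preserved, and permutations and $H^{\otimes n}$ lift. Your tiling bookkeeping also matches: block-diagonal checks, copy permutations, and the qubit-wise aggregate $CZ$ whose codespace preservation needs $S_X=S_Z$ and whose cross block is $P=L_XL_X^\top$. So does the commutation argument.

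The gap is the step you flag yourself: the individually addressable $\mathrm{\overline{CZ}}_{i^{(a)},j^{(b)}}$ at $r$-independent depth. The claim that tiling ``supplies the inter-copy generators at constant depth'' rests on it. Citing Proposition~\ref{prop:crosscopy}, as the paper's proof does, would close it, but the mechanism you attribute to that proposition cannot work.
\begin{itemize}
\item Intra-copy $\mathrm{\overline{S}}$ and $\mathrm{\overline{CZ}}$ layers only add to the intra-copy diagonal blocks of the logical $C$-matrix, so they never remove cross couplings.
\item Take any word $h_1A_{\mathrm{agg}}h_2$ with $h_1,h_2$ intra-copy, including basis changes and $\mathrm{\overline{CNOT}}$ or $\mathrm{\overline{H}}$ conjugations. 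Its copy-$a$/copy-$b$ off-diagonal block is $h_{1,a}(A_{\mathrm{agg}})_{ab}h_{2,b}$, which has rank $k$ because $P$ is invertible. The target has rank one, so for $k\ge2$ no single aggregate, however dressed, is the target.
\item ``Choosing a basis'' for $L_XL_X^\top$ is only a relabeling and changes nothing physically.
\item The partial-pairing fallback fails too. On $[[4,2,2]]$, where $S_X=\langle 1111\rangle$, a cross-copy matching $M$ with $MS_X\subseteq S_Z$ must be empty or perfect, and a perfect one again has an invertible cross block.
\end{itemize}

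The missing idea is additivity: $C$-type layers compose by adding their logical $C$-blocks, so you should use two aggregates. Conjugate $A_{\mathrm{agg}}$ by a copy-$a$ basis change $\mathrm{diag}(A,A^{-\top})$. The complete intra-copy ISA realizes this at a depth $c_{\mathrm{GL}}$ depending only on $k$ and $D_0$, or for free when it is a permutation automorphism. The conjugation turns the cross block $P$ into $UP$, which ranges over all of $\mathrm{GL}(k,\mathbb{F}_2)$. For $k\ge2$ every matrix unit is a sum of two invertibles: $E_{ij}=\mathbf{I}+(\mathbf{I}+E_{ij})$ for $i\ne j$, and $E_{ii}=\tau+(\tau+E_{ii})$ with $\tau$ the transposition $(i\,i')$, $i'\neq i$. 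Taking $U_tP=W_t$ with $W_1+W_2=E_{ij}$ yields $\mathrm{\overline{CZ}}_{i^{(a)},j^{(b)}}$ at depth at most $2+3c_{\mathrm{GL}}$, independent of $r$ and $\ell$.

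There are two smaller omissions.
\begin{itemize}
\item The theorem claims $d\mapsto 3d$ exactly, so you also need the upper bound. Lift a minimum-weight outer $X$- or $Z$-type representative through weight-$d_i$ inner representatives.
\item Codespace preservation is exact, so ``$S$ versus $S^\dagger$ is irrelevant in $\Sp$'' is not enough. You must check, as the paper does, that the lifted phases on lifted outer stabilizers equal the outer ones. This uses that inner logical representatives have odd weight and that a valid outer layer places an even number of $S$ gates on each stabilizer.
\end{itemize}
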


\begin{proof}
\emph{Concatenation (sufficiency).} Identify outer qubit $p$ with the inner block $B_p = \{n_i p,\dots,n_i p + n_i - 1\}$ where $n_i$ is the number of qubits of the inner code. The inner code is assumed to be self-dual and doubly-even, and admits transversal $S$, $SHS$, $CZ$ and $H$, each a depth-1 CSP operation realizing the corresponding inner logical. Lift a single outer type-(i) layer generator-wise: replace each outer $S_p$ (resp.\ $SHS_p$) by inner transversal $S$ (resp.\ $SHS$) on $B_p$, and each outer $CZ_{pq}$ (resp.\ $CZ^{\sharp}_{pq}$) by the inner transversal $CZ$ (resp.\ $CZ^{\sharp}$) between $B_p$ and $B_q$, i.e.\ the $n_i$ physical $CZ$ matching qubit $j$ of $B_p$ to qubit $j$ of $B_q$. Because the outer layer is a matching, its touched qubits are disjoint, so the touched inner blocks are disjoint and the lifted gates are again a matching with $W\le 2$. The lifted layer is codespace-preserving: the inner transversal gates fix every inner-block stabilizer, and they map the lifted outer stabilizers within $S_X\oplus S_Z$ precisely by the outer codespace-preserving property $C\,S_X\subseteq S_Z$, $B\,S_Z\subseteq S_X$ of the outer physical $C$- and $B$-matrices. The encoded logical action equals the outer one, since each inner transversal gate realizes its inner logical up to a logical Pauli (e.g.\ transversal $S$ on the Steane code realizes $\mathrm{\overline{S}}^\dagger = \overline{Z}\,\mathrm{\overline{S}}$, the same symplectic element). Exact stabilizer signs are preserved as well: every inner logical $\overline{X}$ representative has odd weight, fixed mod $4$ across representatives (inner stabilizers are doubly-even and meet each representative evenly), so each outer $S$ gate contributes its phase of Lemma~\ref{lem:diag-action}(i) with multiplicity of this odd weight, and each outer $CZ$ pair with multiplicity of the overlap of the two blocks' inner representatives, odd since the representatives have odd weight and meet inner stabilizers evenly; since a valid outer layer applies an even number of $S$ gates to every stabilizer, the lifted phase equals the outer phase, and inner stabilizers acquire phase $+1$ because they are doubly-even and single-block. Thus each outer layer lifts to exactly one inner layer of equal width; applying this to each of the $D_g$ layers of a generator realized in $D_g$ layers preserves its layer count $D_g$ exactly, so the per-generator depth bound $D_0=\max_g D_g$ is invariant. Completeness carries over, and the distance multiplies exactly: $d(Q\circ\mathrm{inner}) \ge d\times d_{\mathrm{inner}}$ by the standard concatenation bound, while replacing each qubit of a minimum-weight outer logical representative by a minimum-weight inner representative (self-duality supplies one of each type at weight $d_{\mathrm{inner}}$) exhibits a nontrivial logical operator of weight $d\times d_{\mathrm{inner}}$, so $d(Q\circ\mathrm{inner}) = d\times d_{\mathrm{inner}}$.

\emph{Tiling (sufficiency).} As a code, the tiling is the $r$-fold direct sum $Q^{\oplus r}$: its checks are block-diagonal ($H_X^{\oplus r},\,H_Z^{\oplus r}$), it carries no cross-code stabilizers, and its codespace is the tensor product of the $r$ copies. The tiling adds two transversal operations, both enabled by the multiplicity of identical copies. These are the permutation automorphisms $S_r$ and the addressable inter-code transversal $CZ$. The per-copy ISA realizes $\Sp(2k)^{\times r}$. Each permutation $\tau\in S_r$ permutes identical stabilizer blocks and is a code automorphism. Its logical action permutes the $r$ logical blocks accordingly. The full inter-block transversal $CZ$ between copies $a,b$ is the qubit-wise $CZ$ across their identical supports, a $W{=}2$ matching whose logical image is the off-diagonal coupling $\bC = \bigl(\begin{smallmatrix} 0 & P \\ P^\top & 0\end{smallmatrix}\bigr)$ with $P = L_X(L_X)^\top$ so that qubit $j$ of copy $a$ paired with qubit $j$ of copy $b$. By \eqref{eq:2local-css-pres-x} the coupling is codespace-preserving, with exact phase $+1$ on every stabilizer (each tiled stabilizer is supported on a single copy, so no aggregate $CZ$ pair lies inside its support, and the layer has no $S$ gates), and self-duality ($S_X=S_Z$) makes the term $P=L_XL_X^\top$ invertible.
Together these operations generate $\Sp(2rk, \mathbb{F}_2)$ (Proposition~\ref{prop:crosscopy} together with the Generation Lemma~\ref{lem:generation}). We show in Proposition \ref{prop:qlc-depth} that tiling also provides the addressable inter-block $CZ$ gate at constant depth.

\emph{Commutativity.} Both operations require and preserve the bounded-depth and self-dual properties and act on the copy index versus the inner block, so $\mathrm{tile}_r\circ\mathrm{concat}_\ell = \mathrm{concat}_\ell\circ\mathrm{tile}_r$, giving the two-parameter lattice of Eq.~\eqref{eq:qlc-family}.
\end{proof}

\subsubsection{The combined family}
\label{sec:qlc-family}

Theorem~\ref{thm:qlc-compose} generates Eq.~\eqref{eq:qlc-family} from each core: $\ell$ $[[7,1,3]]$ levels multiply $(n,d)$ by $(7,3)$ per level at fixed $k$ and fixed depth, while $r$-fold tiling multiplies $(n,k)$ by $r$ at fixed $d$, so $k = 2r$ and $d = d_0 3^{\ell}$ are independently tunable and the rate $2/(n_0 7^{\ell})$ depends only on the core and the concatenation depth. Table~\ref{tab:qlc-family} lists representative members; every entry carries the complete logical Clifford basis ISA $\Sp(4r, \mathbb{F}_2)$.

\begin{table}[ht]
\centering
\renewcommand{\arraystretch}{1.5}
\setlength{\tabcolsep}{6pt}
\begin{tabular}{c c c c c c}
\hline\hline
core $(n_0,d_0)$ & group $G$ & seed $[[n_0,2,d_0]]$ & tiled $[[r n_0, 2r, d_0]]$ & $\circ$ $[[7,1,3]]$ $[[7n_0,2,3d_0]]$ & tiled $\circ$ $[[7,1,3]]$\\
\hline
$(4,2)$  & $\mathbb{Z}_4$            & $[[4,2,2]]$  & $[[4r,2r,2]]$  & $[[28,2,6]]$   & $[[28r,2r,6]]$ \\
$(18,5)$ & $D_9$                     & $[[18,2,5]]$ & $[[18r,2r,5]]$ & $[[126,2,15]]$ & $[[126r,2r,15]]$ \\
$(20,6)$ & $\mathrm{AGL}(1,5)$       & $[[20,2,6]]$ & $[[20r,2r,6]]$ & $[[140,2,18]]$ & $[[140r,2r,18]]$ \\
\hline\hline
\end{tabular}
\caption{Members of the Quantum Logic Code family $[[n_0 r 7^{\ell},2r,d_03^{\ell}]]$ at $\ell\in\{0,1\}$. Concatenated distances are exact (Theorem~\ref{thm:qlc-compose}). All carry the complete logical Clifford basis ISA and can achieve $\Sp(4r,\mathbb{F}_2)$ transversally. They have rate $=2/(n_0 7^{\ell})$.}
\label{tab:qlc-family}
\end{table}
\subsubsection{First concatenation level}
\label{sec:qlc-concat}

By Theorem~\ref{thm:qlc-compose} the first $[[7,1,3]]$ level of each core, $[[28,2,6]] = [[4,2,2]]\circ [[7,1,3]]$, $[[126,2,15]] = [[18,2,5]]\circ [[7,1,3]]$, and $[[140,2,18]] = [[20,2,6]]\circ [[7,1,3]]$ carries the complete logical-Clifford basis ISA at the \emph{same} per-generator layer count as its core: each outer $W\le2$ layer lifts to exactly one inner $W\le2$ layer, so the distance triples while depth is unchanged. Table~\ref{tab:qlc-depths} records the depth-optimized layer counts. The depth-1 cores stay depth-1, and the $D_9$ profile $\{1,2\}$ carries to $[[126,2,15]]$. Figure~\ref{fig:qlc-c2826} shows the lifted $\mathrm{\overline{S}}_0$ on $[[28,2,6]]$ as a single depth-1 CSP layer: $[[7,1,3]]$ blocks $0$ and $2$ carry transversal $S$, joined by seven transversal $CZ$.

\begin{table}[ht]
\centering
\renewcommand{\arraystretch}{1.4}
\setlength{\tabcolsep}{9pt}
\begin{tabular}{l c c c}
\hline\hline
generator & $[[4,2,2]]\!\to\![[28,2,6]]$ & $[[18,2,5]]\!\to\![[126,2,15]]$ & $[[20,2,6]]\!\to\![[140,2,18]]$ \\
\hline
$\mathrm{\overline{S}}_0$    & $1\to1$ & $2\to2$ & $1\to1$ \\
$\mathrm{\overline{S}}_1$    & $1\to1$ & $2\to2$ & $1\to1$ \\
$\mathrm{\overline{CZ}}_{0,1}$ & $1\to1$ & $1\to1$ & $1\to1$ \\
$\mathrm{\overline{SHS}}_0$  & $1\to1$ & $2\to2$ & $1\to1$ \\
$\mathrm{\overline{SHS}}_1$  & $1\to1$ & $2\to2$ & $1\to1$ \\
$\mathrm{\overline{H}}$  & $2\to2$ & $1\to1$ & $2\to2$ \\
$\mathrm{\overline{SWAP}}$   & $1\to1$ & $1\to1$ & $1\to1$ \\
\hline
\multicolumn{4}{l}{\emph{$r{=}2$ tiling (on $[[2n_0,4,d_0]]$):}}\\
inter-copy aggregate $\mathrm{\overline{CZ}}_{a,b}$ & $1\to1$ & $1\to1$ & $1\to1$ \\
addressable inter-copy $\mathrm{\overline{CZ}}_{i^{(a)},j^{(b)}}$ & $2\to2$ & $\le 18\to\le 18$ & $2\to2$ \\
block-shift $\overline{\sigma}$ & $1\to1$ & $1\to1$ & $1\to1$ \\
\hline\hline
\end{tabular}
\caption{Depth-optimized layer counts (core $\to$ first concatenation level). Concatenation is layer-by-layer, so every per-generator depth is preserved while $(n,d)$ scales by $(7,3)$. Tiling ($r$-fold) preserves these intra-copy depths and adds the depth-$1$ aggregate inter-copy $\mathrm{\overline{CZ}}_{a,b}$ which couples all logicals of the two copies, and the copy-permutation automorphisms (every $\tau\in S_r$ is itself a depth-$1$ automorphism). The individually \emph{addressable} inter-copy $\mathrm{\overline{CZ}}_{i^{(a)},j^{(b)}}$ is realized at small constant depth (Prop.~\ref{prop:crosscopy}): depth $2$ for the depth-1 core families, and $\le 18$ layers for the $D_9$ core family.}
\label{tab:qlc-depths}
\end{table}

The two composition axes leave the per-generator circuit depth flat, and only the
number of generators grows. We
record this as a depth bound in $r$ and $\ell$.

\begin{proposition}[Depth bound for the Quantum Logic Code family]
\label{prop:qlc-depth}
Let $Q_0=[[n_0,2,d_0]]$ be a bounded-depth self-dual core whose diagonal/$CZ$ generators have maximum layer count $D_0$ (so $D_0=1$ for $\mathbb{Z}_4$ and $\mathrm{AGL}(1,5)$, and $D_0=2$ for $D_9$). Then every family member $Q_{r,\ell}=[[n_0 r 7^{\ell},2r,d_0 3^{\ell}]]$ of Theorem~\ref{thm:qlc-compose} carries a complete logical Clifford basis ISA in which
\begin{enumerate}[label=(\roman*)]
\item every intra-copy diagonal generator $\mathrm{\overline{S}}_i,\mathrm{\overline{SHS}}_i$ and intra-copy $\mathrm{\overline{CZ}}_{ij}$ is realized by at most $D_0$ codespace-preserving $W\le 2$ layers; the inter-copy coupling between two copies is realized by a single qubit-wise transversal $CZ$ of Theorem~\ref{thm:qlc-compose}, whose logical image is the \emph{aggregate} $\bigl(\begin{smallmatrix}0&P\\P^\top&0\end{smallmatrix}\bigr)$ with $P=L_XL_X^\top$ invertible, and each \emph{individual addressable} cross-copy generator $\mathrm{\overline{CZ}}_{i^{(a)},j^{(b)}}$ is realized at \emph{constant} depth: depth $2$ for the $\mathbb{Z}_4$ and $\mathrm{AGL}(1,5)$ cores and a constant $\le 18$ for $D_9$, synthesized from the diagonal generators, by Proposition~\ref{prop:crosscopy}, in all cases independent of $r$ and $\ell$; the global $\mathrm{\overline{H}}$ and the permutation generators (every copy permutation $\tau\in S_r$, and the in-core swap, is itself a depth-$1$ automorphism) have fixed depth $\le 2$, independent of both $r$ and $\ell$, while an arbitrary permutation of the $2r$ logical qubits requires $O(r^{2})$ such generators; and
\item an arbitrary logical Clifford in $\Sp(4r,\mathbb{F}_2)$ is realized by $O(r^{2})\,D_0=O(r^{2})$ such layers, again independent of $\ell$.
\end{enumerate}
\end{proposition}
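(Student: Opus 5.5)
The argument assembles earlier results rather than building a new construction. It has three parts: the per-generator depth statements in part (i), a counting argument for part (ii), and a check that none of the constants depend on $\ell$ or $r$.

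\emph{Intra-copy generators and concatenation.} Start from the core $Q_0$, whose diagonal/$CZ$ generators $\mathrm{\overline{S}}_i,\mathrm{\overline{SHS}}_i,\mathrm{\overline{CZ}}_{01}$ are realized by at most $D_0$ codespace-preserving $W\le 2$ layers. The values $D_0=1$ for $\mathbb{Z}_4$ and $\mathrm{AGL}(1,5)$, and $D_0=2$ for $D_9$, are read off the core column of Table~\ref{tab:qlc-depths}.

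Apply the concatenation half of Theorem~\ref{thm:qlc-compose} $\ell$ times. That proof lifts each outer layer to exactly one inner layer of the same width, so each layer count $D_g$ is unchanged; an induction on $\ell$ gives $Q_{1,\ell}$ with the same per-generator counts. The same lifting turns the global $\mathrm{\overline{H}}$, the in-core swap and the copy permutations into depth-$\le 2$ operations: transversal $H$ lifts to transversal $H$, and permutations lift to block permutations. The $\le 2$ bound comes from the $\mathrm{\overline{H}}$ row of the table.

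Tiling $r$ copies does not change intra-copy circuits, because the stabilizers are block-diagonal. So every intra-copy generator of $Q_{r,\ell}$ still has depth $\le D_0$. By the commutativity clause of Theorem~\ref{thm:qlc-compose}, the order in which tiling and concatenation are applied does not matter.

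\emph{Inter-copy generators.} The aggregate coupling is the qubit-wise transversal $CZ$ from the tiling part of Theorem~\ref{thm:qlc-compose}. It has depth $1$, and its logical block is $P=L_XL_X^\top$, which is invertible by self-duality.

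For the addressable $\mathrm{\overline{CZ}}_{i^{(a)},j^{(b)}}$, I would invoke Proposition~\ref{prop:crosscopy}. The idea is to conjugate or combine the aggregate coupling with single-copy diagonal and basis-change generators so that it cancels the unwanted entries of $P$. For example, one can multiply by intra-copy $\mathrm{\overline{CZ}}$ layers, or apply a depth-$1$ automorphism that puts $P$ into a form where one extra layer isolates a single entry. This gives depth $2$ for the two depth-1 cores and a constant $\le 18$ for $D_9$. These numbers are computed once on the $r=2$ tiling of the bare core, as in the table.

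Two things make them independent of $r$ and $\ell$. First, the circuit only touches copies $a$ and $b$, and the copies are identical. Second, concatenation preserves layer counts by the previous step.

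\emph{Part (ii).} Choose the generating set of Lemma~\ref{lem:generation} / Corollary~\ref{cor:phys-generation} for $\Sp(4r,\mathbb{F}_2)$. Use the Aaronson--Gottesman normal form: a bounded number of $\mathrm{\overline{H}}$, $\mathrm{\overline{S}}$ and $\mathrm{\overline{CNOT}}$ stages, with each $\mathrm{\overline{CNOT}}$ stage decomposed into $O(K^2)=O(r^2)$ two-qubit gates.

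Each two-qubit gate is built from $O(1)$ library generators. $\mathrm{\overline{CNOT}}_{ij}$ is $\mathrm{\overline{CZ}}_{ij}$ conjugated by $\mathrm{\overline{H}}_j$, and $\mathrm{\overline{H}}_j=\mathrm{\overline{S}}_j\mathrm{\overline{SHS}}_j\mathrm{\overline{S}}_j$. Each library generator has depth at most $\max(D_0,2,18)$. The total is therefore $O(r^2)$ layers, with no dependence on $\ell$. The remark that an arbitrary logical permutation needs $O(r^2)$ generators follows in the same way from writing it as a product of transpositions realized through the same gates.

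\emph{Main obstacle.} The delicate step is the addressable cross-copy $\mathrm{\overline{CZ}}$. Proposition~\ref{prop:crosscopy} has to give a constant-depth synthesis that isolates one entry of the aggregate coupling, and the constant must survive concatenation. I would check survival directly: every layer in that synthesis is a type-(i) matching layer or an automorphism, so the lifting argument of Theorem~\ref{thm:qlc-compose} applies layer by layer.
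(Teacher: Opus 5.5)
Your proposal is correct and follows essentially the same route as the paper. Concatenation lifts each $W\le 2$ layer to exactly one layer, so every per-generator count is independent of $\ell$; tiling leaves intra-copy circuits untouched and adds only the depth-$1$ aggregate $CZ$ and the copy permutations; the addressable $\mathrm{\overline{CZ}}_{i^{(a)},j^{(b)}}$ is imported from Proposition~\ref{prop:crosscopy}; and (ii) is the Aaronson--Gottesman count of $O(r^2)$ generators, where you charge each cross-copy $\mathrm{\overline{CZ}}$ at its constant depth ($2$, or $\le 18$) rather than at $D_0$, which is slightly more careful than the paper's wording. The one inaccuracy is your heuristic gloss of Proposition~\ref{prop:crosscopy}, which does not affect the argument since you use it as a black box: intra-copy $\mathrm{\overline{CZ}}$ layers only change the diagonal blocks $\bC_a,\bC_b$ and cannot cancel cross-block entries, and the actual mechanism writes $E_{ij}=W_1+W_2$ with $W_1,W_2\in\GL(k,\ftwo)$ and realizes each $W_t$ as one aggregate conjugated by a copy-$a$ basis change.
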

\begin{proof}
\emph{(i)} Concatenation is layer-by-layer (Theorem~\ref{thm:qlc-compose}): each outer $W\le2$ layer lifts to exactly one inner $W\le2$ layer, so every generator's layer count is unchanged under $\ell\mapsto\ell+1$. The depth is independent of $\ell$ while the distance triples. Tiling replicates the per-copy ISA $\Sp(4)^{\times r}$ at the unchanged core depths and introduces exactly two new generator types: the inter-copy $\mathrm{\overline{CZ}}_{a,b}$, a single $W{=}2$ matching of transversal $CZ$ across the identical supports of copies $a,b$ (depth $1$), and the copy-permutation automorphisms (each $\tau\in S_r$ a qubit permutation, depth $1$). Hence every diagonal/$CZ$ generator of $Q_{r,\ell}$ has layer count $\le D_0$, and the permutation generators $\le 2$, independent of $r$ and $\ell$.

\emph{(ii)} By the Generation Lemma~\ref{lem:generation} and the Aaronson--Gottesman normal form \cite{aaronson2004improved}, any $M\in\Sp(4r,\mathbb{F}_2)$ factors into $O((2r)^2)=O(r^2)$ generators from $\{\mathrm{\overline{S}}_i,\mathrm{\overline{SHS}}_i,\mathrm{\overline{CZ}}_{i,j},\mathrm{\overline{H}}_i\}$; substituting the realizations of (i) (each diagonal/$CZ$ generator at $\le D_0$ layers and $\mathrm{\overline{H}}_i=\mathrm{\overline{S}}_i\,\mathrm{\overline{SHS}}_i\,\mathrm{\overline{S}}_i$ at $\le 3D_0$ layers) yields a codespace-preserving circuit of depth $O(r^2)\,D_0$. The bound carries no dependence on $\ell$, so arbitrarily large distance $d_0 3^{\ell}$ is obtained at fixed depth.
\end{proof}

\begin{proposition}[Addressable inter-copy entanglers are constant-depth]
\label{prop:crosscopy}
Let $Q_0=[[n_0,k,d_0]]$ be a self-dual core ($S_X=S_Z$, $k\ge2$) carrying a complete bounded-depth-$D_0$ transversal logical Clifford basis ISA (Def.~\ref{def:qlc-isa}). On any tiled family member $Q_{r,\ell}$ of Theorem~\ref{thm:qlc-compose}, for every pair of copies $a\ne b$ and all $i,j\in[k]$, the individually addressable $\mathrm{\overline{CZ}}_{i^{(a)},j^{(b)}}$ is realized by a codespace-preserving circuit that uses \emph{exactly two} inter-copy aggregate-$CZ$ matchings together with intra-copy ISA layers, of total depth at most $2+3\,c_{\mathrm{GL}}(Q_0)$. Here $c_{\mathrm{GL}}(Q_0)$ is the intra-copy ISA depth needed to synthesize a logical $\GL(k,\ftwo)$ basis change; it is bounded by a function of $(k,D_0)$ alone, so the depth is a constant independent of $r$ and $\ell$.
\end{proposition}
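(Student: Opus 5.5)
The plan is to build the addressable gate $\mathrm{\overline{CZ}}_{i^{(a)},j^{(b)}}$ as a sum of two "dressed" aggregate couplings, working entirely in the symplectic picture of the copies $a,b$ (all other copies are untouched).

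\textbf{Step 1: the aggregate coupling.} By Theorem~\ref{thm:qlc-compose}, the qubit-wise transversal $CZ$ between copies $a$ and $b$ is a codespace-preserving $W{=}2$ matching. Its logical image is the element of $\mathcal{G}_{2L}(2k)$ with trivial diagonal blocks, $\boldsymbol{\mathcal{B}}=0$, and $\boldsymbol{\mathcal{C}}=\bigl(\begin{smallmatrix}0&P\\P^\top&0\end{smallmatrix}\bigr)$, where $P=L_XL_X^\top$ is invertible by self-duality. Any two elements of $\mathcal{G}_{2L}$ with $\boldsymbol{\mathcal{B}}=0$ commute, and their product adds the $\boldsymbol{\mathcal{C}}$ blocks. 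The target $\mathrm{\overline{CZ}}_{i^{(a)},j^{(b)}}$ is exactly the element with off-diagonal block $E_{ij}=e_ie_j^\top$.

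\textbf{Step 2: dressing by intra-copy basis changes.} Let $A$ be a logical basis change on copy $a$ and $G$ one on copy $b$, each acting as $\mathrm{diag}(M,M^{-\top})$ on its copy. Conjugating the aggregate coupling by $A\oplus G$ sends its off-diagonal block $P$ to $APG^\top$ (up to the transpose convention, which I fix once). Both $A$ and $G$ are realizable by the intra-copy ISA: by Lemma~\ref{lem:generation} and Corollary~\ref{cor:phys-generation}, the library $\{\mathrm{\overline{S}}_i,\mathrm{\overline{SHS}}_i,\mathrm{\overline{CZ}}_{ij}\}$ generates $\Sp(2k,\ftwo)\supseteq\GL(k,\ftwo)$. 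Each is built from codespace-preserving layers acting on a single copy, so a copy-$a$ layer and a copy-$b$ layer can run in parallel within the same time step. This costs at most $c_{\mathrm{GL}}(Q_0)$ per basis-change stage, and since $k$ is fixed by the core this is bounded by a function of $(k,D_0)$ alone.

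\textbf{Step 3: the algebraic reduction.} It suffices to find invertible $N_1,N_2\in\GL(k,\ftwo)$ with
\[
N_1+N_2=E_{ij}.
\]
\begin{itemize}[label=--]
\item Each $N_t$ can be written as $PG_t^\top$ with $G_t^\top=P^{-1}N_t$ (take $A=I$), so each summand is realized by one dressed aggregate.
\item A naive choice $N_2=P$ fails whenever $P+E_{ij}$ is singular. This is the obstacle I expect to be the crux.
\item Resolve it as follows. Write $N_2=N_1+E_{ij}=N_1(I+N_1^{-1}e_ie_j^\top)$. This is invertible iff $e_j^\top N_1^{-1}e_i=0$.
\item So choose $N_1$ invertible with $(N_1^{-1})_{ji}=0$. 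For $k\ge 2$ this is possible: if $i\ne j$, take $N_1=I$; if $i=j$, take $N_1$ to be a permutation matrix moving $i$.
\end{itemize}

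\textbf{Step 4: the circuit and its depth.} The circuit is
\[
G_1^{-1}\cdot\mathrm{CZ}_{\mathrm{agg}}\cdot G_1G_2^{-1}\cdot\mathrm{CZ}_{\mathrm{agg}}\cdot G_2
\]
(acting on copy $b$; symmetric dressing on copy $a$ can be absorbed in the same slots). Conjugation turns the two aggregates into the commuting couplings $N_1$ and $N_2$, whose product has block $N_1+N_2=E_{ij}$. The sequence uses exactly two inter-copy aggregate matchings and three basis-change stages, so its depth is at most $2+3c_{\mathrm{GL}}(Q_0)$.

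\textbf{Step 5: codespace preservation, concatenation, and tiling.} Every layer in the circuit is either an intra-copy ISA layer or an aggregate matching, and each of these is codespace-preserving. Phases and Paulis can be ignored because we work in the projective Clifford group. Concatenation lifts each layer one-for-one (Theorem~\ref{thm:qlc-compose}), and tiling leaves the other copies untouched. The bound is therefore independent of $r$ and $\ell$.

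For the depth-1 cores, I would additionally check that $P$ itself admits a factorization where one dressing is trivial. This would collapse the count to the tabulated depth $2$.
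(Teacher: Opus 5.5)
Your proposal is correct and follows the paper's proof essentially step for step. You conjugate the aggregate coupling by intra-copy $\GL(k,\ftwo)$ basis changes so that its cross-block ranges over all invertible matrices, and you write $E_{ij}$ as a sum of two invertibles (your witnesses $N_1=I$ for $i\ne j$ and a permutation moving $i$ for $i=j$ are exactly the paper's). You then compose the two commuting $C$-type layers for depth $2+3\,c_{\mathrm{GL}}(Q_0)$. The only cosmetic differences are that you dress copy $b$ rather than copy $a$, and that you justify invertibility of $N_1+E_{ij}$ via the rank-one criterion $e_j^\top N_1^{-1}e_i=0$ instead of checking the explicit summands directly.
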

\begin{proof}
The gate touches only the $2k$ logical qubits of copies $a,b$; index them $0,\dots,k{-}1$ (copy $a$) and $k,\dots,2k{-}1$ (copy $b$). By Lemma~\ref{lem:diag-action}(ii) a codespace-preserving phase/$CZ$ (``$C$-type'') layer has logical image $\bigl(\begin{smallmatrix}\mathbf I&0\\\bC&\mathbf I\end{smallmatrix}\bigr)$ with $\bC\in\Sym_{2k}(\ftwo)$, and such layers compose by \emph{addition} of their $\bC$-blocks. Write $\bC=\bigl(\begin{smallmatrix}\bC_a&X\\X^\top&\bC_b\end{smallmatrix}\bigr)$ with cross-block $X\in\ftwo^{k\times k}$. The aggregate inter-copy $CZ$ of Theorem~\ref{thm:qlc-compose} is the $C$-type layer $A_{\mathrm{agg}}$ with $\bC=\bigl(\begin{smallmatrix}0&P\\P^\top&0\end{smallmatrix}\bigr)$, $P=L_XL_X^\top$ \emph{invertible} (self-duality), at depth $1$; the target $\mathrm{\overline{CZ}}_{i^{(a)},j^{(b)}}$ is the $C$-type layer with cross-block $E_{ij}$, i.e.\ $\bC=\bigl(\begin{smallmatrix}0&E_{ij}\\E_{ji}&0\end{smallmatrix}\bigr)$ (Lemma~\ref{lem:diag-action}(ii)).

\emph{Step 1 (basis change acts by congruence).} A per-copy logical basis change is an element $g=\bigl(\begin{smallmatrix}A&0\\0&A^{-\top}\end{smallmatrix}\bigr)$, $A\in\GL(k,\ftwo)$ (Prop.~\ref{prop:pi-logical}); it conjugates a $C$-type layer to a $C$-type layer, $\bC\mapsto A^{-\top}\bC A^{-1}$. Taking $g$ block-diagonal across the two copies, with copy-$a$ part $A_a$ and copy-$b$ part $A_b$, sends the aggregate's cross-block $P\mapsto A_a^{-\top}PA_b^{-1}=UPV^\top$ ($U,V\in\GL(k,\ftwo)$), leaving the intra-copy blocks zero. Completeness of the ISA (Cor.~\ref{cor:phys-generation}) places the entire subgroup $\GL(k,\ftwo)\hookrightarrow\Sp(2k,\ftwo)$ in the realized group, so each such $g$ is a codespace-preserving circuit of depth $\le c_{\mathrm{GL}}(Q_0)$. As $\Sp(2k,\ftwo)$ is a fixed finite group, its diameter in the ISA generators is a function of $k$ alone, and each generator costs $\le 3D_0$ layers (Lem.~\ref{lem:generation}); hence $c_{\mathrm{GL}}(Q_0)$ is bounded by a function of $(k,D_0)$, a synthesis cost on $k$ logical qubits only. \emph{When $A\in\rho(\Aut_{\mathrm{perm}}(Q_0))$ is a permutation automorphism, the conjugation is a relabeling of the matching wiring and adds no depth.}

\emph{Step 2 (every invertible cross-block is one aggregate).} Each conjugated aggregate $g\,A_{\mathrm{agg}}\,g^{-1}$ is $C$-type with cross-block $UPV^\top$. Since $P\in\GL(k,\ftwo)$, already $\{UP:U\in\GL(k,\ftwo)\}=\GL(k,\ftwo)$: every \emph{invertible} cross-block is realized by a single aggregate preceded by one copy-$a$ basis change.

\emph{Step 3 (matrix units are sums of two invertibles).} For $k\ge2$, every matrix unit $E_{ij}\in\ftwo^{k\times k}$ is a sum of two invertibles. If $i\ne j$, $E_{ij}=\mathbf I+(\mathbf I+E_{ij})$ and $\mathbf I+E_{ij}$ is unipotent ($\det=1$). If $i=j$, fix a transposition $\tau=(i\,i')$ with $i'\ne i$ (possible as $k\ge2$); then $E_{ii}=\tau+(\tau+E_{ii})$, where $\tau+E_{ii}$ restricts to $\bigl(\begin{smallmatrix}1&1\\1&0\end{smallmatrix}\bigr)$ on the $\{i,i'\}$ block and to the identity elsewhere, hence is invertible. (For $k=1$, $P$ is the scalar $1$ and $A_{\mathrm{agg}}$ already \emph{is} the addressable gate, depth $1$.)

\emph{Step 4 (assembly).} Write $E_{ij}=W_1+W_2$ with $W_1,W_2\in\GL(k,\ftwo)$ (Step 3), and let $g_t$ be the copy-$a$ basis change with $U_t=W_tP^{-1}$ (so the $t$-th conjugated aggregate has cross-block $U_tP=W_t$). Composing the two,
\[
\bigl(g_1A_{\mathrm{agg}}g_1^{-1}\bigr)\bigl(g_2A_{\mathrm{agg}}g_2^{-1}\bigr)
=\begin{pmatrix}\mathbf I&0\\[2pt] \bigl(\begin{smallmatrix}0&\,W_1+W_2\\(W_1+W_2)^\top&\,0\end{smallmatrix}\bigr)&\mathbf I\end{pmatrix}
=\begin{pmatrix}\mathbf I&0\\[2pt] \bigl(\begin{smallmatrix}0&E_{ij}\\E_{ji}&0\end{smallmatrix}\bigr)&\mathbf I\end{pmatrix},
\]
which is exactly $\mathrm{\overline{CZ}}_{i^{(a)},j^{(b)}}$. Every layer is codespace-preserving (the aggregates by Theorem~\ref{thm:qlc-compose}, the $g_t$ as ISA circuits). Merging $g_1^{-1}g_2$, the circuit is two aggregate matchings interleaved with three intra-copy basis-change blocks: total depth $\le 2+3\,c_{\mathrm{GL}}(Q_0)$, independent of $r$ and $\ell$. (Independence of $\ell$ also follows because every layer used is itself concatenation-invariant by Theorem~\ref{thm:qlc-compose}.)
\end{proof}

The constant $c_{\mathrm{GL}}(Q_0)$ is the only core-dependent quantity, and it specializes to the validated entries of Table~\ref{tab:qlc-depths}. For $\mathbb{Z}_4$ ($[[4,2,2]]$) and $\mathrm{AGL}(1,5)$ ($[[20,2,6]]$) the cross-copy term $P$ is the off-diagonal swap and the requisite basis changes are permutation automorphisms (Prop.~\ref{prop:pi-logical}): for $[[4,2,2]]$ the transposition $(0\,1)$ and for $[[20,2,6]]$ the permutation $(0\,4)(1\,3)(5\,10\,6\,15\,7\,17)(8\,19\,13)(9\,11\,16\,18\,14\,12)$ (cycles mapping each entry to its successor) act logically as $\bigl(\begin{smallmatrix}1&0\\1&1\end{smallmatrix}\bigr)$ and $\bigl(\begin{smallmatrix}1&1\\1&0\end{smallmatrix}\bigr)$ respectively, which together with the swap (a transposition for $[[4,2,2]]$, right translation by $(2,0)$ for $[[20,2,6]]$) generate all of $\GL(2,\ftwo)$. So by the last sentence of Step~1 they are absorbed into the matching wiring and $\mathrm{\overline{CZ}}_{i^{(a)},j^{(b)}}$ is realized in \textbf{depth $2$}, with two $C$-side matchings whose cross-blocks sum to $E_{ij}$, the explicit witness for Step~3 being, e.g.\ $E_{01}=\mathbf I+\bigl(\begin{smallmatrix}1&1\\0&1\end{smallmatrix}\bigr)$ (both summands invertible). For $D_9$ ($[[18,2,5]]$), $P=\mathbf I$ but the permutation automorphisms realize only $\{\mathbf I,\mathrm{swap}\}$ logically: the coset weight enumerator of the class of $\overline X_0\overline X_1$ ($\{6{:}75,\,10{:}171,\,14{:}9,\,18{:}1\}$) differs from that of $\overline X_0$ and $\overline X_1$ (both $\{5{:}36,\,9{:}184,\,13{:}36\}$), automorphisms preserve coset weight enumerators, so none maps $\overline X_0$ to $\overline X_0\overline X_1$ which any subgroup of $\GL(2,\ftwo)$ larger than $\{\mathbf I,\mathrm{swap}\}$ would require. So the $\GL(2,\ftwo)$ basis changes must be synthesized from the diagonal generators: the length-four word $\bC{=}E_{11}$, $\bB{=}E_{01}{+}E_{10}$, $\bC{=}E_{11}$, $\bB{=}E_{00}{+}E_{01}{+}E_{10}$ realizes the required transvection at layer cost $2+1+2+3$, so $c_{\mathrm{GL}}\le8$. Since $P=\mathbf I$ (or a free swap) serves as one of the two summands of Step~3, only one basis change must be synthesized, giving the constant $\le 2+2\,c_{\mathrm{GL}}\le18$. In all cases the per-generator depth is independent of $r$ and $\ell$, and only the count of generators in a \emph{generic} logical Clifford grows, as $O(r^2)$ (Prop.~\ref{prop:qlc-depth}(ii)).

This proof is perhaps best understood intuitively by an example. Consider two core codes (copies) $a, b$, each of parameters $[[n_0, 2, d_0]]$, and label the four logical qubits $a_0, a_1, b_0, b_1$. Suppose we wish to synthesize the addressable cross-block entangler $CZ(a_0, b_1)$. For clarity we illustrate with the aggregate coupling matched indices ($P=\mathbf I$); the same argument applies verbatim to the swap coupling realized by the depth-$2$ cores. First, perform the aggregate blockwise $CZ$, coupling the logical qubits as $CZ(a_0, b_0)\cdot CZ(a_1, b_1)$ and costing one layer of gate depth. Next, perform a logical basis change on copy $a$, sending $a_0 \rightarrow a_0$ and $a_1 \rightarrow a_1 + a_0$; this is an element of $\GL(2,\ftwo)$, which the ISA realizes by Corollary~\ref{cor:phys-generation} at constant depth $c_{\mathrm{GL}}(Q_0)$ and at zero depth, as a permutation automorphism, for the $\mathbb{Z}_4$ and $\mathrm{AGL}(1,5)$ cores. Apply one final aggregate cross-block entangling gate, now coupling the logical qubits as $CZ(a_0, b_0)\cdot CZ(a_1 + a_0, b_1)$, followed by inversion of the basis change. The net entangling operation is then:

$$ \big[\,CZ(a_0, b_0)\, CZ(a_1, b_1)\,\big]\cdot\big[\,CZ(a_0, b_0)\, CZ(a_1, b_1)\, CZ(a_0, b_1)\,\big] = CZ(a_0, b_1), $$

where we expanded $CZ(a_1 + a_0, b_1) = CZ(a_1, b_1)\cdot CZ(a_0, b_1)$ so that each of $CZ(a_0, b_0)$ and $CZ(a_1, b_1)$ appears twice and cancels, synthesizing the proper gate. The two aggregate layers thus realize the addressable $CZ(a_0, b_1)$ at total depth $\le 2 + 3\,c_{\mathrm{GL}}(Q_0)$ in general, and $2 + 2\,c_{\mathrm{GL}}(Q_0)$ here since the first aggregate needs no basis change, depth $2$ exactly for the $\mathbb{Z}_4$ and $\mathrm{AGL}(1,5)$ cores. This sequence exists for all Quantum Logic Codes by the arguments above establishing the preservation of the logical ISA across tiling and concatenation composition operations.

\section{Conclusion}

This work presents a detailed theory of transversal logic in high-rate stabilizer codes and a refined analysis of such logic in CSS codes. It presents a unified lower bound on circuit depths required for any stabilizer code to reach the full logical Clifford group, combining bounds on information-transfer limitations with entropic operator-specification requirements. It further outlines the concrete requirements for $W$-local gates that preserve codespaces in CSS codes and that participate in addressable logic, which forms the basis for a ladder of logical reach. This theory allows us to construct novel gates more efficiently, and we present a novel construction in the rotated surface code that was found leveraging the reductions in search complexity given by these requirements. 

We also design and present a family of codes we call \emph{Quantum Logic Codes} that is designed specifically to support fast transversal Clifford logic. The construction is relatively flexible, and incorporates the selection of an appropriate \emph{core} code, and composes via tiling and concatenation with a self-dual doubly-even inner code to produce large instances. The family has asymptotic parameters $[[n, \sqrt{n}, \Theta(n^{\beta})]]$ where $\beta$ is shown in one instance to be $\beta \approx 0.2823$.
The code construction can expand to form large-scale encoded qubit systems while simultaneously able to reach high code distances, all while preserving a constant-depth 2-local transversal basis that generates the full logical Clifford group. It does this through tiling and concatenating a dense logical core code, and we prove how this family construction procedure preserves the logical ISA of any such core. 

\begingroup
\renewcommand{\addcontentsline}[3]{}
\begin{acknowledgements}
We would like to thank Rui Chao for in-depth discussions and for encouraging the development of the 2D-toric code gates. We also want to thank many others in the NVIDIA quantum error correction group for collaborations. We thank Krysta Svore for creating a research environment that made this work possible. Throughout this work, AI tools were used to facilitate development, aid brainstorming, and mechanically verify and revise the proof structures. All of this work was completed while at NVIDIA, relying solely on information from the cited references and material learned during this affiliation.
\end{acknowledgements}
\endgroup

\bibliographystyle{unsrtnat}
\bibliography{refs}
\newpage
\makeatletter
\let\orig@addcontentsline\addcontentsline
\renewcommand{\addcontentsline}[3]{%
  \def\@tempa{#2}\def\@tempb{subsection}%
  \ifx\@tempa\@tempb\else\orig@addcontentsline{#1}{#2}{#3}\fi}
\makeatother
\section{Supplementary Material}
\label{sec:supp}

In this Supplementary Material, we provide formal rigorous proofs of propositions, lemmas, theorems, and constructions present in the main text. We go through this in a sequence matching the presentation of the main material, beginning with the scaling laws dictating minimum circuit depths for the synthesis of the full logical Clifford group in stabilizer codes.

\subsection*{Stabilizer Code Transversal Clifford Circuit Depth Lower Bounds}

\begin{lemma}[Pauli radius lower bound]
\label{lemma:pauli-weight}
For a stabilizer code $Q$ of parameters $[[n, k, d]]$, the \emph{Pauli radius} $w^{\star} := \max_{\overline{P}\ne\mathrm{\overline{I}}}\,\min_{P\in\overline{P}}\mathrm{wt}(P)$, the largest, over all nontrivial logical Pauli classes $\overline{P}$, of the class's minimum-weight representative, satisfies 
$$ w^{\star} \;\ge\; \Omega\bigg(\frac{k}{\log{n}}\bigg). $$
\end{lemma}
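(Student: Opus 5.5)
We argue by counting. Every nontrivial logical class has a minimum-weight representative of weight at most $w^{\star}$, and distinct classes need distinct representatives. There are $4^{k}-1$ nontrivial logical Pauli classes, indexed by $\mathbb{F}_2^{2k}\setminus\{0\}$ via the quotient $S^{\perp}/S$. Choosing one minimum-weight representative per class gives an injective map from nontrivial classes into the set of $n$-qubit Paulis (modulo phase) of weight at most $w^{\star}$. Two distinct classes cannot share a representative, since a Pauli determines its coset of $S$ uniquely. Hence
$$
4^{k}-1 \;\le\; \sum_{j=0}^{w^{\star}} \binom{n}{j}3^{j}.
$$

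Next we bound the right-hand side. If $w^{\star}\ge n/2$, the claim is immediate because $k\le n$ gives $k/\log n\le n/\log n$, so that case is trivial. Otherwise we use the crude estimate
$$
\sum_{j\le w}\binom{n}{j}3^{j}\le (w+1)\,(3n)^{w}\le (3n)^{w+1},
$$
which yields
$$
2k-1 \;\le\; \log\big(4^{k}-1\big)\;\le\;(w^{\star}+1)\log(3n).
$$
This gives $w^{\star}\ge (2k-1)/\log(3n)-1$. For $k\ge 2$ (and $n\ge 2$) this is $\Omega(k/\log n)$. For $k=1$ the bound $w^{\star}\ge d\ge 1$ already suffices, since $k/\log n\le 1$ then.

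This is essentially the quantum Hamming/volume counting argument, applied at radius $w^{\star}$ to the whole logical group rather than to errors at radius $d/2$. The only subtle point is the injectivity step: minimum-weight representatives must be chosen as actual Pauli operators rather than symplectic vectors modulo $S$. I would state this explicitly, noting that the map $P\mapsto P\,S$ sends each Pauli in $S^{\perp}$ to exactly one coset. Everything else is routine estimation of binomial sums.

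A sharper constant is available through the entropy bound $\sum_{j\le w}\binom{n}{j}3^{j}\le 2^{nH(w/n)+w\log 3}$. It is not needed, however, since the statement is only up to $\Omega(\cdot)$ and $\log n$ already absorbs these factors.
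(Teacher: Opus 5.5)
Your argument is the paper's counting argument: inject the logical classes into the Hamming ball of radius $w^{\star}$, then bound the ball's volume. The paper keeps the identity class and counts $4^{k}$; you drop it and count $4^{k}-1$. The injectivity step is fine.

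\textbf{The gap.} Your claim that $(2k-1)/\log(3n)-1=\Omega(k/\log n)$ for every $k\ge 2$ is false. The additive $-1$ comes from your estimate $(w+1)(3n)^{w}\le(3n)^{w+1}$. It makes the bound negative whenever $2k-1<\log(3n)$, for instance for any fixed $k\ge2$ as $n\to\infty$. So the trivial bound $w^{\star}\ge d\ge 1$, which you reserved for $k=1$, is needed on the whole range where $k$ is at most of order $\log n$, not only at $k=1$.

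\textbf{Repairs.} Either of the following works.
\begin{itemize}
\item Set $x=(2k-1)/\log(3n)$ and combine your two bounds as $w^{\star}\ge\max\{1,\,x-1\}\ge x/2$. This gives $w^{\star}\ge (2k-1)/(2\log(3n))\ge k/(2\log(3n))=\Omega(k/\log n)$.
\item Avoid the additive loss altogether. Since $\binom{n}{j}3^{j}\le\binom{w}{j}(3n)^{j}$ for $0\le j\le w$, the ball has volume at most $(1+3n)^{w}$. This yields $w^{\star}\ge 2k/\log(1+3n)$ directly, with no case analysis.
\end{itemize}
The paper takes a route like the second. It uses $\sum_{j\le w}\binom{n}{j}\le(en/w)^{w}$ to get $w^{\star}\log(3en/w^{\star})\ge 2k$, hence $w^{\star}\ge 2k/\log(3en)$.

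Your case split at $w^{\star}\ge n/2$ is harmless but unnecessary. Your crude estimate already holds for every $w\le n$.
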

\begin{proof}
By definition of $w^{\star}$ every one of the $2^{2k}$ logical Pauli classes has a representative of weight at most $w^{\star}$, and representatives of distinct classes are distinct Pauli operators. So the ball of Pauli operators of weight at most $w^{\star}$ must hold at least $2^{2k}$ of them. There are exactly $\binom{n}{j}3^{j}$ Paulis of weight $j$, which we can see by first choosing the $j$ nonidentity positions, then one of $X,Y,Z$ on each. So: 
$$ \sum_{j=0}^{w^{\star}}\binom{n}{j} 3^j \;\geq\; 2^{2k}. $$

Since $3^{j}\le 3^{w}$ for $0\le j\le w$, and using the standard bound $\sum_{j=0}^{w}\binom{n}{j}\le (en/w)^{w}$ (for $1\le w\le n$), we have
$$ \sum_{j=0}^{w}\binom{n}{j} 3^j \;\le\; 3^{w}\sum_{j=0}^{w}\binom{n}{j} \;\le\; \Big(\frac{3en}{w}\Big)^{w}. $$
At $w=w^{\star}$ this gives $(3en/w^{\star})^{w^{\star}}\ge 2^{2k}$, hence $w^{\star}\log(3en/w^{\star})\ge 2k$. Using $\log(3en/w^{\star})\le\log 3en$ for $w^{\star}\ge1$,
$$  w^{\star} \;\geq\; \frac{2k}{\log{3ne}} \;=\; \Omega\bigg(\frac{k}{\log n}\bigg). $$

\end{proof}

Now we use the heavy Pauli weights to prove a limit on the speed of information transfer through a stabilizer code.

\begin{theorem}[Information Transfer Limit]
\label{theorem:pauli-spread}
For a stabilizer code $Q$ of parameters $[[n, k, d]]$, with Pauli radius $w^{\star} = \Omega\bigg(\frac{k}{\log{n}}\bigg)$, any $W \ge 2$-local circuit implementing a logical Clifford that maps a minimum-weight logical Pauli class of weight $d$ to a heaviest logical Pauli class with lightest representative weight $w^{\star}$ has depth:

$$ D \geq \log_W{\frac{w^{\star}}{d}}. $$

\end{theorem}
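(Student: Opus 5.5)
The plan is a light-cone argument. Let $U = V_D V_{D-1}\cdots V_1$ be the circuit, where each $V_t=\bigotimes_p V_{t,p}$ is a $W$-local layer (Definition~\ref{def:wlocal}) with disjoint blocks of size at most $W$. Suppose the logical Clifford $\Phi(U)$ sends a class $\overline{P}$ whose minimum-weight representative has weight $d$ to the heaviest class $\overline{P}'$, whose lightest representative has weight exactly $w^{\star}$. Fix a representative $P\in\overline{P}$ with $\mathrm{wt}(P)=d$. Because $U$ preserves the codespace and realizes the logical action, $UPU^\dagger$ is a Pauli operator (Clifford conjugation) that lies in the class $\overline{P}'$. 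Therefore $\mathrm{wt}(UPU^\dagger)\ge w^{\star}$, since $w^{\star}$ is the minimum weight in that class.

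The core step is a one-layer support-growth bound: for any Pauli $R$ and any $W$-local layer $V$,
\[
\mathrm{supp}(VRV^\dagger)\subseteq\bigcup_{p:\,p\cap\mathrm{supp}(R)\neq\emptyset} p .
\]
This holds because $V_p$ acts trivially by conjugation on the factor of $R$ restricted to $p$ whenever that factor is the identity. Since the blocks are disjoint and each has at most $W$ qubits, at most $\mathrm{wt}(R)$ blocks meet the support, so
\[
\mathrm{wt}(VRV^\dagger)\le W\cdot\mathrm{wt}(R).
\]
Iterating this over the $D$ layers gives
\[
w^{\star}\le \mathrm{wt}(UPU^\dagger)\le W^{D}d .
\]
Taking $\log_W$ yields $D\ge\log_W(w^{\star}/d)$. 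Since $D$ is an integer, we may also take the ceiling, and the positive part handles the case $w^{\star}\le d$, which matches Corollary~\ref{corollary:master-bound}. The hypothesis $w^{\star}=\Omega(k/\log n)$ comes from Lemma~\ref{lemma:pauli-weight}. It is used only to interpret the bound, not to prove it.

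The one point needing care is the layer-by-layer bookkeeping. The intermediate operators $V_t\cdots V_1 P (V_t\cdots V_1)^\dagger$ need not be minimal representatives of any class, and this causes no problem: the weight bound tracks the actual physical Pauli, and only the final comparison uses minimality in $\overline{P}'$. Codespace preservation of individual layers is not even required; it suffices that the full circuit $U$ realizes the logical map. I expect no substantial obstacle. The main subtlety is that the statement must start from a weight-$d$ representative of the source class, and so the argument needs the source class to be a minimum-weight class whose image is the heaviest class. That is exactly what the theorem hypothesizes, and it is the reason the bound is phrased for circuits realizing such a map.
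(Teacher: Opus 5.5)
Your proposal is correct and follows essentially the paper's own light-cone argument: fix a weight-$d$ representative, bound its support growth by a factor of $W$ per layer, and compare $dW^D$ with the minimum weight $w^{\star}$ of the target class. The differences are only presentational. You state the per-layer support bound explicitly. The paper additionally invokes transitivity of $\Sp(2k,\ftwo)$ on the nonzero logical classes to guarantee that such a logical Clifford exists; that step matters for Corollary~\ref{corollary:master-bound} rather than for the theorem as stated.
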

\begin{proof}
The projective logical Clifford group $\Sp(2k,\ftwo)$ acts transitively on the nonzero logical Pauli classes, so it contains an operation $\overline{U}$ carrying a minimum-weight $d$ logical class to the heaviest logical class, whose lightest representative has weight $w^{\star} = \Omega(k/\log n)$ by lemma~\ref{lemma:pauli-weight}. A depth-$D$ layer of $W$-local gates acts by growing the support of any Pauli by a factor of at most $W$ per layer, so it maps a weight-$d$ representative to a representative of $\overline{U}$'s image of weight at most $d \cdot W^D$. Since every representative of the heaviest class has weight at least $w^{\star}$, realizing $\overline{U}$ forces $d \cdot W^D \geq w^{\star}$, i.e. 

$$ D \geq  \log_W \frac{w^{\star}}{d} \;\ge\; \log_W \frac{c\,k}{d \log n}, $$
the second inequality by Lemma~\ref{lemma:pauli-weight} ($w^{\star}\ge c\,k/\log n$).
\end{proof}

The invariants in Theorem~\ref{theorem:pauli-spread} coincide with the \emph{disjointness} framework in~\cite{jochymoconnor2018disjointness}: the Pauli radius $w^{\star} = \max_{\overline P}\min\{|g| : g\in\overline P\}$ is referred to in \cite{jochymoconnor2018disjointness} as \emph{maximum distance} $d_\uparrow$ (the largest, over logical classes $\overline P$, of the class's minimum-weight representative), and the code distance $d$ is their \emph{minimum distance} $d_\downarrow$. The bound above is the level-$1$ (Pauli-to-Pauli) specialization of their support-spreading lemma, and the ratio $w^{\star}/d = d_\uparrow/d_\downarrow$ is precisely the quantity that governs which levels of the Clifford hierarchy a shallow transversal circuit can reach. Their disjointness $\Delta(Q)\geq 1$ obeys the structural bound
$$ \Delta(Q) \;\leq\; \min\Bigl\{\,d,\ \tfrac{n}{w^{\star}}\,\Bigr\}, $$
so $w^{\star} \leq n/\Delta(Q) \leq n$ and the spread term satisfies $\log_W(w^{\star}/d) \leq \log_W(n/d)$, a ceiling that is independent of $k$.

This relationship captures one key element of the structure of logic inside stabilizer codes, albeit a relatively weak one. $W$-local layers have the ability to connect qubit blocks in size exponential in the layer count, which can then quickly cover all Pauli operators. This turns out to only be a bounding behavior in cases of low-rate codes, as another relationship begins to dominate once $k = \omega(\sqrt{n \log n})$. 

\begin{lemma}[Codespace-Preserving $W$-Local Layer Bound]
\label{lemma:csp-bound}
For any $[[n,k,d]]$ stabilizer code, assuming $W\mid n$ (padded otherwise), the number of codespace-preserving $W$-local layers is bounded above by the total number of $W$-local layers,
$$ N_{n,W} \;\leq\; P_{n,W}\,|\Sp(2W,\mathbb{F}_2)|^{n/W}, \qquad P_{n,W}=\frac{n!}{(W!)^{n/W}(n/W)!}, $$
the number of partitions of the $n$ qubits into $n/W$ blocks of size $W$, times the number $|\Sp(2W,\mathbb{F}_2)|$ of $W$-qubit Cliffords available in each block; layers whose blocks have fewer than $W$ qubits contribute at most a further factor polynomial in $n$ (one term per block-size profile), absorbed in the $\oh(n)$ term below. In particular,
$$ \log N_{n,W} \;\leq\; \log P_{n,W} + \frac{n}{W}\log|\Sp(2W,\mathbb{F}_2)| \;=\; \frac{W-1}{W}\,n\log n + \oh(n) \;=\; \oh(n\log n). $$
\end{lemma}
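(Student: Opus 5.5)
The plan is a direct counting argument: every codespace-preserving $W$-local layer is, in particular, a $W$-local layer, so it suffices to bound the number of all $W$-local Clifford layers, ignoring codespace preservation. By Definition~\ref{def:wlocal}, such a layer is determined by two pieces of data. The first is a partition of $[n]$ into blocks of size at most $W$. The second is a choice of element of $\Sp(2|p|,\ftwo)$ on each block $p$, working projectively, since Pauli frames and phases do not affect the symplectic action. Assuming $W\mid n$ and full blocks, the number of partitions into $n/W$ unordered blocks of size $W$ is the multinomial $n!/((W!)^{n/W}(n/W)!)=P_{n,W}$. Each of the $n/W$ blocks independently gets at most $|\Sp(2W,\ftwo)|$ choices. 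This gives $N_{n,W}\le P_{n,W}|\Sp(2W,\ftwo)|^{n/W}$, noting that distinct (partition, gate) pairs may give the same layer, which only helps the upper bound. For smaller blocks, I will use that a block of size $w<W$ carrying $\Sp(2w)$ embeds as an element of $\Sp(2W)$ on a padded block (pad with ancilla or identity). Alternatively, I will sum over block-size profiles, of which there are at most polynomially many in $n$ for fixed $W$. Either way this contributes only an additive $\oh(n)$ (indeed $\oh(\log n)$) to the logarithm.

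The asymptotic estimate then follows from Stirling. First, $\log n! = n\log n - n\log e + \oh(\log n)$, and $\log (n/W)! = (n/W)\log(n/W) + \oh(n)$. Also $\log (W!)^{n/W} = (n/W)\log W! = \oh(n)$ for fixed $W$. Hence
\[
\log P_{n,W} = n\log n - \tfrac{n}{W}\log n + \oh(n) = \tfrac{W-1}{W}\,n\log n + \oh(n).
\]
Next, $|\Sp(2W,\ftwo)| = 2^{W^2}\prod_{i=1}^{W}(4^i-1)\le 2^{2W^2+W}$, so $\tfrac{n}{W}\log|\Sp(2W,\ftwo)|\le n(2W+1)=\oh(n)$ for fixed $W$. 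Summing the two pieces gives the displayed bound and the final $\oh(n\log n)$.

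The only step needing care is the treatment of non-full blocks and the non-divisible case. The statement explicitly defers these to padding and a polynomial multiplicity factor, so I would handle them crudely rather than exactly. Concretely, I will count ordered set partitions with all block sizes at most $W$ by assigning each qubit a block label among at most $n$ labels. That is too crude, since it gives $n^n$. So instead I will bound by summing $P$ over profiles, where each term is at most the full-block multinomial times $\mathrm{poly}(n)$.

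If $W$ is allowed to grow with $n$, the $\oh(n)$ terms become $\oh(nW)$. I will state the result for fixed $W$, as the paper's later use (with $W=2$) requires.
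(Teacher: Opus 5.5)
Your main count is the paper's argument: codespace-preserving layers are a subset of all $W$-local layers, and a full-block layer is fixed by one of the $P_{n,W}$ partitions plus an element of $\Sp(2W,\ftwo)$ on each block. Stirling then gives $\tfrac{W-1}{W}n\log n+\oh(n)$.

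\textbf{The gap.} It sits in the step you flagged yourself. Your final plan asserts that every block-size profile contributes at most the full-block count times $\mathrm{poly}(n)$, so the correction to the logarithm is only $\oh(\log n)$. For the profile terms, which count partition--gate pairs, this is false. For $W=2$, the profile with $s$ singletons and $(n-s)/2$ pairs contributes
\[
\frac{n!\,6^{s}\,720^{(n-s)/2}}{s!\,2^{(n-s)/2}\,((n-s)/2)!}
\;=\;\frac{(n/2)!}{((n-s)/2)!\,s!}\,10^{-s/2}\cdot P_{n,2}\,|\Sp(4,\ftwo)|^{n/2}.
\]
The prefactor is roughly $(n/20)^{s/2}/s!$, which at $s\approx\sqrt{n/20}$ equals $e^{\Theta(\sqrt n)}$. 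For $W\ge3$ the same thing happens: trading groups of $W-1$ full blocks for $W$ blocks of size $W-1$, about $n^{(W-1)/W}$ times, gives an excess of $e^{\Omega(n^{(W-1)/W})}$.

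Your identity-padding alternative also fails for $W\ge3$. Take $W=3$, $n=6$: the layer $\mathrm{CNOT}_{12}\otimes\mathrm{CNOT}_{34}\otimes\mathrm{CNOT}_{56}$ factors over no partition into two triples, because every such partition splits one entangled pair.

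The paper's own justification has the same defect. Its claim that $|\Sp(2j,\ftwo)|^{1/j}/(j!)^{1/j}$ increases with $j$ ignores the $1/m_j!$ factors from relabeling equal-size blocks, and those factors are exactly where the excess comes from.

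\textbf{The repair.} For $W=2$, your padding idea is exact. The unpaired qubits are even in number, so you can pair them up and merge each $V_u\otimes V_v$ into an element of $\Sp(4,\ftwo)$. This gives $N_{n,2}\le P_{n,2}|\Sp(4,\ftwo)|^{n/2}$ with no extra factor at all.

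For general fixed $W$, bound each profile directly. Take $(m_1,\dots,m_W)$ with $\sum_j jm_j=n$ and $M=\sum_j m_j\ge n/W$. Use $m!\ge(m/e)^m$, Jensen's inequality for $t\log t$, and $\log|\Sp(2j,\ftwo)|\le 2j^2+j\le(2W+1)j$. Then
\[
\log T_m\;\le\;\log n!-\sum_j\log m_j!+\sum_j m_j\log|\Sp(2j,\ftwo)|
\;\le\; n\log n-M\log\tfrac{M}{We}+(2W+1)n .
\]
Since $M\log\frac{M}{We}$ is increasing for $M\ge W$, this is at most $\tfrac{W-1}{W}n\log n+\oh(n)$ once $n\ge W^2$. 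There are at most $(n+1)^W$ profiles, so summing them costs only $\oh(\log n)$ more. This proves $\log N_{n,W}\le\tfrac{W-1}{W}n\log n+\oh(n)$, which is all that Theorem~\ref{theorem:clifford-entropy} uses.

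The correction you obtain this way is $\oh(n)$, not $\oh(\log n)$. Your argument, like the paper's, does not establish the ``further polynomial factor'' refinement for $W\ge3$.
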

\begin{proof}
A $W$-local layer is specified by a partition of the $n$ qubits into $n/W$ blocks of size $W$, of which there are $P_{n,W}$, together with one $W$-qubit Clifford modulo Pauli on each block, i.e.\ an element of $\Sp(2W,\mathbb{F}_2)$. This gives $P_{n,W}\,|\Sp(2W,\mathbb{F}_2)|^{n/W}$ layers in total; layers with smaller blocks are counted by the same expression for their block-size profile, each profile contributes less (per qubit, $|\Sp(2j,\mathbb{F}_2)|^{1/j}/\,(j!)^{1/j}$ increases with $j$), and there are only polynomially many profiles; the codespace-preserving layers are a subset, so their number is at most this product up to the stated polynomial factor. The entropy estimate follows from $\log P_{n,W}=\frac{W-1}{W}n\log n+\oh(n)$.
\end{proof}

\begin{theorem}[Clifford Entropy Bound]
\label{theorem:clifford-entropy}
For a stabilizer code $Q$ with permutation automorphism group $\Aut_{\mathrm{perm}}(Q)$, write $\rho(\Aut_{\mathrm{perm}}(Q))\le\Sp(2k,\ftwo)$ for its \emph{logical image} (the group of distinct logical Cliffords induced by physical permutation automorphisms. For CSS codes this is the embedding $V_\pi\mapsto\diag(V_\pi,V_\pi^{-\top})$ of Proposition~\ref{prop:pi-logical}). Let $N_{n,W}$ denote the number of distinct codespace-preserving $W$-local gate layers for $W \geq 2$ supported on the code. Then, for fixed $W$ and codes with $\log|\rho(\Aut_{\mathrm{perm}}(Q))| = o(k^2)$, the circuit depth $D$ required to implement the full logical Clifford group with codespace-preserving layers is at least (for unrestricted layers the same bound holds with $N_{n,W}$ replaced by the total layer count of Lemma~\ref{lemma:csp-bound}, with the same leading order):

$$ D \geq \log_{N_{n,W}} \frac{|\Sp(2k,\mathbb{F}_2)|}{|\rho(\Aut_{\mathrm{perm}}(Q))|} \geq \frac{2k^2 + k + c_k - \log |\rho(\Aut_{\mathrm{perm}}(Q))| }{\frac{W-1}{W} n \log n + 2n\left(W+\tfrac12 + \frac{c_W}{2W}\right) - \beta_W n + \oh(\log n)} = \Omega\bigg(\frac{k^2}{n \log n} \bigg).$$

\end{theorem}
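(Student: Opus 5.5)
The plan is a counting argument in the spirit of Lemma~\ref{lemma:csp-bound}, with the permutation automorphisms accounted for as free logical relabelings.

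\emph{Step 1 (reduction to distinct images).} Let $G=\Sp(2k,\ftwo)$ and $R=\rho(\Aut_{\mathrm{perm}}(Q))\le G$. If every logical Clifford can be implemented by $D$ codespace-preserving $W$-local layers, possibly followed by a permutation automorphism, then the map
\[
(V_1,\dots,V_D,\pi)\mapsto \Phi(V_D)\cdots\Phi(V_1)\,\rho(\pi)
\]
is onto $G$. The set of words $\Phi(V_D)\cdots\Phi(V_1)$ therefore meets every right coset of $R$ in $G$. Layers are counted with codespace preservation imposed, so there are at most $N_{n,W}$ choices per layer and at most $N_{n,W}^D$ distinct products in total. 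Hence
\[
N_{n,W}^{D}\ \ge\ [G:R]=\frac{|G|}{|R|},
\]
which gives the first inequality after taking $\log_{N_{n,W}}$. If automorphisms are not allowed to be appended, the bound only gets stronger, since then $N_{n,W}^D\ge|G|$. For unrestricted layers, the same argument goes through with $N_{n,W}$ replaced by the total count $P_{n,W}|\Sp(2W,\ftwo)|^{n/W}$ of Lemma~\ref{lemma:csp-bound}.

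\emph{Step 2 (numerator).} I will use the exact order
\[
|\Sp(2k,\ftwo)|=2^{k^2}\prod_{i=1}^k(4^i-1).
\]
Writing $\prod_{i=1}^k(1-4^{-i})=2^{c_k}$, where $c_k$ is a negative constant bounded uniformly in $k$ and converging to $\log\prod_{i\ge1}(1-4^{-i})$, this gives
\[
\log|G|=k^2+k(k+1)+c_k=2k^2+k+c_k.
\]
Subtracting $\log|R|$ then produces the numerator exactly as displayed.

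\emph{Step 3 (denominator).} I will bound $\log N_{n,W}$ via Lemma~\ref{lemma:csp-bound}. For the $P_{n,W}$ term, apply Stirling to $n!$ and $(n/W)!$ and use $\log W!$ to obtain
\[
\log P_{n,W}=\tfrac{W-1}{W}\,n\log n-\beta_W n+O(\log n),
\]
where $\beta_W$ collects the linear terms coming from $\log e$, $\log W$ and $\log W!$. For the Clifford term, apply the Step~2 formula at $k=W$:
\[
\tfrac{n}{W}\log|\Sp(2W,\ftwo)|=\tfrac{n}{W}\bigl(2W^2+W+c_W\bigr)=2n\bigl(W+\tfrac12+\tfrac{c_W}{2W}\bigr).
\]
The polynomial factor from smaller-block profiles contributes only $O(\log n)$.

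\emph{Step 4 (asymptotics).} Under the hypothesis $\log|R|=o(k^2)$, the numerator is $(2-o(1))k^2$. For fixed $W$, the denominator is $\Theta(n\log n)$. The ratio is therefore $\Omega(k^2/(n\log n))$.

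The main obstacle is bookkeeping rather than conceptual. The delicate points are making Step~1 rigorous when automorphisms may be interleaved between layers, not just appended at the end, and pinning down the Stirling constants $\beta_W$ and the $O(\log n)$ remainder. For interleaving, I would first try to show that an automorphism can be pushed past a layer by conjugation. The conjugate $\pi V\pi^{-1}$ is again a codespace-preserving $W$-local layer, because $\pi$ preserves $S$ and simply relabels the blocks. All automorphisms can then be collected at the end at no cost to the layer count.
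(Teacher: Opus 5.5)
Your proposal is correct and follows the paper's proof essentially step for step. The paper likewise bounds the number of reachable logical Cliffords by $|\rho(\Aut_{\mathrm{perm}}(Q))|\,N_{n,W}^{D}$, collapses interleaved automorphisms by conjugating each layer into another codespace-preserving $W$-local layer, and then substitutes $\log|\Sp(2k,\ftwo)|=2k^2+k+c_k$ and the Stirling expansion of Lemma~\ref{lemma:csp-bound}; your coset-index phrasing and explicit identification of $\beta_W$ are just slightly more careful bookkeeping of the same argument.
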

\begin{proof}

The full Clifford group $\Sp(2k,\ftwo)$ has order $|\Sp(2k,\ftwo)| = 2^{k^2} \prod_{i=1}^{k} (2^{2i}-1)$ (the order of the finite symplectic group~\cite{taylor1992classical,dickson1901linear}). This group then requires $\log |\Sp(2k,\ftwo)| = 2k^2 + k + c_k$ for $c_k = \sum_{i=1}^k \log(1-2^{-2i})$, a bounded negative constant ($c_k\to-0.538$ as $k\to\infty$), total bits to represent. The \emph{logical image} $\rho(\Aut_{\mathrm{perm}}(Q))$, the set of distinct logical Cliffords induced by physical permutation automorphisms, reduces the number of bits required to specify all operations of the logical Clifford group by carving them into equivalence classes. Distinct physical automorphisms that induce the same logical action are counted only once.

Each codespace-preserving $W$-local layer is one of at most $N_{n,W}$ distinct choices (Lemma~\ref{lemma:csp-bound}), so a depth-$D$ circuit is a length-$D$ word over an alphabet of size $\le N_{n,W}$: there are at most $N_{n,W}^{D}$ such circuits, hence at most $N_{n,W}^{D}$ distinct logical Cliffords reachable with $D$ layers. We treat permutation automorphisms as having no $W$-local layer cost, as they are a relabeling of the physical qubits and can be absorbed into classical control. A permutation automorphism $\pi \in \Aut_{\mathrm{perm}}(Q)$ preserves the stabilizer group and acts on the $k$ logical qubits as some logical Clifford $g_\pi \in \Sp(2k,\ftwo)$; the set of all such $g_\pi$ is exactly the logical image $\rho(\Aut_{\mathrm{perm}}(Q))$. Composing a depth-$D$ circuit realizing $g$ with each automorphism realizes $g_\pi g$ at the same depth, so this free action enlarges the reachable set by at most a factor $|\rho(\Aut_{\mathrm{perm}}(Q))|$. (Automorphisms interleaved between layers reduce to this case: conjugating a codespace-preserving $W$-local layer by a permutation automorphism yields another such layer, so any interleaved word collapses to $D$ layers followed by a single automorphism.) Synthesizing the entire logical Clifford group therefore requires
$$ |\rho(\Aut_{\mathrm{perm}}(Q))| \cdot N_{n,W}^{D} \;\ge\; |\Sp(2k,\ftwo)|. $$
Taking logarithms and solving for $D$ gives
$$ D \;\ge\; \frac{\log|\Sp(2k,\ftwo)| - \log|\rho(\Aut_{\mathrm{perm}}(Q))|}{\log N_{n,W}} \;=\; \log_{N_{n,W}} \frac{|\Sp(2k,\ftwo)|}{|\rho(\Aut_{\mathrm{perm}}(Q))|}. $$
For the explicit form, substitute the order count $\log|\Sp(2k,\ftwo)| = 2k^2 + k + c_k$, and expand $\log N_{n,W}$ via Lemma~\ref{lemma:csp-bound}: its leading $\tfrac{W-1}{W} n\log n$ is the Stirling estimate of $\log P_{n,W}$, while the linear terms collect the per-block Clifford factor $\tfrac{n}{W}\log|\Sp(2W,\ftwo)| = \tfrac{n}{W}(2W^2 + W + c_W)$ with $c_W = \sum_{i=1}^{W}\log(1-2^{-2i})$. Combine this together with the Stirling correction $-\beta_W n$ of $\log P_{n,W}$. Under the hypothesis $\log|\rho(\Aut_{\mathrm{perm}}(Q))|=o(k^2)$ we arrive at
$$ D \;\ge\; \Omega\!\left(\frac{k^2}{n\log n}\right). $$
\end{proof}

Corollary~\ref{corollary:master-bound} follows by taking the maximum of the bounds of Theorems~\ref{theorem:pauli-spread} and~\ref{theorem:clifford-entropy}.

\subsection*{$W$-Layer Transversal Gate Theory}

Now we proceed to the $W$-layer transversal gate theory. First we prove Proposition~\ref{prop:7b}

\textbf{Proposition~\ref{prop:7b}} (Admissible per-block set $\mathcal{G}_{\mathrm{diag}}$).
\textit{Let $Q$ be an indecomposable CSS code and let $\mathcal{V} = \bigotimes_p V_p$ be a codespace-preserving, addressable (Def.~\ref{def:addressable}) $2$-local single-layer Clifford. Then the layer is not a global sector exchange (not every $A_{V_p}$ is zero), and a block with \emph{trivial diagonal} $A_{V_p}=D_{V_p}=I_2$ lies in the phase/CZ set}
\begin{widetext}
$$
\mathcal{G}_{\mathrm{diag}} \;:=\; \Bigl\{\, V \in \Sp(4,\mathbb{F}_2) \;:\; A_V = D_V = I_2,\; B_V,C_V \in \Sym_2(\mathbb{F}_2),\; B_V C_V = 0 \,\Bigr\},\qquad |\mathcal{G}_{\mathrm{diag}}| = 18.
$$
\end{widetext}

\begin{proof}[Proof of Proposition~\ref{prop:7b}]
Let
\[
\mathcal{V}
\;=\;
\begin{pmatrix}
\bA & \bB\\
\bC & \bD
\end{pmatrix}
\in \Sp(2n,\mathbb{F}_2)
\]
be the block-diagonal symplectic matrix induced by the physical layer, where
the $p$th diagonal block is
\[
V_p
\;=\;
\begin{pmatrix}
A_p & B_p\\
C_p & D_p
\end{pmatrix}
\in \Sp(4,\mathbb{F}_2).
\]
For $a\in S_X$ and $b\in S_Z$, direct conjugation gives
\begin{align}
\mathcal{V}X^a\mathcal{V}^{\dagger}
  &\sim X^{\bA a}Z^{\bC a}, \label{eq:2local-x-stab-conj}\\
\mathcal{V}Z^b\mathcal{V}^{\dagger}
  &\sim X^{\bB b}Z^{\bD b}, \label{eq:2local-z-stab-conj}\\
\mathcal{V}X^aZ^b\mathcal{V}^{\dagger}
  &\sim X^{\bA a+\bB b}Z^{\bC a+\bD b}. \label{eq:2local-css-stab-conj}
\end{align}
Since the code is CSS, the $X$ and $Z$ components of a stabilizer must
lie in $S_X$ and $S_Z$, respectively.  Codespace preservation is
equivalent to
\begin{align}
\bA S_X &\subseteq S_X, &
\bC S_X &\subseteq S_Z, \label{eq:2local-css-pres-x}\\
\bB S_Z &\subseteq S_X, &
\bD S_Z &\subseteq S_Z. \label{eq:2local-css-pres-z}
\end{align}
Because the input $X^a$ and $Z^b$ were already stabilizers, the diagonal
conditions may be written in reduced form:
\begin{align}
(\bA-I_n)S_X &\subseteq S_X, &
\bC S_X &\subseteq S_Z, \label{eq:2local-reduced-x}\\
\bB S_Z &\subseteq S_X, &
(\bD-I_n)S_Z &\subseteq S_Z. \label{eq:2local-reduced-z}
\end{align}
Blockwise then, for every $a\in S_X$ and $b\in S_Z$, and where we write $a^{(p)}$ to be the restriction of vector $a$ to the $p$-th block coordinates (and $b$ likewise), we have:
\begin{align}
\bigoplus_p (A_p-I_2)a^{(p)} &\in S_X, &
\bigoplus_p C_p a^{(p)} &\in S_Z, \label{eq:2local-block-x}\\
\bigoplus_p B_p b^{(p)} &\in S_X, &
\bigoplus_p (D_p-I_2)b^{(p)} &\in S_Z. \label{eq:2local-block-z}
\end{align}
These are the two-local analog of the four Guyot--Jaques one-local support
conditions: the reduced diagonal blocks $A_p-I_2$ and $D_p-I_2$ are precisely
the Hadamard-type part of the physical Clifford action.

Singular diagonal blocks are excluded or controlled as follows. A layer in which every $A_p=0$ (a genuine sector exchange on every block, e.g.\ the global Hadamard) is excluded by Def.~\ref{def:addressable}: the image of $X^x$ for any representative $x$ of a fixed $\overline{X}_i$ has $X$-part $\bA x=0$, so fixing $\overline{X}_i$ forces $x\in S_X$, contradicting nontriviality. General singular blocks are controlled at the logical level by Lemma~\ref{lem:confine-singular} below.

For a block with \emph{trivial diagonal} $A_p=D_p=I_2$, substituting into the block symplectic identities
\eqref{eq:sp4-identities} gives
\begin{align}
B_p &= B_p^{\top}, &
C_p &= C_p^{\top}, &
B_p C_p &= 0. \label{eq:2local-admissible-symp}
\end{align}
Since $B_p$ and $C_p$ are symmetric, $C_pB_p=0$ is equivalent to
$B_pC_p=0$.  Therefore every trivial-diagonal admissible block lies in
$\mathcal{G}_{\mathrm{diag}}$.

Finally, count the elements.  A symmetric $2\times2$ matrix over
$\mathbb{F}_2$ has the form
\[
\begin{pmatrix}
x & y\\
y & z
\end{pmatrix},
\qquad x,y,z\in\mathbb{F}_2,
\]
so there are $8$ choices.  If $B_p=0$, then $C_p$ is arbitrary, giving $8$
classes.  If $\rank B_p=2$, then $B_pC_p=0$ forces $C_p=0$, and there are $4$
invertible symmetric choices for $B_p$.  If $\rank B_p=1$, there are $3$
choices for $B_p$, and for each such choice the condition $B_pC_p=0$ permits
$C_p=0$ and exactly one nonzero rank-one symmetric $C_p$ supported on
$\ker B_p$, giving $3\cdot 2=6$ classes.  Thus
\[
|\mathcal{G}_{\mathrm{diag}}|
\;=\;
8+4+6
\;=\;
18.
\]
A trivial-diagonal 2-local physical gate can be chosen from the following: a tensor product of single qubit gates $I, S, $ or $SHS$, a $CZ$ gate composed with some product of $I$ or $S$, a $CZ^{\sharp} = (H \otimes H) \cdot CZ \cdot (H \otimes H)$ type entangler composed with some product of $I$ or $SHS$, or the all-ones entangler $\mathcal{J}$. These are the algebraically allowed trivial-diagonal blocks, and which of them preserve the codespace of a given CSS code $Q$ is code-specific. These elements can be written as the set: $\{I, S, SHS\}^{\otimes 2} \cup CZ\cdot\{I, S\}^{\otimes 2} \cup CZ^{\sharp}\cdot\{I, SHS\}^{\otimes 2} \cup \{\mathcal{J}\}$.
\end{proof}

Blocks with invertible $A_p\ne I_2$ are not exchanges but sector-preserving \emph{basis changes}: the $\mathrm{SWAP}$ block $A_p=D_p=\bigl(\begin{smallmatrix}0&1\\1&0\end{smallmatrix}\bigr)$, $B_p=C_p=0$ is admissible. On $[[4,2,2]]$ the layer $\mathrm{SWAP}(1,2)\otimes I_{3,4}$ preserves the code, inducing a logical $\GL(2,\ftwo)$ relabeling of the two logical qubits. Such blocks supply the logical basis-change factor $\mathcal{A}$ of Proposition~\ref{prop:8}.

The two lemmas below regarding singular diagonal blocks and blocks whose diagonals are not inverse-transposes of one another control the remaining freedom at the logical level, for any codespace-preserving $W$-local layer on a CSS code. Both rest on the support structure of block-diagonal maps: an operator built per block vanishes outside the blocks on which it is nontrivial.

\begin{lemma}[Confinement of singular diagonal blocks]
\label{lem:confine-singular}
Let $\mathcal{V}=\bigoplus_p V_p$ be a codespace-preserving $W$-local layer on a CSS code $Q$, and let $H$ be the union of the supports of the blocks with singular $A_p$. If the logical block $A^{\mathrm{log}}=L_Z\bA L_X^{\top}$ is singular, then some nonzero element of $\ker H_Z$ is supported entirely within $H$. In particular $A^{\mathrm{log}}\in\GL(K,\ftwo)$ whenever no such element is confined to $H$.
\end{lemma}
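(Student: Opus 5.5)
The plan is to turn singularity of $A^{\mathrm{log}}$ into a concrete vector that $\bA$ sends into $S_X$. A pigeonhole argument on the finite space $S_X$ then pulls this back to a vector in $\ker\bA$. The point is that $\bA=\bigoplus_p A_p$ is block diagonal, so $\ker\bA=\bigoplus_p\ker A_p$ is automatically supported on the blocks with singular $A_p$, i.e.\ inside $H$.

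\emph{Step 1 (the induced logical map).} I first check that $\bA$ maps $\ker H_Z$ into itself and $S_X$ into $S_X$. The second inclusion is \eqref{eq:2local-css-pres-x}. For the first, take $x\in\ker H_Z$. Since $\mathcal V$ preserves $\mathcal S$, it also preserves the normalizer $S^{\perp}$, so $\mathcal V X^x\mathcal V^\dagger\sim X^{\bA x}Z^{\bC x}$ commutes with every element of $S$. For a CSS code, commutation with the $Z$-type stabilizers involves only the $X$-part, so $\bA x\in\ker H_Z$. Hence $\bA$ induces a linear map on $\ker H_Z/S_X\cong\ftwo^K$. Using $L_XL_Z^\top=\mathbf I_K$ to read off coordinates, the matrix of this map in the basis $L_X$ is $L_Z\bA L_X^\top$ up to a transpose, depending on the row/column convention. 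Singularity of $A^{\mathrm{log}}$ is insensitive to this transpose, so I obtain a nonzero $\xi\in\ftwo^K$ such that $x:=\xi^\top L_X$ (or its transpose) satisfies
\[
x\in\ker H_Z\setminus S_X,\qquad \bA x\in S_X .
\]
Fixing the transpose convention so that this statement is exactly right is the only bookkeeping in the proof, and I expect it to be the main place requiring care.

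\emph{Step 2 (dichotomy on $\bA|_{S_X}$).} There are two cases.
\begin{itemize}
\item If $\bA$ restricted to $S_X$ is not injective, there is a nonzero $t\in S_X$ with $\bA t=0$. Then $t\in\ker\bA$, so $t$ is supported in $H$. Also $t\in S_X\subseteq\ker H_Z$, so $t$ is already the required nonzero element of $\ker H_Z$ confined to $H$.
\item If $\bA|_{S_X}$ is injective, then since $S_X$ is finite and $\bA S_X\subseteq S_X$, this restriction is a bijection of $S_X$. So there is $t\in S_X$ with $\bA t=\bA x$, and then $y:=x+t$ satisfies $\bA y=0$. Thus $y\in\ker\bA$ is supported in $H$, and $y\in\ker H_Z$. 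Moreover $y\neq0$, because $y\equiv x\not\equiv 0 \pmod{S_X}$; in fact $y$ is a nontrivial logical $X$ operator confined to $H$.
\end{itemize}

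\emph{Step 3 (contrapositive).} In both cases we produce a nonzero element of $\ker H_Z$ supported in $H$, which proves the first claim. The final sentence is its contrapositive: if no nonzero element of $\ker H_Z$ is confined to $H$, then $A^{\mathrm{log}}$ is nonsingular, i.e.\ $A^{\mathrm{log}}\in\GL(K,\ftwo)$.

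The argument never uses $W=2$; it uses only block-diagonality of $\bA$, so it holds verbatim for $W$-local layers. Apart from the convention check in Step~1, the one step doing real work is the pigeonhole bijection on $S_X$ in Step~2.
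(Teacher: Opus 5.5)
Your proposal is correct and follows the paper's proof essentially step for step. It uses the same invariance of $\ker H_Z$ and $S_X$ under $\bA$, the same dichotomy on whether $\bA|_{S_X}$ is injective, and the same correction $x+t\in\ker\bA\subseteq\ftwo^{H}$ in the injective case; your pigeonhole on the finite set $S_X$ stands in for the paper's dimension count. The transpose bookkeeping you flag is harmless: in the paper's column convention $L_Z\bA L_X^{\top}$ is exactly the matrix of the induced map on $\ker H_Z/S_X$, so any nonzero $\bar u$ in its right kernel gives $x=L_X^{\top}\bar u$ directly.
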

\begin{proof}
Codespace preservation gives $\bA S_X\subseteq S_X$, and conjugation carries normalizer elements to normalizer elements, so $\bA\ker H_Z\subseteq\ker H_Z$; thus $A^{\mathrm{log}}$ is the induced map on $\ker H_Z/S_X$, and it is singular exactly when some $u\in\ker H_Z\setminus S_X$ has $\bA u\in S_X$. The kernel $\ker\bA=\bigoplus_p\ker A_p$ vanishes on every block with invertible $A_p$, so $\ker\bA\subseteq\mathbb{F}_2^{H}$. If $\ker\bA\cap S_X\ne0$, a nonzero stabilizer element is confined to $H$ and we are done. Otherwise $\bA|_{S_X}$ is injective, so $\bA S_X=S_X$ by dimension count; choose $s\in S_X$ with $\bA s=\bA u$. Then $u'=u+s$ is nonzero (its logical class equals that of $u$), lies in $\ker H_Z$ and in $\ker\bA\subseteq\mathbb{F}_2^{H}$: a nontrivial logical $X$ representative confined to $H$, of weight at least $d$.
\end{proof}

\begin{lemma}[Confinement of shear blocks]
\label{lem:confine-shear}
Let $\mathcal{V}$ be as above and let $H'$ be the union of the supports of the blocks with $A_p^{\top}D_p\ne I$. Then
$$ (C^{\mathrm{log}})^{\top}B^{\mathrm{log}} \;=\; L_X\,\bigl(\bA^{\top}\bD+I_n\bigr)\,L_Z^{\top}, $$
and if $(C^{\mathrm{log}})^{\top}B^{\mathrm{log}}\ne0$ then a nontrivial logical $Z$ representative is supported entirely within $H'$; in particular $|H'|\ge d$. Consequently $(C^{\mathrm{log}})^{\top}B^{\mathrm{log}}=0$ (equivalently $D^{\mathrm{log}}=(A^{\mathrm{log}})^{-\top}$ when $A^{\mathrm{log}}$ is invertible) whenever no nontrivial logical $Z$ representative is confined to $H'$.
\end{lemma}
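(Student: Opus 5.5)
The plan is to compute the four logical blocks of $\Phi(\mathcal{V})$ explicitly and then use the block symplectic identity \eqref{eq:sp4-identities}, which holds for the whole layer as well. I will use the conventions already implicit in Lemma~\ref{lem:confine-singular}. The $X$-part of the image of a logical $X$ representative is read off by pairing with $L_Z$, and its $Z$-part by pairing with $L_X$. This gives
\[
A^{\mathrm{log}}=L_Z\bA L_X^\top,\quad C^{\mathrm{log}}=L_X\bC L_X^\top,\quad B^{\mathrm{log}}=L_Z\bB L_Z^\top,\quad D^{\mathrm{log}}=L_X\bD L_Z^\top .
\]
Multiplying out,
\[
(C^{\mathrm{log}})^\top B^{\mathrm{log}} = L_X\bC^\top\,(L_X^\top L_Z)\,\bB L_Z^\top .
\]
The first task is to delete the middle factor $L_X^\top L_Z$, which is a projector onto logical $X$ classes.

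\emph{Removing the projector.} For any row $z$ of $L_Z$, the vector $v=\bB z$ is the $X$-part of the image of a normalizer element, so $v\in\ker H_Z$. Hence $v=L_X^\top(L_Zv)+s$ for some $s\in S_X$, which gives $L_X^\top L_Z v = v + s$. It then suffices to show $L_X\bC^\top s=0$, i.e.\ $(\bC L_X^\top)^\top s=0$. The columns of $\bC L_X^\top$ are $Z$-parts of images of logical $X$ operators, so they lie in $\ker H_X=S_X^\perp$. Their pairing with $s\in S_X$ therefore vanishes. This yields $(C^{\mathrm{log}})^\top B^{\mathrm{log}}=L_X\bC^\top\bB L_Z^\top$.

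The layer-level symplectic identity $\bA^\top\bD+\bC^\top\bB=I_n$, which is the direct sum of the per-block identities, then turns this into the claimed formula $L_X(\bA^\top\bD+I_n)L_Z^\top$.

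\emph{Confinement.} Write $M=\bA^\top\bD+I_n$. It is block diagonal and vanishes on every block with $A_p^\top D_p=I$, so $M$ maps into, and is supported on, $\mathbb{F}_2^{H'}$. If $L_XML_Z^\top\ne0$, pick a row $z$ of $L_Z$ with $w:=Mz\ne0$ such that $L_Xw\ne0$. I claim $w$ is a logical $Z$ representative, i.e.\ $w\in\ker H_X$. Write $w=z+\bA^\top(\bD z)$.
\begin{itemize}
\item $z\in\ker H_X$ because it is a row of $L_Z$.
\item $\bD z\in\ker H_X$ because it is the $Z$-part of the image of a normalizer element.
\item Codespace preservation gives $\bA S_X\subseteq S_X$, and dualizing gives $\bA^\top(S_X^\perp)\subseteq S_X^\perp=\ker H_X$.
\end{itemize}
So $w\in\ker H_X$. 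Since $L_Xw\ne0$, $w$ is not in $S_Z$, so it is a nontrivial logical $Z$ representative. It is supported in $H'$, and hence $|H'|\ge\mathrm{wt}(w)\ge d$.

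\emph{Final equivalence.} Apply the symplectic identity to $\Phi(\mathcal{V})\in\Sp(2K,\ftwo)$: $(A^{\mathrm{log}})^\top D^{\mathrm{log}}+(C^{\mathrm{log}})^\top B^{\mathrm{log}}=I_K$. If the second term vanishes, then $(A^{\mathrm{log}})^\top D^{\mathrm{log}}=I_K$, which for invertible $A^{\mathrm{log}}$ is exactly $D^{\mathrm{log}}=(A^{\mathrm{log}})^{-\top}$.

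The step I expect to need the most care is the projector removal, together with keeping the pairing conventions consistent. In particular, I must check that $L_X$ and $L_Z$ have been normalized so that the readout is valid modulo stabilizers, and that the transposes land on the correct sectors.
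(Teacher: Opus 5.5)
Your proof is correct and follows the paper's argument. In particular, the confinement step is identical: you show $w=(\bA^{\top}\bD+I_n)z\in\ker H_X$ from $\bD\ker H_X\subseteq\ker H_X$ and $\bA^{\top}\ker H_X\subseteq\ker H_X$, with $w$ supported in $H'$.

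The only variation is in the identity. You strip the projector $L_X^{\top}L_Z$ from $(C^{\mathrm{log}})^{\top}B^{\mathrm{log}}$ and apply the layer-level identity $\bA^{\top}\bD+\bC^{\top}\bB=I_n$. The paper instead strips $L_Z^{\top}L_X$ from $(A^{\mathrm{log}})^{\top}D^{\mathrm{log}}$ and then applies the logical symplectic identity together with $L_XL_Z^{\top}=I_K$. Because of this mirror-image choice, your version uses the symplecticity of $\Phi(\mathcal{V})$ only for the final equivalence.
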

\begin{proof}
The inverse layer $\mathcal{V}^{-1}$ is codespace-preserving and block-diagonal with blocks $V_p^{-1}=\bigl(\begin{smallmatrix}D_p^{\top}&B_p^{\top}\\C_p^{\top}&A_p^{\top}\end{smallmatrix}\bigr)$, so $\bA^{\top}S_Z\subseteq S_Z$; further $\bA^{\top}\ker H_X\subseteq\ker H_X$ follows from $\bA S_X\subseteq S_X$ by duality, and $\bD\ker H_X\subseteq\ker H_X$ is normalizer preservation. For the identity, $(A^{\mathrm{log}})^{\top}D^{\mathrm{log}}=L_X\bA^{\top}L_Z^{\top}\,L_X\bD L_Z^{\top}=L_X\bA^{\top}\bD L_Z^{\top}$: each column of $\bD L_Z^{\top}$ lies in $\ker H_X$, on which $L_Z^{\top}L_X$ acts as the identity modulo $S_Z$, and $L_X\bA^{\top}$ annihilates $S_Z$ (since $\bA^{\top}S_Z\subseteq S_Z$ and $L_XH_Z^{\top}=0$); the identity then follows from the logical symplectic identity $(A^{\mathrm{log}})^{\top}D^{\mathrm{log}}+(C^{\mathrm{log}})^{\top}B^{\mathrm{log}}=I_K$ together with $L_XL_Z^{\top}=I_K$. Now suppose $(C^{\mathrm{log}})^{\top}B^{\mathrm{log}}\ne0$. Then for some row $z$ of $L_Z$ (a $Z$ representative in $\ker H_X$) the vector $y:=(\bA^{\top}\bD+I)z$ has $L_Xy\ne0$, so $y\notin S_Z$; and $y\in\ker H_X$ by the invariances above, so $y$ is a nontrivial logical $Z$ representative. Finally $\bA^{\top}\bD+I=\bigoplus_p(A_p^{\top}D_p+I)$ vanishes on every block with $A_p^{\top}D_p=I$, so $y$ is supported within $H'$ and $\mathrm{wt}(y)\ge d$.
\end{proof}

We now prove Proposition~\ref{prop:8}, describing the logical actions reachable by admissible 2-local physical gates that can produce addressable logical gates.

\textbf{Proposition~\ref{prop:8}} (2-local logical reach). \textit{For any codespace-preserving, addressable (Def.~\ref{def:addressable}) $2$-local single-layer Clifford $V$ on an indecomposable CSS code $Q$ encoding $K$ logical qubits, suppose that no nonzero element of $\ker H_Z$ or of $\ker H_X$ is supported entirely within the union of the supports of the blocks with singular $A_p$ or with $A_p^{\top}D_p\ne I_2$ (the \emph{confinement hypothesis}; Lemmas~\ref{lem:confine-singular} and~\ref{lem:confine-shear}). Then the symplectic image lies in $\mathcal{A}\cdot\mathcal{G}_{2L}(K)$ for a logical basis change $\mathcal{A}\in\GL(K,\mathbb{F}_2)$; when the basis change is realized by a permutation automorphism, $\mathcal{A}\in\rho(\Aut_{\mathrm{perm}}(Q))$ and the image lies in the automorphism-augmented set $\mathcal{G}_{2L+\pi}(Q) = \rho(\Aut_{\mathrm{perm}}(Q))\cdot\mathcal{G}_{2L}(K)$. Here
$$
\mathcal{G}_{2L}(K) \;:=\; \biggl\{ \begin{pmatrix} \mathbf{I}_K & \boldsymbol{\mathcal{B}} \\ \boldsymbol{\mathcal{C}} & \mathbf{I}_K \end{pmatrix} \biggm|\; \boldsymbol{\mathcal{B}}, \boldsymbol{\mathcal{C}} \in \Sym_K,\; \boldsymbol{\mathcal{B}}\boldsymbol{\mathcal{C}} = 0 \biggr\}
$$
is the $2$-local set and $\rho(\Aut_{\mathrm{perm}}(Q))\le\GL(K,\mathbb{F}_2)$ is the logical automorphism action (Proposition~\ref{prop:pi-logical}). On a canonical basis $L_X, L_Z$, the set $\mathcal{G}_{2L}(K)$ is composed of $\boldsymbol{\mathcal{C}}_{ii} = 1 \Leftrightarrow \mathrm{\overline{S}}_i$, $\boldsymbol{\mathcal{C}}_{ij} = 1 \Leftrightarrow \mathrm{\overline{CZ}}_{ij}$, and dually $\boldsymbol{\mathcal{B}}_{ii} \Leftrightarrow \mathrm{\overline{SHS}}_i$, $\boldsymbol{\mathcal{B}}_{ij} \Leftrightarrow \mathrm{\overline{CZ}}^{\sharp}_{ij}$.
}

\begin{proof}[Proof of Proposition~\ref{prop:8}]
Suppose first that every block has trivial physical diagonal $A_p=D_p=I_2$, with $B_p,C_p\in\Sym_2(\mathbb{F}_2)$ (the phase/CZ blocks of Proposition~\ref{prop:7b}). Then $\mathcal{V}$ induces no logical basis change ($A^{\mathrm{log}}=D^{\mathrm{log}}=I_K$). The general case is treated at the end of the proof via Lemmas~\ref{lem:confine-singular} and~\ref{lem:confine-shear}. Assembling the blocks, the diagonal
sectors collapse to the identity, $\bigoplus_p A_p = \bigoplus_p D_p = I_n$,
so in symplectic form
\[
\mathcal{V}
\;=\;
\begin{pmatrix} I_n & B \\ C & I_n \end{pmatrix},
\qquad
B := \bigoplus_p B_p, \quad C := \bigoplus_p C_p \ \in\ \Sym_n(\mathbb{F}_2),
\]
with $B$ and $C$ inheriting symmetry from the block decomposition.

A logical operator with symplectic coordinates splitting as the $X$-part and $Z$-part
$(\bar{u}, \bar{v}) \in \mathbb{F}_2^K \times \mathbb{F}_2^K$ has physical
representative $(L_X^{\top}\bar{u},\, L_Z^{\top}\bar{v})$, and the logical
coordinates of any normalizer element $(a, b)$ are read back by pairing
against the dual representatives, $(a, b) \mapsto (L_Z a,\, L_X b)$. This
readout annihilates the stabilizer ambiguity, since
$L_Z H_X^{\top} = L_X H_Z^{\top} = 0$, and inverts the embedding, since
$L_Z L_X^{\top} = L_X L_Z^{\top} = I_K$. As $\mathcal{V}$ preserves the
stabilizer group it carries normalizer elements to normalizer elements, so
the induced logical action is the conjugation
\[
\Phi(\mathcal{V})
\;=\;
\begin{pmatrix} L_Z & 0 \\ 0 & L_X \end{pmatrix}
\begin{pmatrix} I_n & B \\ C & I_n \end{pmatrix}
\begin{pmatrix} L_X^{\top} & 0 \\ 0 & L_Z^{\top} \end{pmatrix}
\;=\;
\begin{pmatrix} L_Z L_X^{\top} & L_Z B L_Z^{\top} \\[2pt]
                L_X C L_X^{\top} & L_X L_Z^{\top} \end{pmatrix}
\;=\;
\begin{pmatrix} I_K & \boldsymbol{\mathcal{B}} \\
                \boldsymbol{\mathcal{C}} & I_K \end{pmatrix},
\]
where $\boldsymbol{\mathcal{B}} := L_Z B L_Z^{\top}$ and
$\boldsymbol{\mathcal{C}} := L_X C L_X^{\top}$ are the logical blocks.

Both blocks are symmetric:
$\boldsymbol{\mathcal{B}}^{\top} = L_Z B^{\top} L_Z^{\top} = \boldsymbol{\mathcal{B}}$,
and likewise $\boldsymbol{\mathcal{C}}^{\top} = \boldsymbol{\mathcal{C}}$, since
$B$ and $C$ are symmetric. Since $\mathcal{V} \in \Sp(2n, \mathbb{F}_2)$ preserves the CSS code, its
induced action preserves logical commutation relations, so
$\Phi(\mathcal{V}) \in \Sp(2K, \mathbb{F}_2)$. Reading the third symplectic
identity of~\eqref{eq:sp4-identities} in dimension $2K$, namely
$A^{\top} D + C^{\top} B = I_K$, and substituting $A = D = I_K$ gives
$\boldsymbol{\mathcal{C}}^{\top} \boldsymbol{\mathcal{B}} = 0$; transposing and
using symmetry, $\boldsymbol{\mathcal{B}} \boldsymbol{\mathcal{C}} =
(\boldsymbol{\mathcal{C}}^{\top} \boldsymbol{\mathcal{B}})^{\top} = 0$. Hence
$\Phi(\mathcal{V}) \in \mathcal{G}_{2L}(K)$ in the no-basis-change case.

In general, the confinement hypothesis gives $A^{\mathrm{log}}=L_Z\bA L_X^{\top}\in\GL(K,\mathbb{F}_2)$ by Lemma~\ref{lem:confine-singular}, and $(C^{\mathrm{log}})^{\top}B^{\mathrm{log}}=0$ by Lemma~\ref{lem:confine-shear}, hence $D^{\mathrm{log}}=(A^{\mathrm{log}})^{-\top}$ by the symplectic identity. Factoring $\Phi(\mathcal{V})=\diag(A^{\mathrm{log}},(A^{\mathrm{log}})^{-\top})\cdot M$ leaves $M$ symplectic with identity diagonal blocks, so $M\in\mathcal{G}_{2L}(K)$ by the same identities as in the trivial-diagonal case; the basis change is the automorphism action of Proposition~\ref{prop:pi-logical} when it is realized by a permutation automorphism. Therefore $\Phi(\mathcal{V})\in\mathcal{A}\cdot\mathcal{G}_{2L}(K)$, lying in $\mathcal{G}_{2L}(K)$ modulo the basis change.

The logical operation set is read off the off-diagonal blocks: a diagonal
entry $\boldsymbol{\mathcal{C}}_{ii} = 1$ sends
$\overline{X}_i \mapsto \overline{X}_i \overline{Z}_i$ ($\mathrm{\overline{S}}_i$); an
off-diagonal pair
$\boldsymbol{\mathcal{C}}_{ij} = \boldsymbol{\mathcal{C}}_{ji} = 1$ couples
$\overline{X}_i \leftrightarrow \overline{Z}_j$ and
$\overline{X}_j \leftrightarrow \overline{Z}_i$ ($\mathrm{\overline{CZ}}_{ij}$); and
dually for $\boldsymbol{\mathcal{B}}$ under $X \leftrightarrow Z$
($\mathrm{\overline{SHS}}_i$ and $\mathrm{\overline{CZ}}^{\sharp}_{ij}$).
\end{proof}

We now move to proving the invariance of the size of the set of logical operators with respect to transversal layer locality, Proposition~\ref{prop:Winvariance}.

\textbf{Proposition~\ref{prop:Winvariance}} ($W$-invariance of logical reach, $W \ge 2$) \textit{Let $Q$ be an indecomposable CSS code on $n$ qubits encoding $K$ logical qubits, and let $2 \le W < n$. Every codespace-preserving, addressable (Def.~\ref{def:addressable}) $W$-local single-layer Clifford $V$ has logical image $\Phi(V)$ given by
$$
\Phi(V)\;\in\;\mathcal{A}\cdot\mathcal{G}_{2L}(K),\qquad \mathcal{A}\in\GL(K,\ftwo),
$$
under the confinement hypothesis of Proposition~\ref{prop:8}, applied to the layer's blocks at width $W$; when the basis change $\mathcal{A}$ is realized by a permutation automorphism, $\Phi(V)\in\mathcal{G}_{2L+\pi}(Q)=\rho(\Aut_{\mathrm{perm}}(Q))\cdot\mathcal{G}_{2L}(K)$.
}
\begin{proof}[Proof of Proposition~\ref{prop:Winvariance}]
Write $V=\bigotimes_k g_k$, for $g_k\in\Sp(2W,\ftwo)$ acting on a disjoint block $\sigma_k\subseteq[n]$ with $|\sigma_k|\le W$, and assemble the layer $\mathcal{V}=\bigl(\begin{smallmatrix}\bA&\bB\\\bC&\bD\end{smallmatrix}\bigr)$ with $\bA=\bigoplus_k A_{g_k}$ and likewise for $\bB,\bC,\bD$.

\emph{Step 1 (lifted block identities).} Each $g_k$ obeys $g_k^\top\omega\,g_k=\omega$ in $W\times W$ blocks, giving $A_{g_k}^\top C_{g_k},\,B_{g_k}^\top D_{g_k}\in\Sym_W$ and $A_{g_k}^\top D_{g_k}+C_{g_k}^\top B_{g_k}=I_W$. This is the dimension-$W$ form of \eqref{eq:sp4-identities}. These are the only constraints $W$-locality imposes, and their form is independent of $W$.

\emph{Step 2 (trivial diagonal $\Rightarrow$ set, independent of $W$).} By Proposition~\ref{prop:8} the logical image is $\Phi(V)=\bigl(\begin{smallmatrix}A^{\mathrm{log}}&B^{\mathrm{log}}\\C^{\mathrm{log}}&D^{\mathrm{log}}\end{smallmatrix}\bigr)\in\Sp(2K,\ftwo)$, with $A^{\mathrm{log}}=L_Z\bA L_X^\top$, $D^{\mathrm{log}}=L_X\bD L_Z^\top$, $B^{\mathrm{log}}=L_Z\bB L_Z^\top$, $C^{\mathrm{log}}=L_X\bC L_X^\top$. If $A_{g_k}=D_{g_k}=I_W$ for every block then $\bA=\bD=I_n$, so $A^{\mathrm{log}}=D^{\mathrm{log}}=L_ZL_X^{\top}=I_K$. The dimension-$2K$ symplectic identity then forces $C^{\mathrm{log}\top}B^{\mathrm{log}}=0$, while $A^{\mathrm{log}\top}C^{\mathrm{log}},\,B^{\mathrm{log}\top}D^{\mathrm{log}}\in\Sym_K$ give $B^{\mathrm{log}},C^{\mathrm{log}}\in\Sym_K$, so $B^{\mathrm{log}}C^{\mathrm{log}}=0$ and $\Phi(V)\in\mathcal{G}_{2L}(K)$. The set $\mathcal{G}_{2L}(K)$ carries \emph{no} dependence on $W$, and its generators $\mathrm{\overline{S}}_i,\mathrm{\overline{CZ}}_{ij},\mathrm{\overline{SHS}}_i,\mathrm{\overline{CZ}}^\sharp_{ij}$ are already realized by $W{=}2$ blocks. A wider block only re-factorises the same logical target. This is the $W$-invariance: the ambient set is the same at every locality $2\le W<n$.

\emph{Step 3 (controlling the diagonal).} It remains to bound $A^{\mathrm{log}},D^{\mathrm{log}}$. Lemmas~\ref{lem:confine-singular} and~\ref{lem:confine-shear} hold for blocks of any width: under the confinement hypothesis, $A^{\mathrm{log}}=L_Z\bA L_X^\top\in\GL(K,\ftwo)$ and $(C^{\mathrm{log}})^{\top}B^{\mathrm{log}}=0$, so $D^{\mathrm{log}}=(A^{\mathrm{log}})^{-\top}$ by the symplectic identity. Factoring $\Phi(V)=\diag(A^{\mathrm{log}},(A^{\mathrm{log}})^{-\top})\cdot M$ leaves $M$ with identity diagonal blocks, which resides in $\mathcal{G}_{2L}(K)$ by Step~2. The basis change lies in the $\GL(K,\ftwo)$ automorphism action of Proposition~\ref{prop:pi-logical} whenever it is realized by a permutation automorphism, giving $\Phi(V)\in\mathcal{G}_{2L+\pi}(Q)$. Finally $W=n$ places the whole encoded Clifford in one block, so the single-layer reach enlarges strictly beyond the set and can hit all of $\Sp(2K,\ftwo)$.
\end{proof}

\subsection*{The Ladder of Logical Reach}
\label{sec:ladder}

Combining $1$-local, $2$-local, $W$-local ($W>2$), code automorphisms, and multi-layer composition organises the reachable logical actions into a single nested ladder ordered by physical resource. Throughout, $\mathcal{T}^{WL}_{\mathrm{addr}}(Q):=\{\Phi(V): V\text{ a codespace-preserving addressable }W\text{-local single-layer Clifford on }Q\}$ denotes the code-specific single-layer logical reach at locality $W$. $\mathcal{T}^{2L}_{\mathrm{addr}}(Q)$ is the $W=2$ case. $\mathcal{G}^{1L}_{\mathrm{diag}}(Q)$ and $\mathcal{T}^{1L+\pi}_{\mathrm{addr}}(Q)$ denote the addressable $1$-local diagonal reach and its enlargement by a permutation automorphism. The summit of the ladder is the full logical Clifford group, reached by multi-layer composition of $2$-local set elements.

We define two useful terms to study this comprehensively: a \emph{matching-realizable} logical action, and the single-layer 2-local automorphism realizability set. The former captures a sufficient set of constraints required of any 2-local transversal gate layer to guarantee that it preserves the codespace, can perform required logical operations, and is a proper 2-local transversal gate. These constraints are useful for understanding the complexity of this problem: CSS preservation and Reach are both treatable as one linear system with an affine solution space, while the Matching constraint introduces quadratic constraints that add combinatorial complexity to the problem space. The latter definition folds in the logical automorphisms to the reachable 2-local logical operations in a combined statement.

\begin{definition}[Matching-realizable]
\label{def:matching-realizable}
Fix a CSS code $Q=\CSS(H_X,H_Z)$ with logical basis $L=(L_X,L_Z)$. Encode a $2$-local diagonal $C$-side layer by $\alpha=(\vec c,\vec e)\in\mathbb{F}_2^{n}\times\mathbb{F}_2^{\binom n2}$, with $c_p$ denoting a physical $S$ on qubit $p$, and $e_{pq}$ a physical $CZ$ on $\{p,q\}$. This corresponds to the symmetric matrix $C(\alpha)\in\Sym_n(\mathbb{F}_2)$, $C(\alpha)_{pp}=c_p$, $C(\alpha)_{pq}=e_{pq}$. Encode the $X$-dual ($B$-side) layer dually by $\beta=(\vec c',\vec f)$ with matrix $C(\beta)$. A target $(\boldsymbol{\mathcal{B}},\boldsymbol{\mathcal{C}})\in\Sym_K\times\Sym_K$, optionally composed with $\mathcal{A}\in\Aut^{\mathrm{log}}_{\mathrm{perm}}(Q)$, is \emph{matching-realizable on $Q$} if there exist $(\alpha,\beta,\pi)$ with $\pi\in\Aut^{\mathrm{phys}}_{\mathrm{perm}}(Q)$ applied before the diagonal layer, and realizable on a \emph{single} block partition, satisfying:
\begin{enumerate}[label=\textbf{(\Alph*)},leftmargin=2.4em]
\item \textbf{Matching (M):} the off-diagonal supports of $\vec e$ and $\vec f$ are matchings on $[n]$ with each qubit in at most one $CZ$, resp.\ $CZ^{\sharp}$;
\item \textbf{CSS preservation (K):} $C(\alpha)\,S_X\subseteq S_Z$ and $C(\beta)\,S_Z\subseteq S_X$;
\item \textbf{Reach (R):} $L_X\,C(\alpha)\,L_X^\top=\boldsymbol{\mathcal{C}}$, $L_Z\,C(\beta)\,L_Z^\top=\boldsymbol{\mathcal{B}}$, and $V_\pi=\mathcal{A}$ (Prop.~\ref{prop:pi-logical}).
\end{enumerate}
\end{definition}

\begin{definition}[Single-layer 2-local $\rtimes$ automorphism realizability set]
\label{def:S2Lpi}
For a CSS code $Q$ with $K$ logical qubits, the set of symplectic images reachable by one addressable $2$-local single-layer Clifford $V$ composed with one $\pi \in \Aut^{\mathrm{phys}}_{\mathrm{perm}}(Q)$ is
\begin{widetext}
$$
\mathcal{S}^{\mathrm{gen}}_{2L+\pi}(Q) \;:=\; \biggl\{\, \begin{pmatrix} \mathcal{A} & \boldsymbol{\mathcal{B}}\, \mathcal{A}^{-\top} \\ \boldsymbol{\mathcal{C}}\, \mathcal{A} & \mathcal{A}^{-\top} \end{pmatrix} \biggm|\; \mathcal{A} \in \Aut^{\mathrm{log}}_{\mathrm{perm}}(Q),\; \boldsymbol{\mathcal{B}}, \boldsymbol{\mathcal{C}} \in \Sym_K(\mathbb{F}_2),\; \boldsymbol{\mathcal{B}}\boldsymbol{\mathcal{C}} = 0,\; \text{matching-realizable on } Q \,\biggr\}.
$$
\end{widetext}
This is the matching-realizable subset of the automorphism-augmented set $\mathcal{G}_{2L+\pi}(Q)$ of the summary of results. In particular $\mathcal{S}^{\mathrm{gen}}_{2L+\pi}(Q) \subseteq \mathcal{G}_{2L+\pi}(Q)$, and a single addressable $2$-local layer is matching-realizable through its per-block structure (Proposition~\ref{prop:7b}), so $\mathcal{T}^{2L}_{\mathrm{addr}}(Q) \subseteq \mathcal{S}^{\mathrm{gen}}_{2L+\pi}(Q)$, with equality to $\mathcal{G}_{2L+\pi}(Q)$ when $\Aut^{\mathrm{log}}_{\mathrm{perm}}(Q)$ acts matching-realizably.
\end{definition}

\begin{theorem}[Ladder of reachable logical sets]
\label{thm:ladder}
For an indecomposable CSS code $Q$ encoding $K$ logical qubits, and under the confinement hypothesis of Proposition~\ref{prop:8}, the reachable logical sets nest as
\begin{align*}
\{\mathrm{\overline{I}}\} \;&\subseteq\; \mathcal{G}^{1L}_{\mathrm{diag}}(Q) \;\subseteq\; \mathcal{T}^{2L}_{\mathrm{addr}}(Q) \\
\;&\subseteq\; \mathcal{T}^{WL}_{\mathrm{addr}}(Q) \;\subseteq\; \GL(K,\ftwo)\cdot\mathcal{G}_{2L}(K) \quad (2 \le W < n) \\
\;&\subsetneq\; \bigl\langle\mathcal{G}_{2L}(K)\bigr\rangle \;=\; \Sp(2K, \mathbb{F}_2),
\end{align*}
with $\mathcal{T}^{1L+\pi}_{\mathrm{addr}}(Q) \subseteq \mathcal{S}^{\mathrm{gen}}_{2L+\pi}(Q) \subseteq \mathcal{G}_{2L+\pi}(Q) \subseteq \GL(K,\ftwo)\cdot\mathcal{G}_{2L}(K)$ alongside.
\end{theorem}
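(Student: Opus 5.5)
The plan is to prove the chain one inclusion at a time, each by citing the result that governs that rung, and then to treat the side chain the same way.

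The first inclusion, $\{\mathrm{\overline{I}}\}\subseteq\mathcal{G}^{1L}_{\mathrm{diag}}(Q)$, holds because the identity layer is codespace-preserving and addressable. For $\mathcal{G}^{1L}_{\mathrm{diag}}(Q)\subseteq\mathcal{T}^{2L}_{\mathrm{addr}}(Q)$, view a $1$-local layer as a $2$-local layer. Pair the qubits arbitrarily, adding one singleton block if $n$ is odd. Each pair carries the block $V_i\otimes V_j$, which by Proposition~\ref{prop:1local-admissible} has $V_i,V_j\in\{I,S,SHS\}$ and therefore trivial diagonal. The physical operator is unchanged, so codespace preservation and addressability carry over.

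For $\mathcal{T}^{2L}_{\mathrm{addr}}(Q)\subseteq\mathcal{T}^{WL}_{\mathrm{addr}}(Q)$ with $W\ge2$, note that every block of size at most $2$ is also a block of size at most $W$. The inclusion $\mathcal{T}^{WL}_{\mathrm{addr}}(Q)\subseteq\GL(K,\ftwo)\cdot\mathcal{G}_{2L}(K)$ is exactly Proposition~\ref{prop:Winvariance} under the confinement hypothesis.

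The final step, strict inclusion in $\langle\mathcal{G}_{2L}(K)\rangle=\Sp(2K,\ftwo)$, has two parts.
\begin{itemize}
\item Equality is Lemma~\ref{lem:generation}. Containment holds because $\GL(K,\ftwo)$, embedded as $\diag(A,A^{-\top})$, lies in $\Sp(2K,\ftwo)$, and $\mathcal{G}_{2L}(K)$ is symplectic by the identities of Eq.~\eqref{eq:sp4-identities} in dimension $2K$.
\item Strictness is the main obstacle, and I would settle it by counting. I expect every element of $\GL\cdot\mathcal{G}_{2L}$ to have an invertible upper-left block $A$, since $A\cdot I=A$. Yet $\Sp(2K,\ftwo)$ contains $\mathrm{\overline{H}}_1$, whose upper-left block $\diag(0,1,\dots,1)$ is singular. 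Hence $\mathrm{\overline{H}}_1\notin\GL\cdot\mathcal{G}_{2L}$ for every $K\ge1$.
\end{itemize}
I need to check the ordering convention. Multiplying on the left by $\diag(A,A^{-\top})$ gives an upper-left block of $A$, so the argument goes through.

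For the side chain, $\mathcal{T}^{1L+\pi}_{\mathrm{addr}}(Q)\subseteq\mathcal{S}^{\mathrm{gen}}_{2L+\pi}(Q)$ follows from two facts. A $1$-local diagonal layer satisfies (M) trivially, with no $CZ$ edges. Its logical blocks are $L_X\diag(\vec c)L_X^\top$ and $L_Z\diag(\vec b)L_Z^\top$ by Proposition~\ref{prop:1local-reach}, so (K) and (R) hold with $\pi$ attached. The inclusion $\mathcal{S}^{\mathrm{gen}}_{2L+\pi}\subseteq\mathcal{G}_{2L+\pi}$ holds by definition. Finally $\mathcal{G}_{2L+\pi}(Q)=\rho(\Aut_{\mathrm{perm}}(Q))\cdot\mathcal{G}_{2L}(K)\subseteq\GL(K,\ftwo)\cdot\mathcal{G}_{2L}(K)$ because $\rho(\Aut_{\mathrm{perm}}(Q))\le\GL(K,\ftwo)$ by Proposition~\ref{prop:pi-logical}.

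One point in this step needs care. The matrix form in Definition~\ref{def:S2Lpi} is $\left(\begin{smallmatrix}\mathcal{A}&\mathcal{B}\mathcal{A}^{-\top}\\ \mathcal{C}\mathcal{A}&\mathcal{A}^{-\top}\end{smallmatrix}\right)$. This is a product of a basis change with an element of $\mathcal{G}_{2L}$, possibly in the other order. If so, I would conjugate by the basis change to land back in the form $\mathcal{A}'\cdot\mathcal{G}_{2L}$. This works because conjugation maps $\mathcal{G}_{2L}$ to itself, sending $\mathcal{B}\mapsto A\mathcal{B}A^\top$ and $\mathcal{C}\mapsto A^{-\top}\mathcal{C}A^{-1}$, which preserves both symmetry and the relation $\mathcal{B}\mathcal{C}=0$.
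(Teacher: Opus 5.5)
Your proposal is correct and follows essentially the same route as the paper's proof sketch. Each rung is read off from the corresponding result: $1$-local layers are viewed as $2$-local layers, Proposition~\ref{prop:Winvariance} gives the confinement-dependent containment, Lemma~\ref{lem:generation} gives the final equality, and strictness comes from $\mathrm{\overline{H}}_i$ having singular diagonal blocks while every element of $\GL(K,\ftwo)\cdot\mathcal{G}_{2L}(K)$ has invertible ones. Your extra check that $\mathcal{G}_{2L}(K)$ is stable under conjugation by $\diag(A,A^{-\top})$ reconciles the left/right ordering conventions for $\mathcal{G}_{2L+\pi}(Q)$, a point the paper's sketch passes over silently.
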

\begin{proof-sketch}
A $1$-local layer is a $2$-local layer with empty matching, giving the first containments; by Proposition~\ref{prop:1local-reach} the $1$-local reach already includes entangling couplings on some codes, so this step need not be strict. The $\subsetneq$ is strict because $\mathrm{\overline{H}}_i$ never lies inside the set: every element of $\GL(K,\ftwo)\cdot\mathcal{G}_{2L}(K)$ has invertible diagonal blocks. The middle containments $\mathcal{T}^{2L}_{\mathrm{addr}}(Q)\subseteq\mathcal{T}^{WL}_{\mathrm{addr}}(Q)\subseteq\GL(K,\ftwo)\cdot\mathcal{G}_{2L}(K)$ are given by Proposition~\ref{prop:Winvariance} and carry its confinement hypothesis. The single-layer-plus-automorphism realizable set $\mathcal{S}^{\mathrm{gen}}_{2L+\pi}(Q)$ of Definition~\ref{def:S2Lpi} sits between, $\mathcal{T}^{2L}_{\mathrm{addr}}(Q)\subseteq\mathcal{S}^{\mathrm{gen}}_{2L+\pi}(Q)\subseteq\mathcal{G}_{2L+\pi}(Q)$, and contains $\mathcal{T}^{1L+\pi}_{\mathrm{addr}}(Q)$; the final equality is the Generation Lemma~\ref{lem:generation}.
\end{proof-sketch}

\subsection*{Novel Construction Proofs}
Now we move to the novel constructions. We perform the following analysis to prove the stated properties of the rotated surface code gadget. We make use of the following definition of a diagonal layer and the reach map to present a full proof of the construction validity. 

\begin{definition}[Diagonal layer and reach map]
\label{def:reachmap}
Fix a CSS code $Q$ with stabilizer rowspans $S_X = \rowspan(H_X)$, $S_Z = \rowspan(H_Z)$ and canonical logical representatives $L_X, L_Z$ such that $L_X L_Z^\top = \mathbf{I}_K$. A \emph{diagonal $2$-local layer} is a circuit
\[
V \;=\; \prod_{q\,:\,c_q = 1} S_q \;\;\prod_{q<q'\,:\,e_{qq'} = 1} CZ_{q,q'},
\]
specified by a phase vector $c \in \mathbb{F}_2^n$ and a symmetric zero-diagonal $e \in \mathbb{F}_2^{n\times n}$ whose support $\{\{q,q'\} : e_{qq'} = 1\}$ is a \emph{matching} involving each qubit in at most one $CZ$. Its physical $C$-matrix is the symmetric matrix $C = \diag(c) + e \in \Sym_n(\mathbb{F}_2)$, with the $S$ gates on the diagonal and the $CZ$ adjacency off it. Writing $R_q^X \in \mathbb{F}_2^{K}$ for column $q$ of $L_X$ to denote the \emph{$X$-signature} of qubit $q$, recording which logical $\overline X_i$ contain $q$, the \emph{reach map} of $V$ is
\begin{equation}
\label{eq:qlc-reach}
\bC(V) \;:=\; L_X\, C\, L_X^\top
\;=\; \sum_q c_q\, R_q^X (R_q^X)^\top
\;+\; \sum_{q<q'\,:\,e_{qq'}=1}\!\bigl(R_q^X (R_{q'}^X)^\top + R_{q'}^X (R_q^X)^\top\bigr)
\;\in\; \Sym_K(\mathbb{F}_2).
\end{equation}
\end{definition}

\begin{lemma}[Action of a diagonal layer]
\label{lem:diag-action}
A diagonal $2$-local layer $V$ with physical $C$-matrix $C$ fixes every $Z$ operator and sends $X^a \mapsto X^a Z^{Ca}$ for all $a \in \mathbb{F}_2^n$. These layers have the following properties: 
\begin{enumerate}[label=(\roman*)]
\item \emph{(CSS preservation, projective.)} $V$ preserves the stabilizer group \emph{up to Pauli signs} if and only if $C\,a \in S_Z$ for every $a \in S_X$, i.e.\ $C\,S_X \subseteq S_Z$; the $Z$-stabilizers are fixed automatically. This binary support condition is \emph{necessary} but not sufficient for exact $+1$-codespace preservation: a diagonal layer carries the exact phase
\[
V X^a V^\dagger \;=\; i^{\sum_q c_q a_q}\,(-1)^{\sum_{q<q'} e_{qq'} a_q a_{q'}}\,X^a Z^{Ca},
\]
so exact preservation additionally requires this phase to equal $+1$ on every $a\in S_X$ (e.g.\ $S\otimes S$ on the CSS code $\langle XX,ZZ\rangle$ satisfies $C\,S_X\subseteq S_Z$ yet sends $XX\mapsto -XX\,ZZ$). The phase is $+1$ for all explicit constructions in this work.
\item \emph{(Logical action.)} If $V$ is CSS-preserving, its logical symplectic is $\left(\begin{smallmatrix}\mathbf{I}_K & 0\\ \bC(V) & \mathbf{I}_K\end{smallmatrix}\right)$: it fixes every $\overline Z_i$ and sends $\overline X_i \mapsto \overline X_i \prod_j \overline Z_j^{\,\bC(V)_{ji}}$. In particular $V = \mathrm{\overline{S}}_i \iff \bC(V) = E_{ii}$, and $V = \mathrm{\overline{CZ}}_{ij} \iff \bC(V) = E_{ij} + E_{ji}$.
\end{enumerate}
\end{lemma}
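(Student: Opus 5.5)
We prove Lemma~\ref{lem:diag-action} by computing the action of $V$ one gate at a time and then collecting terms. Since $V$ is a product of $S$ and $CZ$ gates, each of which commutes with every $Z$ operator, $V$ fixes every $Z^b$, so the $\bB$ and $\bD-I$ blocks vanish. On $X$-type Paulis we use the single-gate rules. First, $S X_q S^\dagger = i\,X_qZ_q$, i.e.\ $Y_q$. Second, $CZ_{qq'}X_qCZ_{qq'} = X_qZ_{q'}$. Because the $CZ$ support is a matching and the $S$ gates sit on single qubits, the factors commute and can be applied in any order. Hence $X^a\mapsto \pm i^{(\cdot)}X^aZ^{Ca}$ with $C=\diag(c)+e$. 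This is read as column $q$ of $C$ contributing $Z$'s exactly where $a_q=1$, namely $Z_q^{c_q}$ from the phase gate and $Z_{q'}$ from its matched partner.

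For the exact phase, we order the product $X^a=\prod_q X_q^{a_q}$ and conjugate each factor. Each $S$ on a qubit with $a_q=1$ contributes $i$, giving $i^{\sum_q c_qa_q}$. For a pair $\{q,q'\}$ with $a_q=a_{q'}=1$, conjugation by $CZ_{qq'}$ gives $(X_qZ_{q'})(Z_qX_{q'})=-X_qZ_qX_{q'}Z_{q'}$ up to reordering. So each such pair contributes $(-1)$, which yields $(-1)^{\sum e_{qq'}a_qa_{q'}}$ once we fix the Hermitian convention (writing the image as $X^aZ^{Ca}$ with $X$'s left). We must track this ordering convention carefully; it is the only delicate bookkeeping step. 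Part (i) then follows directly. $Z$-stabilizers are fixed, and $V X^a V^\dagger$ lies in the stabilizer group up to sign iff $X^aZ^{Ca}$ does. Because the code is CSS, this holds iff $Ca\in S_Z$. The $\langle XX,ZZ\rangle$ example is checked by substitution ($c=(1,1)$ gives the phase $i^2=-1$).

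For part (ii), we apply the readout of the proof of Proposition~\ref{prop:8} with $\bA=\bD=I$ and $\bB=0$. This gives the logical symplectic $\bigl(\begin{smallmatrix}\mathbf I&0\\ L_XCL_X^\top&\mathbf I\end{smallmatrix}\bigr)$. Concretely, $\overline X_i$ has representative $X^{(L_X)_i}$, which maps to $X^{(L_X)_i}Z^{C(L_X)_i^\top}$. Pairing the $Z$-part against $L_X$ gives the coefficient vector $L_XCL_X^\top e_i$. The readout is well defined because $CS_X\subseteq S_Z$ and $L_XH_Z^\top=0$.

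Finally, the identification with $\mathrm{\overline S}_i$ and $\mathrm{\overline{CZ}}_{ij}$ follows from matching symplectic matrices. The logical $S_i$ has $C$-block $E_{ii}$, and the logical $CZ_{ij}$ has $C$-block $E_{ij}+E_{ji}$. Since the projective logical Clifford is determined by its symplectic image, the stated equivalences hold projectively. The expansion \eqref{eq:qlc-reach} is just $L_X(\diag(c)+e)L_X^\top$ written column-wise.
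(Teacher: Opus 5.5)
Your proposal is correct and follows the paper's proof. You multiply the single-gate rules over the layer to get $X^a\mapsto X^aZ^{Ca}$, obtain part (i) from the CSS splitting of the stabilizer group, and for part (ii) read the $\overline Z_j$-content of $Z^{C\ell_i}$ by pairing against $L_X$, giving $(L_XCL_X^\top)_{ji}$.

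What you add is an explicit derivation of the exact phase $i^{\sum_q c_qa_q}(-1)^{\sum e_{qq'}a_qa_{q'}}$ by reordering into $X$-left form; the paper states this formula without deriving it. Your bookkeeping there is right, though the convention you fix is an ordering convention rather than a Hermitian one, since $X^aZ^{Ca}$ is anti-Hermitian whenever $\sum_q c_qa_q$ is odd.
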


\begin{proof}
Every gate is diagonal in the $Z$ basis, so $V Z^b V^\dagger = Z^b$. The single-gate rules $S_q X_q S_q^\dagger = X_q Z_q$ and $CZ_{q,q'} X_q\, CZ_{q,q'} = X_q Z_{q'}$ multiply over the layer to $V X_q V^\dagger = X_q Z^{C e_q}$, with $C = \diag(c) + e$ symmetric, so we have $V X^a V^\dagger = X^a Z^{Ca}$.

(i) For $a \in S_X$ the operator $X^a$ is already a stabilizer, so $X^a Z^{Ca}$ is a stabilizer projectively (up to sign) iff $Z^{Ca}$ is also a stabilizer up to sign. Exact $+1$ preservation further requires the diagonal-layer phase be trivial on $a$.

(ii) When $C\,S_X \subseteq S_Z$, $V$ normalizes the stabilizer group and is therefore a logical Clifford. Acting on the logical $\overline X_i$ with representative $\ell_i$ (row $i$ of $L_X$) gives $V \overline X_i V^\dagger = \overline X_i Z^{C\ell_i}$. The $\overline Z_j$-content of the $Z$-string $C\ell_i$ is its symplectic pairing with $\overline X_j$, namely $\ell_j \cdot (C\ell_i) = (L_X C L_X^\top)_{ji} = \bC(V)_{ji}$, where we used $L_X L_Z^\top = \mathbf{I}_K$ so that $\overline X_j$ reads off exactly the $\overline Z_j$-coefficient. The identifications follow from $\mathrm{\overline{S}}_i : \overline X_i \mapsto \overline X_i \overline Z_i$ and $\mathrm{\overline{CZ}}_{ij} : \overline X_i \mapsto \overline X_i \overline Z_j,\ \overline X_j \mapsto \overline X_j \overline Z_i$. The reach-map identity $\bC(V) = L_X C L_X^\top = \sum_q c_q R_q^X(R_q^X)^\top + \sum_{q<q'} e_{qq'}(R_q^X(R_{q'}^X)^\top + R_{q'}^X(R_q^X)^\top)$ is the expansion of $L_X C L_X^\top$ into the diagonal ($c_q$) and off-diagonal ($e_{qq'}$) entries of $C$, since $R_q^X = L_X e_q$.
\end{proof}

With the diagonal layer action tool we now prove Theorem~\ref{thm:rotated-S}: that the construction has the right logical action, preserves the codespace, and has good circuit distance.

\begin{proof}[Proof of Theorem~\ref{thm:rotated-S}]
Let $V$ be the circuit shown in Construction~\ref{con:rotated-S}. It is a diagonal $2$-local layer with $c_q = 1$ on the $d$ phase qubits of (i) and $e_{qq'} = 1$ on the rotated pairs of (ii). We apply Lemma~\ref{lem:diag-action}.

\emph{Logical action.} The rotated surface code has $K = 1$, so each $X$-signature $R_q^X \in \mathbb{F}_2^{1}$ is simply $1$ if qubit $q$ lies on the logical-$X$ representative $\overline X$ and $0$ otherwise. In the reach map, every off-diagonal term is $R_q^X (R_{q'}^X)^\top + R_{q'}^X (R_q^X)^\top = 2\,R_q^X R_{q'}^X = 0$ over $\mathbb{F}_2$, so the $CZ$ gates contribute nothing to the logical action, and
\[
\bC(V) \;=\; \sum_q c_q\, R_q^X \;=\; \#\{\text{phase gates lying on } \overline X\} \pmod 2 .
\]
The phase qubits are the diagonal $(r,r)$ with $0\le r\le d-2$ and the qubit $(d-2,d-1)$. Of these, the only one on the left column $\overline X = \{(r,0)\}$ is the corner $(0,0)$. Hence $\bC(V) = 1$, and by Lemma~\ref{lem:diag-action}(ii) $V$ realizes $\mathrm{\overline{S}}$ exactly.

\emph{CSS preservation.} By Lemma~\ref{lem:diag-action}(i) it suffices that $C\,a \in S_Z$ for every $X$-stabilizer. We use the standard $[[d^2,1,d]]$ rotated-surface layout in which checks are indexed by a \emph{dual vertex} $(i,j)$ centered at $(i-\tfrac12,\,j-\tfrac12)$, touching the in-range data qubits among $\{(i{-}1,j{-}1),(i{-}1,j),(i,j{-}1),(i,j)\}$: an $X$-stabilizer $a_{(i,j)}$ sits at every $(i,j)$ with $i+j$ even, $0\le i\le d$ and $1\le j\le d-1$ (weight $4$ in the bulk, weight $2$ on the top/bottom rows $i\in\{0,d\}$), and a $Z$-plaquette $Z_{(i,j)}$ at every $(i,j)$ with $i+j$ odd, $1\le i\le d-1$ and $0\le j\le d$ (weight $4$ in the bulk, weight $2$ on the left/right columns $j\in\{0,d\}$); the left/right boundaries are $X$-rough, terminating the vertical $\overline X=\{(r,0)\}$, and the top/bottom are $Z$-rough, terminating the horizontal $\overline Z=\{(0,c)\}$. Using the closed forms of Construction~\ref{con:rotated-S}, we have the phase set $(r,c)\in P\Leftrightarrow (r{=}c\le d{-}2)$ or $(r,c){=}(d{-}2,d{-}1)$, and $CZ$-partner $\pi(r,c)=(c,r{+}1)$ for $c\le r\le d{-}2$ and $\pi(r,c)=(c{-}1,r)$ for $r<c\le d{-}1$. So the byproduct of each $X$-stabilizer is, \emph{exactly} (pointwise over $\mathbb{F}_2$),
\[
C\,a_{(i,j)} \;=\;
\begin{cases}
Z_{(j-1,\,i)}, & i<j,\\[2pt]
Z_{(j,\,i+1)}, & i\ge j,\ (i,j)\neq(d{-}1,d{-}1),\ i\neq d,\\[2pt]
0, & (i,j)=(d-1,d-1)\ \text{or}\ i=d,
\end{cases}
\]
Each nonzero byproduct is a single $Z$-plaquette: the diagonal reflection $(i,j)\mapsto(j,i)$, with the first coordinate decremented when $i<j$ and the second incremented when $i\ge j$ (at $i=j$ these agree, $Z_{(j,i+1)}=Z_{(i,j+1)}$). For $d\ge5$ the case $i\ge j$ supplies the weight-$2$ right-boundary plaquettes $Z_{(j,d)}$ at $i=d-1$, so no separate boundary case is needed. The two vanishing cases are genuinely $0$: the bottom-row $X$-stabilizers $i=d$ meet no phase qubit or $CZ$-partner, and at the corner $(d{-}1,d{-}1)$ the two diagonal phase byproducts cancel through their mutual $CZ$. With the index range stated above ($i+j$ even, $0\le i\le d$, $1\le j\le d-1$, so the corner $(0,0)$ is a $Z$- not an $X$-vertex), these cases are mutually exclusive and exhaustive, and form a genuine partition of the $X$-stabilizers carrying no $d$-specific dependence, so $C\,a\in S_Z$ throughout and $V$ preserves the codespace for all odd $d\ge3$. The exact phase of Lemma~\ref{lem:diag-action}(i) is $+1$ on every $X$-stabilizer: the diagonal checks (including the corner $(d{-}1,d{-}1)$) contain exactly two phase qubits and one $CZ$ pair internal to their support, contributing $i^2\cdot(-1)=+1$, and every other check contains neither.

\emph{Distance preservation.} Since $C\,a\in S_Z$ for every $X$-stabilizer and $V$ fixes every $Z$, we have $V\mathcal{S}V^\dagger=\mathcal{S}$ exactly: $V$ is a \emph{code automorphism}, so the conjugated code equals $Q$ and its static distance is exactly $d$ for every odd $d$. For fault tolerance the relevant notion is the circuit-fault distance, so we analyze this here as well. Each gate fixes $Z$ and spreads $X$ to weight $\le2$, and no single gate fault is undetectable as every $S$- or $CZ$-byproduct anticommutes with some $X$-stabilizer. The top-row $\overline{Z}$ representative is aligned with the logical operator representative, which meets exactly one $CZ$ pair $\{(0,0),(0,1)\}$. A single $CZ$ fault there covers two of its $d$ qubits, while the remaining $d-2$ each require a separate fault, giving $d_{\mathrm{circ}}\le d-1$. For the matching lower bound, an undetectable fault pattern on $f$ gate or idle locations produces a nontrivial logical Pauli $E$ whose support is covered by those locations, so $f\ge\mathrm{wt}(E)-P(E)$ with $P(E)$ the number of $CZ$ pairs contained in the support of $E$; since the pairs are disjoint, $f\ge\mathrm{wt}(E)/2\ge\lceil d/2\rceil$ unconditionally, so the gadget is single-fault-tolerant for every $d\ge5$. The exact value is governed by $P(E)$. A minimum-weight $Z$-class representative meets every column exactly once and changes row by at most one per column (else some $X$-check would overlap it oddly), so the quantity row-plus-column is nondecreasing along it, growing by at most two per column; a contained pair $\{(c,r),(r,c+1)\}$ pins this quantity to $c+r$ at column $r$ and to $c+r+1$ at column $c+1$. Two contained pairs are incompatible with these pins: disjoint column intervals would require the impossible growth $(c_1-r_1)+(c_2-r_2)\le-1$, and overlapping intervals contradict either the pins or the parity rule selecting which unit steps the checks allow. Hence $P(E)\le1$ on minimum-weight $Z$ representatives. A minimum-weight $X$-class representative meets every row exactly once, and the same pinning runs against the monotone direction, so $P(E)=0$ there. For the remaining cases (the mixed class and non-minimal representatives) we verify $f\ge d-1$ exhaustively for $d=3,5$, giving $d_{\mathrm{circ}}=d-1$ there and $d_{\mathrm{circ}}\in[\lceil d/2\rceil,\,d-1]$ in general. Thus $V$ preserves the code distance as an automorphism for every odd $d$, and is single-fault-tolerant for $d\ge5$. 
\end{proof}

\begin{proof}[Proof of Theorem~\ref{thm:toric-CZ}]
Let $V$ be the circuit shown in Construction~\ref{con:toric-CZ}. It is a diagonal $2$-local layer (Def.~\ref{def:reachmap}), and we apply Lemma~\ref{lem:diag-action}. The code has $K = 2$ with chosen representatives $\overline X_0 = \{h(r,0)\}_r$ (column-$0$ horizontal edges) and $\overline X_1 = \{v(0,c)\}_c$ (row-$0$ vertical edges), so the $X$-signatures are
\[
R_{h(r,0)}^X = (1,0)^\top,\qquad R_{v(0,c)}^X = (0,1)^\top,\qquad R_q^X = 0 \ \text{otherwise.}
\]
The target is the logical controlled-$Z$, $\bC(V) = E_{01} + E_{10}$ (Lemma~\ref{lem:diag-action}). We sum the reach map of Def.~\ref{def:reachmap} over the gate families of Construction~\ref{con:toric-CZ}.

\emph{Row-$0$ pairs $h(0,c) \leftrightarrow v(0,c)$.} Here $R_{v(0,c)}^X = (0,1)^\top$ for every $c$, while $R_{h(0,c)}^X = (1,0)^\top$ only for the corner $c = 0$ (which lies on $\overline X_0$) and is $0$ otherwise. Hence only $c = 0$ contributes, giving $E_{01} + E_{10}$.

\emph{Phase gates: row $0$ on $h(0,c)$, and inner row $s = \lceil L/2 \rceil$ on $h(s,c)$.} A phase gate ($S^{\dagger}$ has the same $C$-matrix as $S$) contributes the diagonal term $R_q^X (R_q^X)^\top$, nonzero only on the column-$0$ qubits $h(0,0)$ and $h(s,0)$, each equal to $(1,0)(1,0)^\top = E_{00}$. The two families therefore contribute $E_{00} + E_{00} = 0$.

\emph{Cross pairs (tiers $r = 1,\dots,\lfloor(L-1)/2\rfloor$) and, for even $L$, the middle-row pairs.} Every endpoint lies in a row $\ge 1$, hence off $\overline X_1$; and whenever one endpoint is a column-$0$ horizontal edge $h(r,0) \in \overline X_0$, its partner has $R^X = 0$. Each such pair contributes $0$.

Summing, $\bC(V) = (E_{01} + E_{10}) + 0 + 0 = E_{01} + E_{10}$, so by Lemma~\ref{lem:diag-action}(ii) $V$ realizes the logical $\mathrm{\overline{CZ}}_{0,1}$ exactly. The construction is now transparent: the corner pair supplies the off-diagonal coupling, the two equal phase-gate families cancel the spurious diagonal $\mathrm{\overline{S}}_0$ that each would otherwise induce, and the cross- and middle-row pairs are logically inert.

\emph{CSS preservation.} By Lemma~\ref{lem:diag-action}(i) it suffices that $C\,A\in S_Z$ for every $X$-star $A$ (the weight-$4$ $X$-stabilizers); we show the stronger fact that $V$ sends each star to a single $Z$-plaquette. Write $A_{r,c}$ for the star at vertex $(r,c)$ and $B_{r,c}$ for the $Z$-plaquette. Each of the four edges of $A_{r,c}$ is touched by at most one $CZ$ (the off-diagonal support is a matching) and, on the two phase rows, possibly also by a diagonal $S$; the byproduct $C\,A_{r,c}$ collects both contributions. The cross-tier pairing reflects the row $r\mapsto(-r)\bmod L$; the skewed family $h(a,c)\leftrightarrow v(b,(c{+}1)\bmod L)$ and the aligned family $v(a,c)\leftrightarrow h(b,c)$ fix the plaquette column up to a shift of $0$ or $-1$. The row-$0$ and inner-row ($s=\lceil L/2\rceil$) phase gates supply the two horizontal edges that complete the otherwise-broken plaquette on the two extremal reflected rows, the same cancellation that, at the logical level, kills the spurious $\mathrm{\overline{S}}_0$. (For even $L$ the middle-row self-pair handles the fixed row $L/2$ with no column shift.) Tracing these, the byproduct is exactly one plaquette,
\[
C\,A_{r,c}\;=\;B_{r',c'}\in S_Z,\qquad r'=(-r)\bmod L,\quad c'\in\{c,\,c-1\}\ (\bmod\ L),
\]
the column shift being $-1$ exactly when the reflected row $r'$ lies in the first half $\{0,\dots,\lfloor(L-1)/2\rfloor\}$. The row reflection $r\mapsto r'$ is a bijection of rows, and for each image row the column map $c\mapsto c'$ is a bijection of $\mathbb{Z}_L$; hence $A_{r,c}\mapsto B_{r',c'}$ is a bijection of the $L^2$ stabilizers, $C\,S_X=S_Z$ exactly, and the layer is CSS-preserving for every $L\ge 3$.

\emph{Phases.} By the exact phase of Lemma~\ref{lem:diag-action}(i), with an $S^{\dagger}$ contributing $-i$ per supported qubit where an $S$ contributes $i$, a star on the row-$0$ (inner) phase row contains exactly two $S$ ($S^{\dagger}$) qubits and one $CZ$ pair internal to its support, contributing $i^{2}\cdot(-1)=+1$ ($(-i)^{2}\cdot(-1)=+1$); every other star contains neither, so the phase is $+1$ throughout and the stabilizer group is preserved exactly. The logical signs are exact as well: the representative of $\overline X_0$ meets one $S$ (at $h(0,0)$) and one $S^{\dagger}$ (at $h(s,0)$), contributing $i\cdot(-i)=+1$, while $\overline X_1$ meets no phase gate and no internal pair. With $S$ on both phase rows the layer would instead realize $\mathrm{\overline{CZ}}_{0,1}\cdot\overline{Z}_0$, the same symplectic element; the $S^{\dagger}$ row removes the logical Pauli byproduct.
\end{proof}

\subsection{The Core Codes and their Logical ISA}
\label{sec:qlc-cores}

Throughout this subsection each listed layer realizes its stated generator at the symplectic level with exact $+1$ stabilizer signs; logical Pauli byproducts (e.g.\ $\mathrm{\overline{S}}$ versus $\mathrm{\overline{S}}^{\dagger}$) may remain and are tracked in the Pauli frame.

\begin{construction}[{$[[4,2,2]]$ depth-1 logical-Clifford basis ISA}]
\label{con:qlc-422}
With $H_X = H_Z = (1\,1\,1\,1)$ and canonical logicals $\overline{X}_0 = X_0X_1$, $\overline{X}_1 = X_0X_2$, $\overline{Z}_0 = Z_0Z_2$, $\overline{Z}_1 = Z_0Z_1$, the generators are the single CSP $2$-local layers
\[
\begin{array}{lll}
\mathrm{\overline{S}}_0:\ S_0 S_2,\ CZ_{0,2}; &
\mathrm{\overline{S}}_1:\ S_0 S_1,\ CZ_{0,1}; &
\mathrm{\overline{CZ}}_{0,1}:\ CZ_{0,1},\ CZ_{2,3};\\[2pt]
\mathrm{\overline{SHS}}_0:\ \sqrt{X}_0\,\sqrt{X}_1,\ CZ^{\sharp}_{0,1}; &
\mathrm{\overline{SHS}}_1:\ \sqrt{X}_0\,\sqrt{X}_2,\ CZ^{\sharp}_{0,2}; &
\end{array}
\]
together with the logical $\mathrm{\overline{SWAP}} = \pi$ with $\pi = (1\,2)$, and the $\mathrm{\overline{H}} = \pi\cdot H^{\otimes 4}$ (depth $2$).
\end{construction}

\begin{figure*}[t]
\centering
\includegraphics[width=0.86\linewidth,keepaspectratio]{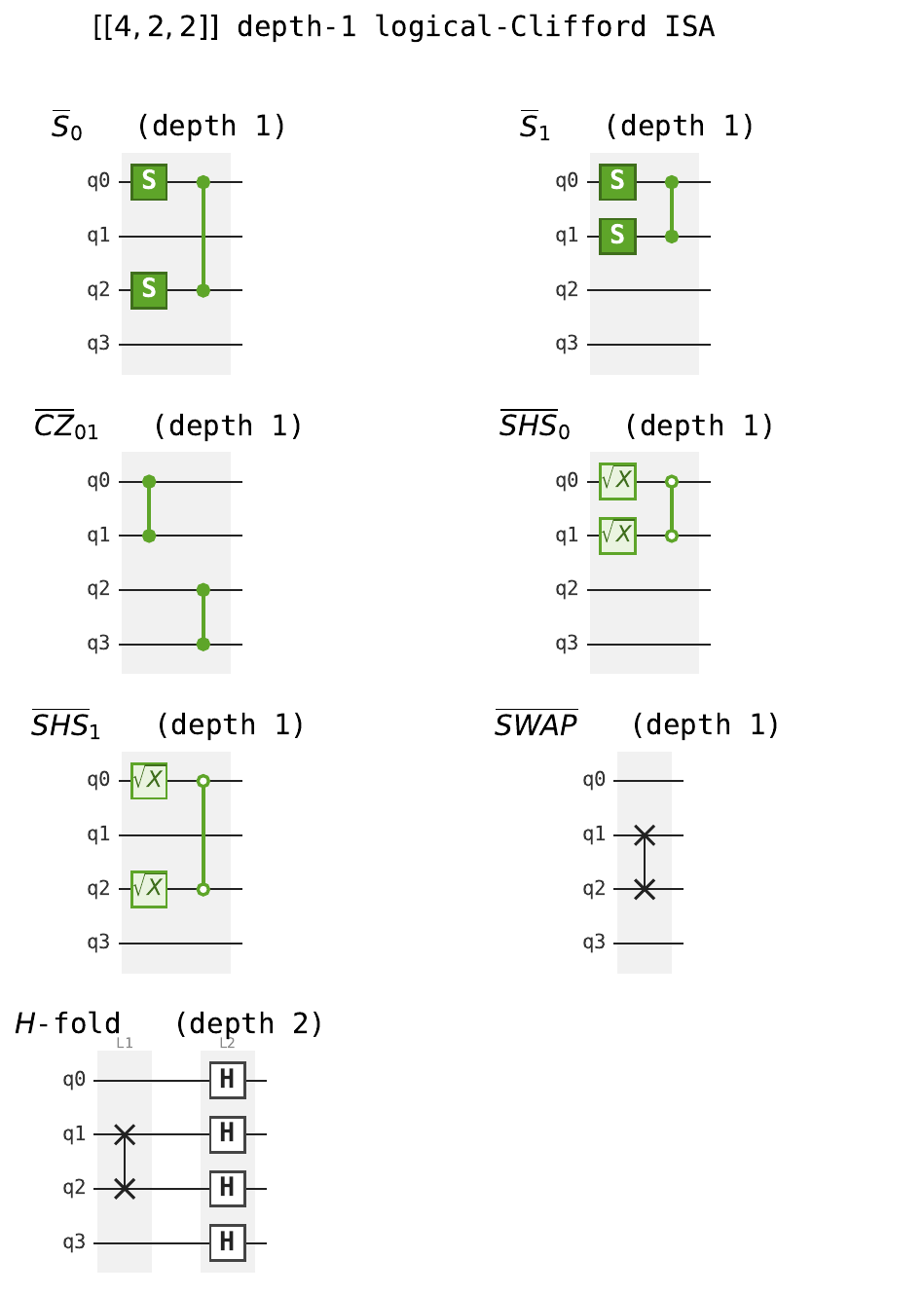}
\caption{Complete logical-Clifford basis ISA of the $[[4,2,2]]$ core (Construction~\ref{con:qlc-422}), with green $S$ boxes and $CZ$ connectors, hollow-green $\sqrt{X}$ ($=SHS$) and dual $CZ^{\sharp}$, qubit permutations as crossing wires. Every $\mathrm{\overline{S}}_i$, $\mathrm{\overline{CZ}}_{i,j}$, $\mathrm{\overline{SHS}}_i$ is a single codespace-preserving $W\le 2$ layer (depth $1$); the $\mathrm{\overline{H}}$ is a routing permutation followed by transversal $H$ (depth $2$).}
\label{fig:qlc-c422}
\end{figure*}

\begin{construction}[{$[[20,2,6]]$ depth-1 logical-Clifford basis ISA, $G=\mathrm{AGL}(1,5)$}]
\label{con:qlc-2026}
The self-dual core $\CSS(C,C)$ with $c=\{0,1,2,4,6,7,10,16\}\in\mathbb{F}_2[\mathbb{Z}_5\!\rtimes\!\mathbb{Z}_4]$ (elements $(a,b)\colon x\mapsto ax+b$ indexed $4b+j$ with $a=2^j \bmod 5$; $M(c)_{g,h}=1$ iff $gh^{-1}\in c$; $H_X=H_Z=H$, the $9\times20$ full-rank row basis of $\rowspan M(c)$) and canonical logicals
$L_X=\begin{psmallmatrix}1&0&1&0&0&1&1&0&1&0&1&0&0&0&0&0&0&0&0&0\\0&1&0&1&1&1&0&0&0&1&0&1&0&0&0&0&0&0&0&0\end{psmallmatrix}$,
$L_Z=\begin{psmallmatrix}0&1&0&1&1&1&0&0&0&1&0&1&0&0&0&0&0&0&0&0\\1&0&1&0&0&1&1&0&1&0&1&0&0&0&0&0&0&0&0&0\end{psmallmatrix}$
carries every generator as a single depth-1 CSP $2$-local layer (qubits $0$-indexed):
\[
\begin{aligned}
\mathrm{\overline{S}}_0:&\ S_{\{0,1,4,6,9,10,11,12,13,15,16,17,18,19\}},\ CZ_{0,1}CZ_{4,6}CZ_{9,16}CZ_{10,12}CZ_{11,17}CZ_{13,19}CZ_{15,18};\\
\mathrm{\overline{S}}_1:&\ S_{\{2,3,4,5,8,9,10,11,12,13,14,15,16,19\}},\ CZ_{2,3}CZ_{4,5}CZ_{8,9}CZ_{10,15}CZ_{11,12}CZ_{13,14}CZ_{16,19};\\
\mathrm{\overline{CZ}}_{0,1}:&\ S_{\{0,1,\dots,19\}}\ \text{(transversal }S,\text{ no }CZ);\\
\mathrm{\overline{SHS}}_0:&\ \sqrt{X}_{\{0,1,2,3,4,5,6,7,8,9,10,15,17,18\}},\ CZ^{\sharp}_{0,4}CZ^{\sharp}_{1,18}CZ^{\sharp}_{2,8}CZ^{\sharp}_{3,9}CZ^{\sharp}_{5,17}CZ^{\sharp}_{6,15}CZ^{\sharp}_{7,10};\\
\mathrm{\overline{SHS}}_1:&\ \sqrt{X}_{\{0,1,2,3,4,5,6,7,9,10,11,12,18,19\}},\ CZ^{\sharp}_{0,10}CZ^{\sharp}_{1,5}CZ^{\sharp}_{2,19}CZ^{\sharp}_{3,9}CZ^{\sharp}_{4,11}CZ^{\sharp}_{6,18}CZ^{\sharp}_{7,12}.
\end{aligned}
\]
The $\mathrm{\overline{H}}=\pi\cdot H^{\otimes 20}$ is depth $2$, with $\pi$ the right translation by the group element $(a,b)=(2,0)$, whose logical action is the swap. All five $C/B$-side generators are depth-1; with $\mathrm{\overline{H}}$ they generate $\Sp(4,\mathbb{F}_2)$.
\end{construction}

\begin{construction}[{$[[18,2,5]]$ logical-Clifford basis ISA, $G=D_9$}]
\label{con:qlc-1825}
The self-dual core $\CSS(C,C)$ with $c=\{0,2,5,7,11,12,13,17\}\in\mathbb{F}_2[D_9]$ (elements ordered $r^0,\dots,r^8,\,sr^0,\dots,sr^8$ for $D_9=\langle r,s\mid r^9=s^2=1,\ srs=r^{-1}\rangle$; $M(c)_{g,h}=1$ iff $h^{-1}g\in c$; $H_X=H_Z=H$, the $8\times18$ full-rank row basis of $\rowspan M(c)$) and canonical logicals
$L_X=L_Z=\begin{psmallmatrix}1&1&1&1&1&1&1&1&1&0&0&0&0&0&0&0&0&0\\0&1&0&0&1&1&0&1&0&1&0&0&0&0&0&0&0&0\end{psmallmatrix}$
carries $\mathrm{\overline{CZ}}_{0,1}$ as a single depth-1 CSP layer and $\mathrm{\overline{S}}_i$ (hence by the $X\!\leftrightarrow\!Z$ mirror $\mathrm{\overline{SHS}}_i$) as depth-2 sequences (qubits $0$-indexed):
\[
\begin{aligned}
\mathrm{\overline{CZ}}_{0,1}:&\ S_{\{2,5,9,12\}},\ CZ_{0,1}CZ_{3,4}CZ_{6,16}CZ_{7,17}CZ_{8,15}CZ_{10,11}CZ_{13,14};\\
\mathrm{\overline{S}}_0:&\ \mathrm{L1}:\ S_{\{5,8,9,15\}},\,CZ_{0,13}CZ_{1,14}CZ_{2,12}CZ_{3,16}CZ_{4,17}CZ_{6,10}CZ_{7,11};\\
&\ \mathrm{L2}:\ S_{\{2,5,8,9,11,12,14,15,17\}},\,CZ_{0,17}CZ_{1,15}CZ_{2,4}CZ_{3,14}CZ_{5,16}CZ_{6,11}CZ_{7,9}CZ_{8,13}CZ_{10,12};\\
\mathrm{\overline{S}}_1:&\ \mathrm{L1}:\ S_{\{5,8,9,15\}},\,CZ_{0,13}CZ_{1,14}CZ_{2,12}CZ_{3,16}CZ_{4,17}CZ_{6,10}CZ_{7,11};\\
&\ \mathrm{L2}:\ S_{\{1,2,4,8,9,11,14,16,17\}},\,CZ_{0,11}CZ_{1,10}CZ_{2,15}CZ_{3,17}CZ_{4,7}CZ_{5,8}CZ_{6,14}CZ_{9,12}CZ_{13,16}.
\end{aligned}
\]
Each $\mathrm{L}j$ is a codespace-preserving $W\le2$ matching, and the composite realizes the diagonal target in $\Sp(4,\mathbb{F}_2)$; the $\mathrm{\overline{SHS}}_i$ use the same patterns with $S\!\to\!\sqrt{X}$, $CZ\!\to\!CZ^{\sharp}$. An exhaustive SAT search over a complete encoding of the matching, CSS-preservation, and reach constraints of Definition~\ref{def:matching-realizable} found no depth-one solution (UNSAT), so depth-two is optimal for these generators. The $\mathrm{\overline{H}}$ is depth-1 on this core. The $D_9$ core attains $d=5$ at the cost of $\le 2$ layers per generator.
\end{construction}

\begin{proposition}[Cores carry $\Sp(4,\mathbb{F}_2)$]
\label{prop:qlc-cores}
Each of $[[4,2,2]]$, $[[18,2,5]]$ ($G=D_9$) and $[[20,2,6]]$ ($G=\mathrm{AGL}(1,5)$) carries a complete logical Clifford basis ISA realizing $\Sp(4,\mathbb{F}_2)$ (order $720$). The ISAs of $[[4,2,2]]$ (Construction~\ref{con:qlc-422}) and $[[20,2,6]]$ (Construction~\ref{con:qlc-2026}) are depth-1 (the $\mathrm{\overline{H}}$ being depth $2$); on $[[18,2,5]]$ (Construction~\ref{con:qlc-1825}) $\mathrm{\overline{CZ}}_{0,1}$ is depth-1 and the diagonal generators $\mathrm{\overline{S}}_i,\mathrm{\overline{SHS}}_i$ use $2$ layers (depth-1 is SAT-infeasible for them, so depth-2 is optimal).
\end{proposition}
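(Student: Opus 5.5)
The proof is a finite verification for each of the three cores, organized around Lemma~\ref{lem:diag-action} and the Generation Lemma~\ref{lem:generation}. For each core I would first fix $H=H_X=H_Z$ and the stated $L_X,L_Z$. I then check $L_XL_Z^\top=\mathbf I_2$, $HH^\top=0$, that each row of $H$ has weight divisible by $4$, and that $k=n-2\,\rank H=2$. For the distances I would appeal to the stated parameters, or confirm them by enumerating cosets of $\ker H/\rowspan H$: there are only $2^{20}$ vectors at worst.

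The core of the argument is the per-generator check. Each listed $C$-side layer is a diagonal $2$-local layer with physical matrix $C=\diag(c)+e$. I would verify three things:
\begin{enumerate}
\item[(a)] The $CZ$ support $e$ is a matching, with every qubit appearing in at most one pair.
\item[(b)] $C\,S_X\subseteq S_Z$. It suffices to check $C h_r\in\rowspan H$ for each row $h_r$ of $H$, which is a rank test.
\item[(c)] The reach map $\bC(V)=L_XCL_X^\top$ equals $E_{ii}$ for $\mathrm{\overline{S}}_i$ and $E_{01}+E_{10}$ for $\mathrm{\overline{CZ}}_{0,1}$.
\end{enumerate}
Lemma~\ref{lem:diag-action}(ii) then identifies the logical action. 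I would also compute the exact phase $i^{\sum c_qa_q}(-1)^{\sum e_{qq'}a_qa_{q'}}$ on each stabilizer generator to confirm the $+1$ signs.

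Two cases shorten the work. For the transversal $S$ on $[[20,2,6]]$ we have $C=I$, so (b) holds by self-duality and $\bC=L_XL_X^\top$, which must be read off as $E_{01}+E_{10}$. The $B$-side layers ($\sqrt X$, $CZ^\sharp$) follow by the $X\leftrightarrow Z$ mirror of the same lemma, using $S_X=S_Z$ and $L_ZBL_Z^\top$. For $[[18,2,5]]$ the two-layer words are checked layer by layer for (a) and (b). The logical blocks add, so the check reduces to $\bC(\mathrm{L1})+\bC(\mathrm{L2})=E_{ii}$.

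Next I would check the non-diagonal pieces.
\begin{itemize}
\item For $[[4,2,2]]$, the permutation $\pi=(1\,2)$ preserves $\rowspan(1111)$ and swaps the logical representatives, so by Proposition~\ref{prop:pi-logical} it is the logical $\mathrm{\overline{SWAP}}$.
\item For $[[20,2,6]]$, right translation by $(2,0)$ commutes with the left-regular structure of $M(c)$, so it preserves $\rowspan M(c)$. Its logical image must be computed from $L_XP_\pi\equiv V_\pi L_X$.
\item In each case $H^{\otimes n}$ maps $S_X\leftrightarrow S_Z$, which is legitimate since these are equal. Its logical action is $\bigl(\begin{smallmatrix}0&M\\M^{-\top}&0\end{smallmatrix}\bigr)$, with $M=L_ZL_Z^\top$ up to convention. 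Composing with $\pi$ must cancel $M$ to give the logical $\mathrm{\overline{H}}^{\otimes2}$.
\item For $D_9$ we have $L_X=L_Z$, so $H^{\otimes 18}$ is directly the logical $\mathrm{\overline{H}}$ at depth 1.
\end{itemize}

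With $\mathrm{\overline{S}}_i,\mathrm{\overline{SHS}}_i,\mathrm{\overline{CZ}}_{01}$ realized, Lemma~\ref{lem:generation} and Corollary~\ref{cor:phys-generation} give the whole group $\Sp(4,\mathbb{F}_2)$. Its order is $720$ by the order formula $2^4\cdot3\cdot15$.

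The main obstacle is the optimality claim for $D_9$, namely that no single layer realizes $\mathrm{\overline{S}}_i$. My first attempt would be a structural argument: use the $D_9$ automorphisms to reduce the candidate matchings, then show that (b) together with the reach condition $\bC=E_{ii}$ forces a contradiction. Failing that, I would encode Definition~\ref{def:matching-realizable} as a complete SAT instance and cite its UNSAT result, as the paper does. The instance has Boolean variables $c_q,e_{qq'}$, the matching constraints, the linear constraints $C h_r\in\rowspan H$ via auxiliary coefficient variables, and the linear reach constraints. Completeness of the encoding is the point needing care.
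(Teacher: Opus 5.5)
Your proposal is correct and takes essentially the paper's own route. The paper works $\mathrm{\overline{S}}_0$ on $[[4,2,2]]$ explicitly through the reach map $\bC=L_XCL_X^\top$ of Lemma~\ref{lem:diag-action}, checks $Ca\in S_Z$ and the $+1$ phase, and gets the $\sqrt{X}$/$CZ^{\sharp}$ layers by the $X\leftrightarrow Z$ mirror; it then declares the $[[18,2,5]]$ and $[[20,2,6]]$ layers mechanically validated, relying on generation for closure to $\Sp(4,\mathbb{F}_2)$ and on the cited UNSAT search for depth-2 optimality on $D_9$. Your plan is the same verification carried out generator by generator, and it makes explicit what the paper leaves implicit: the additive composition of $C$-blocks for the two-layer $D_9$ words, and the logical actions of $\pi$ and $H^{\otimes n}$ (for example $L_XL_X^\top=E_{01}+E_{10}$ on $[[20,2,6]]$, and $L_X=L_Z$ on $D_9$).
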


\begin{proof}
We verify the $[[4,2,2]]$ generator $\mathrm{\overline{S}}_0$ by Eq.~\eqref{eq:qlc-reach}; the rest follow by the same computation. The columns of $L_X = \bigl(\begin{smallmatrix}1&1&0&0\\1&0&1&0\end{smallmatrix}\bigr)$ are $R_0^X = (1,1)^\top$, $R_1^X = (1,0)^\top$, $R_2^X = (0,1)^\top$, $R_3^X = 0$. The layer $S_0S_2,\,CZ_{0,2}$ has $c_0=c_2=1$, $e_{0,2}=1$, so
\[
\bC = R_0^X(R_0^X)^\top + R_2^X(R_2^X)^\top + \bigl(R_0^X(R_2^X)^\top + R_2^X(R_0^X)^\top\bigr)
= \begin{psmallmatrix}1&1\\1&1\end{psmallmatrix} + \begin{psmallmatrix}0&0\\0&1\end{psmallmatrix} + \begin{psmallmatrix}0&1\\1&0\end{psmallmatrix}
= \begin{psmallmatrix}1&0\\0&0\end{psmallmatrix} = E_{00},
\]
the rank-1 target of $\mathrm{\overline{S}}_0$, and $\bB = 0$. Codespace preservation: conjugating the single $X$-stabilizer $a = (1\,1\,1\,1)$ produces the $Z$-string $C a$ with $(C a)_q = c_q a_q + \sum_{q'} e_{qq'} a_{q'}$; the $S$ gates contribute $Z_0 Z_2$ and $CZ_{0,2}$ contributes $Z_0 Z_2$, summing to $0 \in S_Z$, while the $Z$-stabilizer is fixed by the diagonal layer. The diagonal-layer phase (Lemma~\ref{lem:diag-action}) on $a=(1\,1\,1\,1)$ is $i^{2}(-1)^{1}=+1$, so the stabilizer group is preserved exactly (sign included). By self-duality the $X\!\leftrightarrow\!Z$ mirror gives $\mathrm{\overline{SHS}}_0$ with $\bB = E_{00}$, and the seven generators close to $\Sp(4,\mathbb{F}_2)$. For $[[18,2,5]]$ and $[[20,2,6]]$ (Constructions~\ref{con:qlc-1825},\ref{con:qlc-2026}; Figs.~\ref{fig:qlc-d1825},\ref{fig:qlc-z2026}) the layer sequences are validated mechanically in the same fashion.
\end{proof}

\begin{figure*}[t]
\centering
\includegraphics[width=0.99\linewidth,keepaspectratio]{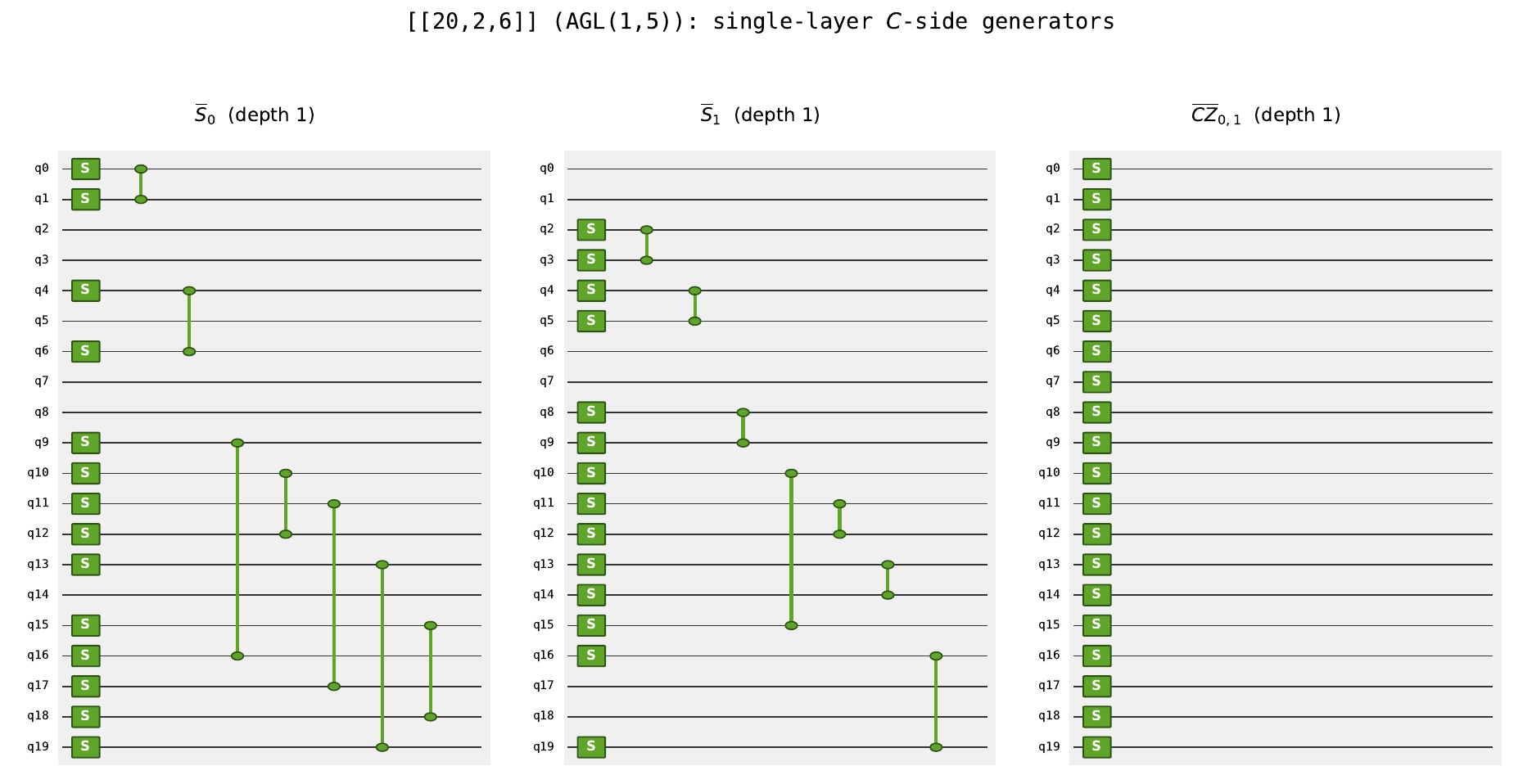}
\caption{Single-layer (depth-1) $C$-side generators of the $[[20,2,6]]$ core ($G=\mathrm{AGL}(1,5)$): $\mathrm{\overline{S}}_0$, $\mathrm{\overline{S}}_1$, $\mathrm{\overline{CZ}}_{0,1}$. Each is one codespace-preserving $W\le2$ matching of $S$ gates and $CZ$ pairs (shaded band); the $\mathrm{\overline{SHS}}_i$ are the $X$-duals. Same palette as Fig.~\ref{fig:qlc-c422}.}
\label{fig:qlc-z2026}
\end{figure*}

\begin{figure*}[t]
\centering
\includegraphics[width=0.95\linewidth,keepaspectratio]{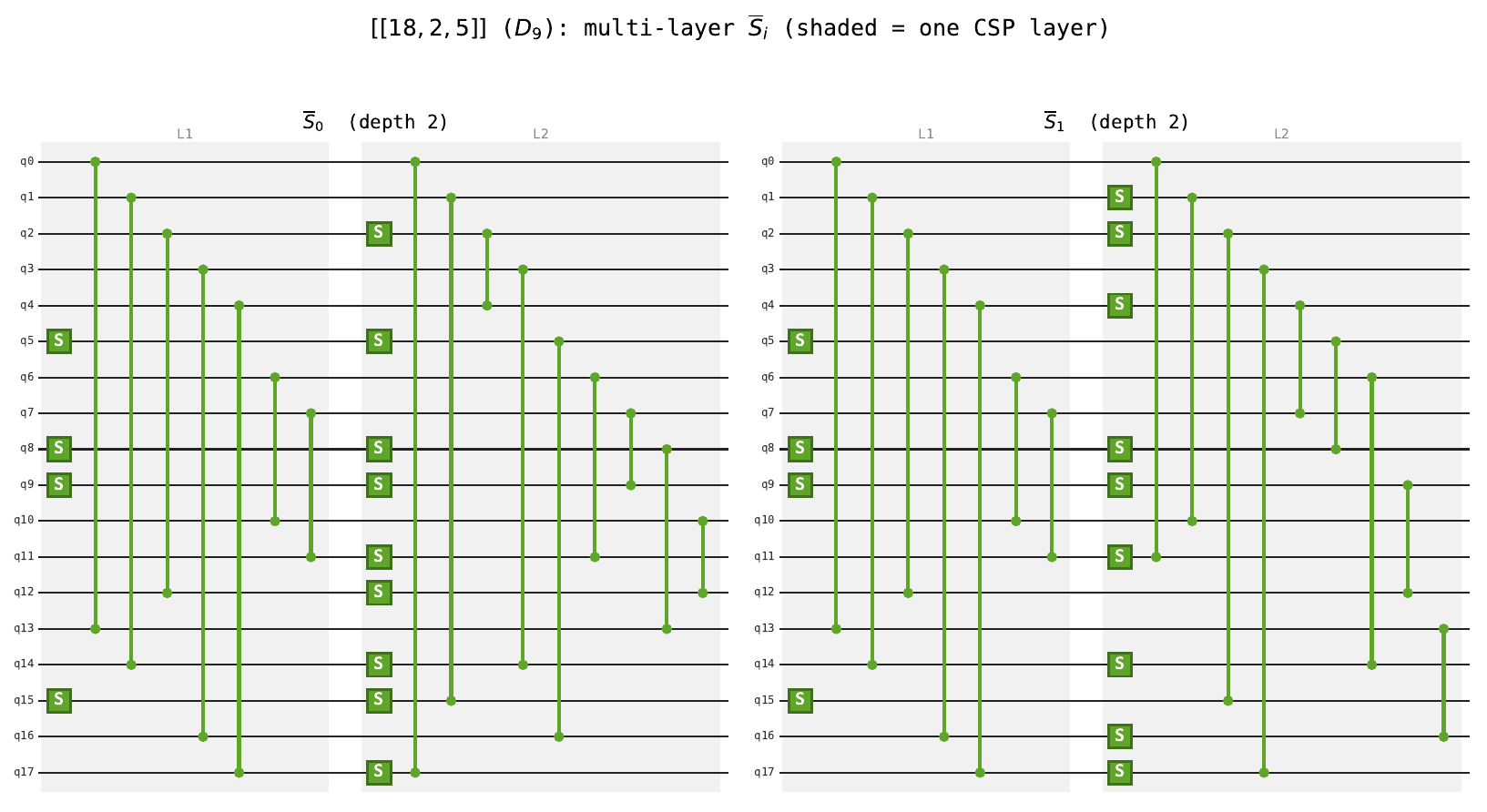}
\caption{Multi-layer $\mathrm{\overline{S}}_0$ and $\mathrm{\overline{S}}_1$ (each depth $2$) of the $[[18,2,5]]$ core ($G=D_9$). Each shaded band $\mathrm{L}j$ is one codespace-preserving $W\le2$ layer; the composition realizes the diagonal target in $\Sp(4,\mathbb{F}_2)$. The $D_9$ core attains $d=5$ at the cost of $\le2$ layers per generator (vs.\ depth-1 for $[[4,2,2]]$ and $[[20,2,6]]$).}
\label{fig:qlc-d1825}
\end{figure*}

\begin{figure}[t]
\centering
\includegraphics[width=0.74\linewidth,keepaspectratio]{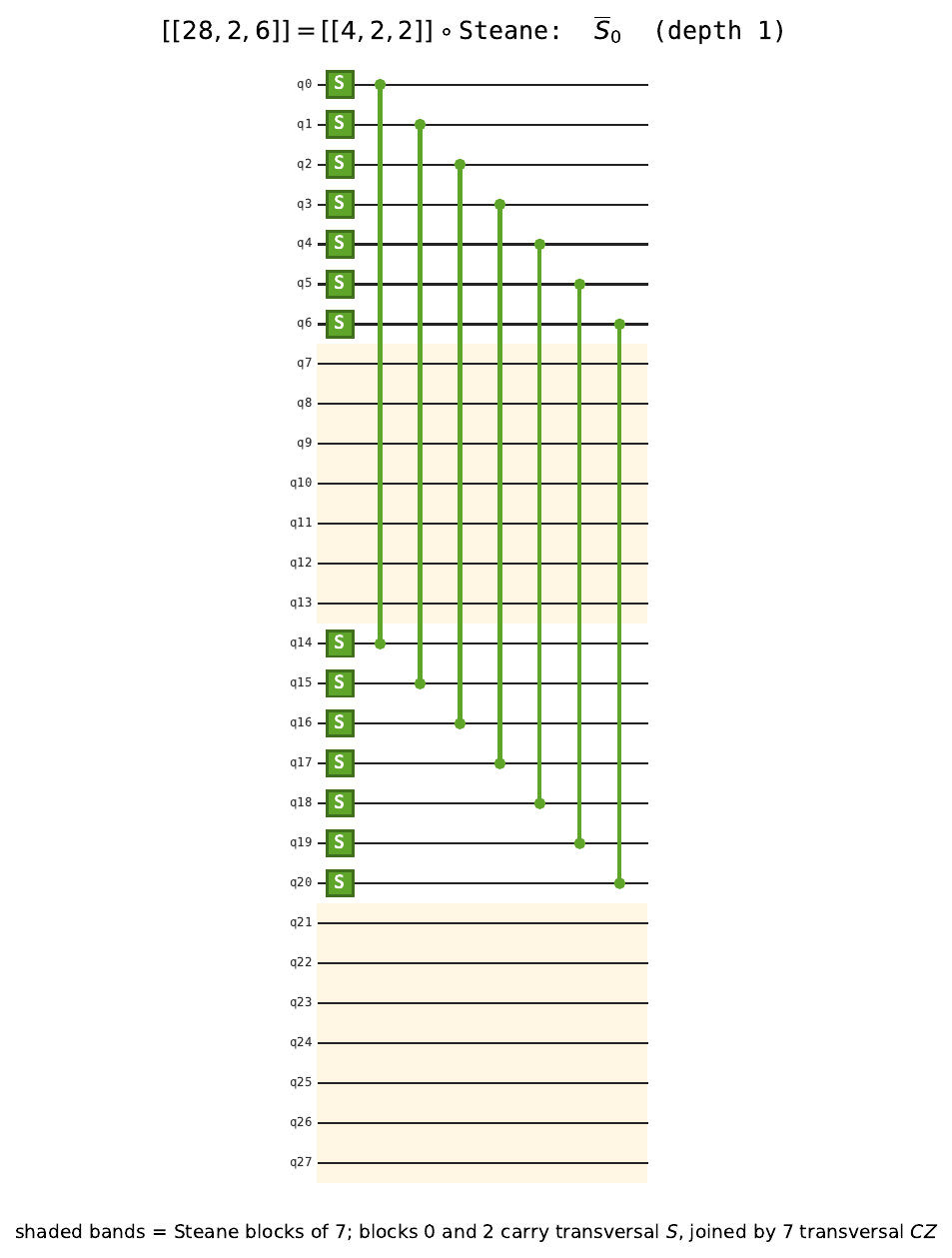}
\caption{The lifted $\mathrm{\overline{S}}_0$ on $[[28,2,6]] = [[4,2,2]]\circ\mathrm{Steane}$ as a single depth-1 codespace-preserving layer. Shaded bands are $[[7,1,3]]$ blocks of seven qubits; blocks $0$ and $2$ carry transversal $S$ (the lift of $S_0 S_2$), joined by seven transversal $CZ$ (the lift of $CZ_{0,2}$), exactly as in Fig.~\ref{fig:qlc-c422}'s $\mathrm{\overline{S}}_0$ but with the distance raised from $2$ to $6$.}
\label{fig:qlc-c2826}
\end{figure}

\end{document}